\newif\if@restonecol
\theoremstyle{definition}
\newtheorem{example}{Example}
\newtheorem{definition}{Definition}
\newtheorem{theorem}{Theorem}
\newtheorem{lemme}{Lemma}
\def\asperix.{\texttt{ASPeRiX}}
\def\PLE.{\ensuremath{\Pi}}
\newcommand{\corpspos}[1]{\ensuremath{body^+(#1)}}
\newcommand{\corpsneg}[1]{\ensuremath{body^-(#1)}}
\newcommand{\head}[1]{\ensuremath{head(#1)}}
\def\ATOMES.{\ensuremath{\bf A}}
\def\HERBRANDUNIVERS.{\ensuremath{\mathcal{A}}}
\def\VARIABLES.{\ensuremath{\mathcal{V}}}
\def\CONSTANTS.{\ensuremath{\mathcal{CS}}}
\def\FUNCTIONS.{\ensuremath{\mathcal{FS}}}
\def\PREDICATES.{\ensuremath{\mathcal{PS}}}
\def\TERMS.{\ensuremath{\bf T}}
\def\NN.{\ensuremath{\mathbb{N}}}
\def\SURE.{\ensuremath{W}}
\def\DEFAUTS.{\ensuremath{D}}
\def\THEORIE.{\ensuremath{(\SURE.,\DEFAUTS.)}}
\def\EXTENSION.{\ensuremath{E}}
\def\PREREQUIS.{\ensuremath{\alpha}}
\def\JUSTIFICATION.{\ensuremath{\beta}}
\def\JUSTIFICATIONS.{\ensuremath{\JUSTIFICATION._1\dots\JUSTIFICATION._n}}
\def\CONSEQUENT.{\ensuremath{\gamma}}
\newcommand{\defaut}[3]{\ensuremath{\frac{#1 : #2}{#3}}}
\newcommand{\defautsgenerateurs}[3]{\ensuremath{DG(#1,#2,#3)}}
\def\DEFAUTSGENERATEURS.{\defautsgenerateurs{\SURE.}{\DEFAUTS.}{\EXTENSION.}}
\def\DEFAUT.{\defaut{\PREREQUIS.}{\JUSTIFICATIONS.}{\CONSEQUENT.}}
\newcommand{\lprule}{\ensuremath{c \leftarrow a_1,~\dots,~a_n,~not~b_1,~\dots,~not~b_m.}}
\newcommand{\partialinter}[2]{\ensuremath{\langle #1, #2 \rangle}}
\newcommand{\computation}[1]{\ensuremath{{\langle #1 \rangle}_{i=0}^{\infty}}}
\newcommand{\outtextrule}[2]{\ensuremath{#1 \leftarrow #2.}}
\newcommand{\intextrule}[2]{\ensuremath{(#1 \leftarrow #2.)}}
\newcommand{\outtextfact}[1]{\ensuremath{#1.}}
\newcommand{\intextfact}[1]{\ensuremath{(#1.)}}
\def\BODY.{\ensuremath{body}}
\def\HEAD.{\ensuremath{head}}
\def\lparse.{\texttt{Lparse}}
\def\gringo.{\texttt{Gringo}}
\def\assat.{\texttt{Assat}}
\def\clasp.{\texttt{Clasp}}
\def\clingo.{\texttt{Clingo}}
\def\clingcon.{\texttt{Clingcon}}
\def\iclingo.{\texttt{iClingo}}
\def\smodels.{\texttt{Smodels}}
\def\cmodels.{\texttt{Cmodels}}
\def\pbmodels.{\texttt{Pbmodels}}
\def\gasp.{\texttt{GASP}}
\def\wasp.{\texttt{WASP}}
\def\asperixv.{\texttt{ASPeRiX 0.2.5}}
\def\gringov.{\texttt{Gringo 3.0.5}}
\def\claspv.{\texttt{Clasp 1.3.10}}
\def\dlv.{{\small{\sf{DLV}}}}
\def\lparsev.{\texttt{Lparse 1.1.1}}
\def\smodelsv.{\texttt{Smodels 2.32}}
\def\dlvv.{\texttt{DLV Dec 16 2012}}
\def\gaspv.{\texttt{GASP (june 2009)}}
\def\dlvcomplex.{\texttt{DLV-complex}}
\def\omigav.{\texttt{OMiGA Dec 3 2012}}
\def\omiga.{\texttt{OMiGA}}
\def\IN.{\ensuremath{IN}}
\def\OUT.{\ensuremath{OUT}}
\def\MBT.{\ensuremath{MBT}}
\def\si.{\ensuremath{\leftarrow}}
\def\gammacheck.{\ensuremath{\gamma_{check}}}
\def\gammacho.{\ensuremath{\gamma_{cho}}}
\def\gammapro.{\ensuremath{\gamma_{pro}}}
\def\true.{\mbox{\bf true}}
\def\false.{\mbox{\bf false}}
\def\NULL.{\mbox{\bf NULL}}
\def\subpi.{\ensuremath{\sqsubseteq}}
\def\litterauxSCCCourant.{\ensuremath{litterauxSCCCourant}}
\def\contientInstanceMBT.{\ensuremath{contientInstanceMBT}}
\def\SCCSansInstanceMBT.{\ensuremath{SCCSansInstanceMBT}}
\def\solve.{\ensuremath{solve}}
\def\instantiateRule.{\ensuremath{instantiateRule}}
\begin{document}

\bibliographystyle{acmtrans}

\submitted{24 March 2014}
\revised{24 November 2014}
\accepted{25 February 2015}

\title[\asperix.]{\asperix., a First Order Forward Chaining Approach for Answer Set Computing \footnote{This work was supported by ANR (National Research Agency), project ASPIQ under the reference ANR-12-BS02-0003.}}
\author[Claire Lef\`{e}vre, Christopher B\'{e}atrix, Igor St\'{e}phan, Laurent Garcia]{Claire Lef{\`e}vre, Christopher B\'{e}atrix, Igor St\'{e}phan, Laurent Garcia\\ LERIA, University of Angers, \\2 Boulevard Lavoisier, \\49045 Angers Cedex 01, France\\ Email: \{claire,beatrix,stephan,garcia\}@info.univ-angers.fr}
\maketitle

\begin{abstract}
The natural way to use Answer Set Programming (ASP) to represent knowledge in Artificial Intelligence 
or to solve a combinatorial problem is to elaborate a first order logic program with default negation. 
In a preliminary step this program with variables is translated in an equivalent propositional one by a  first tool: the grounder. 
Then, the propositional program is given to a second tool: the solver. 
This last one computes (if they exist) one or many answer sets (stable models) of the program, each answer set 
encoding one solution of the initial problem.  
Until today, almost all ASP systems apply this two steps computation.

In this article, the project \asperix. is presented as a first order forward chaining approach for Answer Set Computing.
This project was amongst the first to introduce an approach of answer set computing that escapes the 
preliminary phase of rule instantiation by integrating it in the search process. 
The methodology applies a forward chaining of first order rules that are grounded on the fly by 
means of previously produced atoms.
Theoretical foundations of the approach are presented,
the main algorithms of the  ASP solver \asperix. are detailed
and some experiments and comparisons with existing systems are provided.

\end{abstract}

\noindent \emph{KEYWORDS}: Answer Set Programming, solver implementation, grounding on the fly, first order, forward chaining.\\
\vbox{\hrule width \hsize}

\section{Introduction}
\label{sec:intro}
Answer Set Programming (ASP) is a very convenient paradigm to
represent knowledge in Artificial Intelligence (AI) and to encode
combinatorial problems~\cite{baral03,niemela99}. It has its roots in
nonmonotonic reasoning and logic programming and has led to a lot of
works since the seminal paper~\cite{gellif88b}. 
Beyond its ability to formalize various problems from AI or to encode combinatorial problems, ASP provides also an interesting way to practically solve such problems since some efficient
solvers are available.
In few words, if someone wants to use ASP to solve a problem, he has to write a logic
program in term of rules in a purely declarative manner in such a way that the answer sets (initially called
stable models in \cite{gellif88b}) of the program represent the solutions of his original problem.

\paragraph{\bf Illustration of ASP formalism}
Let us take two typical examples for which ASP
is suitable: the first example is devoted to knowledge representation in Artificial
Intelligence and the second one is a combinatorial problem.

\paragraph{KR problem}
This first example deals with default reasoning on  incomplete information.
It consists in describing knowledge
about birds.

\(
\begin{array}{l}
\outtextfact{bird(titi)}\\
\outtextfact{ostrich(lola)}\\
\outtextrule{bird(X)}{ostrich(X)}\\
\outtextrule{fly(X)}{bird(X), not~ostrich(X)}\\
\outtextrule{non\_fly(X)}{ostrich(X)}
\end{array}
\)

The meaning of the two first rules is that we have two objects:
\emph{titi} which is a bird and \emph{lola} which is an ostrich.
The meaning of the other rules is that an ostrich is a bird, a bird
which is not an ostrich flies and an ostrich does not fly.
Here, we are interested in deducing some properties about titi
and lola. Intuitively, we want that titi flies, lola is a bird and lola does not
fly. Concerning the information that lola does not fly, let us notice that it
is obtained by applying the last rule since lola is an ostrich and, then,
 the next to last rule cannot be applied in presence of
ostrich lola due to the part \emph{not} of this rule, called \emph{default negation}. 
Here, there is only one
answer set which contains all the deduced pieces of information: 
$\{bird(titi),  fly(titi), ostrich(lola), bird(lola), non\_fly(lola)\}$.


\paragraph{CSP problem}
The second example deals with the representation of a combinatorial problem:
possibles worlds are represented by nonmonotonic ``guess" rules 
and choice between these worlds is expressed by constraints.
The problem is then to find (at least) one solution corresponding to
a world verifying the constraints. This example is about graph
2-coloring.

\(
\begin{array}{l}
\outtextfact{vertex(1)}\\
\outtextfact{vertex(2)}\\
\outtextfact{edge(1,2)}\\
\outtextrule{red(X)}{vertex(X), not~blue(X)}\\
\outtextrule{blue(X)}{vertex(X), not~red(X)}\\
\outtextrule{}{red(X), red(Y), edge(X,Y)}\\
\outtextrule{}{blue(X), blue(Y), edge(X,Y)}
\end{array}
\)

This represents a graph with two vertices and
an edge between them (three first rules).
The two following rules are guess rules.
The fourth (resp. fifth) rule means that a vertex
which is not colored in blue (resp. red) has to
be colored in red (resp. blue).
The two last rules are constraints.
They mean that two adjacent vertices
can not have the same color. Here, we want to find
how the two vertices should be colored (knowing
that two colors are available). Intuitively, we have
two solutions: one with vertex 1 colored in blue
and vertex 2 colored in red and the other one with
vertex 1 colored in red and vertex 2 colored in
blue. This corresponds to the two answer sets of the
program: $\{vertex(1), vertex(2), edge(1,2), blue(1), red(2)\}$ and $\{vertex(1), vertex(2), edge(1,2), red(1), blue(2)\}$.
 However, let us note that, in this kind of problem,
we are often
interested in finding one solution rather than finding
all the possible solutions (and the determination of
only one answer set is enough).


As regards the form of the rules, we can notice that a program usually contains
different kind of rules.
The simplest ones are facts as \intextfact{bird(titi)} or \intextfact{vertex(1)}
representing data of the particular problem.
Some ones are about background knowledge as \intextrule{bird(X)}{ostrich(X)}.
Some others can be nonmonotonic as \intextrule{fly(X)}{bird(X), not~ostrich(X)}
for reasoning with incomplete knowledge.
In other cases, especially for combinatorial problems, nonmonotonic rules can be used to encode
alternative potential solutions of a problem as
\intextrule{red(X)}{vertex(X), not~blue(X)} and \intextrule{blue(X)}{vertex(X), not~red(X)}
expressing the two exclusive possibilities to color a vertex in a graph. 
Last, special headless rules are used to represent constraints 
of the problem to solve as \intextrule{}{red(X), red(Y), edge(X,Y)},
here, in order to not color with red two vertices linked by an edge.

With the examples above we can point out that knowledge representation in ASP 
is done by means of \emph{first order} rules.
But, from a theoretical point of view, answer set definition is  given for propositional programs
and the answer sets of a first order program are those of its ground 
instantiation with respect to its Herbrand universe (i.e. without variables). The first order
program has to be seen as an intensional version of the grounded propositional corresponding
program.

\vspace{.3cm}
\paragraph{\bf ASP systems}
Concerning the ASP sytems, their main goal is how to compute answer sets in an
efficient way. Let us recall some of their main features.
Until today, almost all systems available  to compute the
answer sets of a program follow the architecture described in
Fig.~\ref{fig:archi}.

\begin{figure}[h]
  \centering
  \includegraphics[height=4cm]{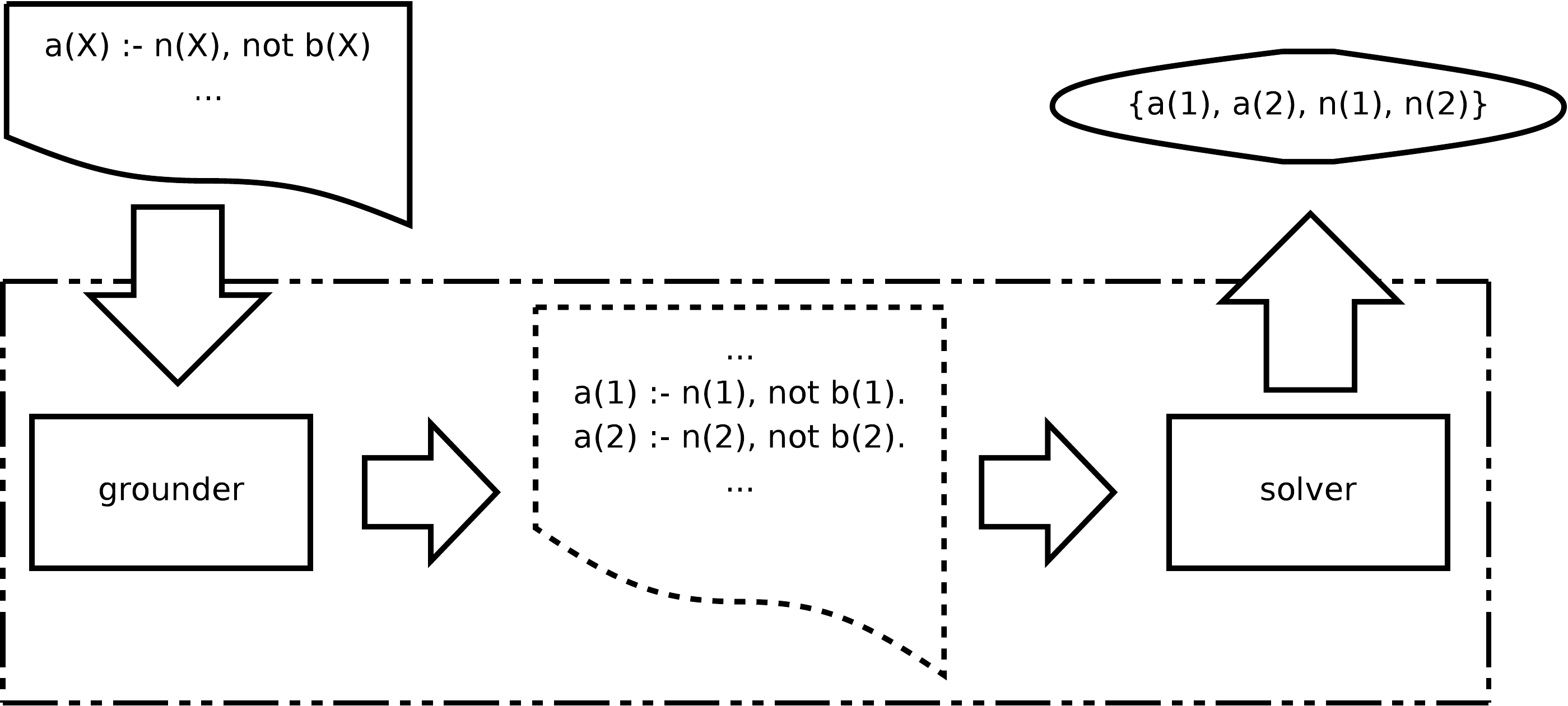}
  \caption{Architecture of answer set computation}
  \label{fig:archi}
\end{figure}

An ASP system begins its work by an instantiation 
phase in order to obtain a propositional program (and, as said above, the answer sets of
the first order program will be those of its ground instantiation). After this first \emph{grounding}
phase realized by a \emph{grounder} the solver starts the real phase of answer set computation 
by dealing with a finite, but sometimes huge, propositional program.
The main goal of each grounding system is to generate all propositional 
rules that can be relevant for a solver and only these ones,
while preserving answer sets of the original program.  
Current intelligent grounders simplify rules as much as possible.
Simplifications can lead to compute the unique answer set of some programs (for instance, programs that does not contain default negation) but it is no longer possible once the problem is combinatorial.
Anyway,
the grounding phase is firstly and fully processed before calling the solver.

For the grounder box we can cite
\lparse.~\cite{lparse} and \gringo.~\cite{gringo}, and for the
solver box \smodels.~\cite{smodels} and \clasp.~\cite{clasp}.
A particular family of solvers are
\assat.~\cite{assat}, \cmodels.~\cite{cmodels} and
\pbmodels.~\cite{pbmodels}, since they transform the answer set
computation problem into a (pseudo) boolean model computation problem
and use a (pseudo) SAT solver as an internal black box.  
In the system \dlv.~\cite{dlv}, symbolized in
Fig.~\ref{fig:archi} by the dash-line rectangle, the grounder
(\cite{cali08} describes a parallel version) is incorporated as an internal function. 
In the same way, \wasp.~\cite{wasp1} uses the \dlv. grounder~\cite{dlvgrounder}.

\vspace{.3cm}
\paragraph{\bf Grounding}
The main drawback of the preliminary grounding phase is that it may lead to a lot of
useless work as illustrated in the following examples.

The first examples illustrate the fact that the separation between the instantiation
phase and the computation phase can prevent the (efficient) use of  information relevant to the
computation.

\begin{example}
  \label{ex:exInut}
Let $P_{\ref{ex:exInut}a}$ be the following ASP program:
  \[
  P_{\ref{ex:exInut}a} = \left\{
    \begin{array}{lp{5mm}l}
      \outtextrule{a}{not~b},  \\
      \outtextrule{b}{not~a},  \\
      \outtextrule{}{a}, \\
      \outtextfact{p(0)},\\
      \outtextrule{p(X+1)}{a,~ p(X)}
    \end{array}
  \right\} 
  \]
Grounding of $P_{\ref{ex:exInut}a}$ is infinite  (if an upper bound for integers is not fixed) while it has a unique (and finite) answer set $\{b,~p(0)\}$.

Let $P_{\ref{ex:exInut}b}$ be the following ASP program:
  \[
  P_{\ref{ex:exInut}b} = \left\{
    \begin{array}{lp{5mm}l}
      \multicolumn{3}{l}{\outtextfact{p(1)},  \outtextfact{p(2)}, \dots,~\outtextfact{p(N)},}\\
      \outtextrule{a}{not~b},   && \outtextrule{aa(X,Y)}{pa(X),~pa(Y),~not~bb(X,Y)},\\
      \outtextrule{b}{not~a},   && \outtextrule{bb(X,Y)}{pb(X),~pb(Y),~not~cc(X,Y)},\\
      \outtextrule{pa(X)}{a,~ p(X)},&& \outtextrule{cc(X,Y)}{aa(X,Y),~ X<Y},\\
      \outtextrule{pb(X)}{b,~ p(X)}, && \outtextrule{}{a}
    \end{array}
  \right\} 
  \]

From the program $P_{\ref{ex:exInut}b}$, current grounders
generate roughly $2.5\times N^2$ rules.

\end{example}

In both programs, because of the constraint \intextrule{}{a} that eliminates from the
possible solutions every atom set containing $a$, it is easy to see
that rules \intextrule{p(X+1)}{a,~ p(X)} for $P_{\ref{ex:exInut}a}$ and \intextrule{pa(X)}{a,~ p(X)} for $P_{\ref{ex:exInut}b}$ are useless since they can never
contribute to generate an answer set of the corresponding program.
In  \(P_{\ref{ex:exInut}a}\) these useless rules are infinite \intextrule{p(X+1)}{a, p(X)} while 
they are ``only'' large in \(P_{\ref{ex:exInut}b}\): 
 $N$ rules with positive body containing $a$, like \intextrule{pa(1)}{a, p(1)}, and
then, the $N^2$ rules with $pa(X)$ in their positive body are useless
too.
In defense of the actual grounders, their inability to eliminate
these particular rules is not surprising since the reason justifying
this elimination is the consequence of a reasoning taking into account
the semantics of ASP. 
Thus, if we want to limit as much as possible the
number of rules and atoms to deal with, we have not to separate
grounding and answer set computing.

Example \(P_{\ref{ex:exInut}a}\) is a typical situation for planning problems where step $i+1$ must
be generated only if the goal is not reached at step $i$. Such situations 
are not tractable by grounders.  That is the reason why the number of steps needed
to reach the goal (or at least the maximum number of allowed steps)
is given as input of planning problems (in ASP competition for example). 
Yet it is rather counterintuitive having to know the step number to solve the problem before solving.

The next example illustrates that the grounding phase generates too much information regarding
the computation of one answer set.

\begin{example}
  \label{ex:3color}

  Let $P_{\ref{ex:3color}}$ be the program, as given in~\cite{niemela99}, encoding a 3-coloring
  problem on a $N$ vertices graph
  organized as a bicycle wheel (see below).
  $v$ stands for \emph{vertex}, $e$ for \emph{edge},
  $c$ for \emph{color}, $col$ for \emph{colored by}, $ncol$ for
  \emph{not colored by}.

  \(
  P_{\ref{ex:3color}} = \left\{
    \begin{array}{l}
      \outtextfact{v(1)}, \dots, \outtextfact{v(N)},~~\outtextfact{c(red)},~\outtextfact{c(blue)},~    \outtextfact{c(green)},\\
      \outtextfact{e(1,2)},  \dots, \outtextfact{e(1,N)},\\
      \outtextfact{e(2,3)}, \outtextfact{e(3,4)}, \dots,  \outtextfact{e(N,2)},\\

      \outtextrule{col(V,C)}{v(V),~ c(C),~not~ ncol(V,C)},\\
      \outtextrule{ncol(V,C)}{col(V,D),~c(C),~ C \neq D},\\
      \outtextrule{}{e(V,U),~ col(V,C),~col(U , C)}
  \end{array}
  \right\}
\)
\begin{minipage}{10mm}
 \includegraphics[width=1cm]{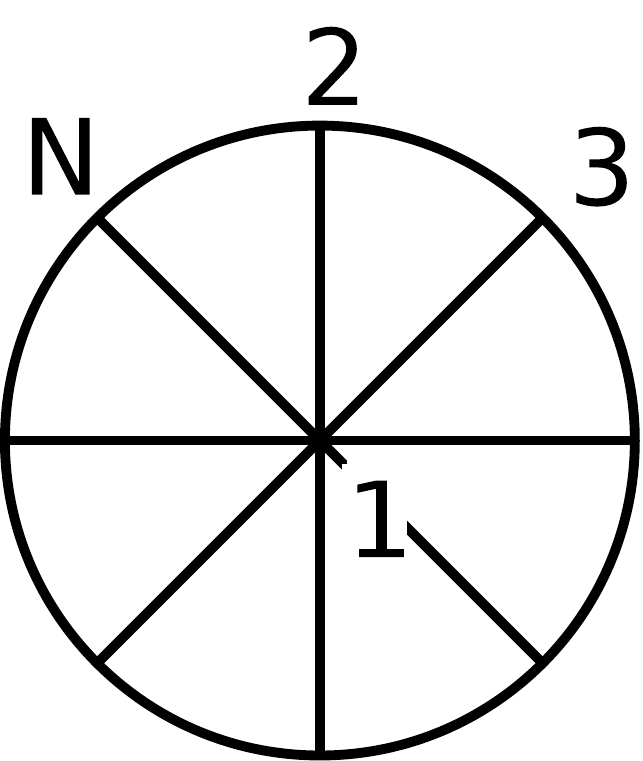}
  \end{minipage}.

From $P_{\ref{ex:3color}}$, current grounders
generate about $18 N$ rules.
If $N$ is even then $P_{\ref{ex:3color}}$ has no answer set and if $N$ is odd then it has 6 answer sets. 
\end{example}

Suppose that $P_{\ref{ex:3color}}$ has an answer set in which there is
$col(1, red)$.  Obviously, all the $N-1$ constraints like 
\intextrule{}{e(1, U),~ col(1, red),~col(U, red)} for all  \(U \in \{2,\dots,N\}\)
are necessary because they have to be checked. But, all the other
constraints like 
\intextrule{}{e(1, U),~ col(1, blue),~col(U, blue)},
and
\intextrule{}{e(1, U),~ col(1, green),~col(U, green)}
for all \( U \in \{2,\dots,N\}\)
can be considered as useless since vertex 1 is not colored by $blue$ or $green$.  
However, all these $2N-2$ constraints have been generated. So, the
time consumed by this task is clearly a lost time and the memory space
used by these data could have been saved.  Thus, if we are searching for
a single answer set, a lot of work would be done for nothing since the
grounded program contains the enumeration of all solutions when only
one is searched.

The last exemple shows that when the number of solutions is very important ASP
solvers have more difficulty to find one solution due to the grounding phase
generating a lot of information concerning all the solutions.

\begin{example}
\label{ex:hamcirccompgraph}

Let $P_{\ref{ex:hamcirccompgraph}}$ be the program, inspired from one given
in~\cite{niemela99}, encoding the
Hamiltonian cycle problem in a $N$ vertices complete oriented graph.
$v$ stands for \emph{vertex}, $a$ for \emph{arc},
$hc$ for \emph{in Hamiltonian cycle}, $nhc$ for \emph{not in  Hamiltonian cycle},
$s$ for \emph{start} and $r$ for \emph{reached}.
\[
P_{\ref{ex:hamcirccompgraph}}=\left\{
  \begin{array}{l}
    \outtextfact{s(1)},~~~~ \outtextfact{v(1)}, \dots, \outtextfact{v(N)},\hspace{10mm}
    \outtextrule{a(X,Y)}{v(X),~v(Y)},\\
    \outtextrule{hc(X,Y)}{s(X),~a(X,Y),~not~nhc(X,Y)},\\
    \outtextrule{hc(X,Y)}{r(X),~a(X,Y),~not~nhc(X,Y)},\\
    \outtextrule{nhc(X,Y)}{hc(X,Z),~a(X,Y),~ Y \neq Z},\\
    \outtextrule{nhc(X,Y)}{hc(Z,Y),~a(X,Y),~ X \neq Z},\\
    \outtextrule{r(Y)}{hc(X,Y)},\\
    \outtextrule{}{v(X),~not~r(X)}\\
  \end{array}
\right\}
\]
This program has $(N-1)!$ answer sets.
Whatever the number of desired solutions, current grounders generate about $2N^2$ rules with $hc$ predicate as head
and about $2N^3$ rules with $nhc$ predicate as head.
Thus, even if we restrict
our attention to the computation of one answer set,
all the ASP solvers preceded by a grounding phase
consume a huge amount of time when the graph has a few hundred vertices. 
\end{example}

This previous example illustrates another strange phenomenon. 
Sometimes, solving a trivial problem, as finding one Hamiltonian cycle in a
complete graph, is impossible for ASP systems. 
This is very counterintuitive since, in whole generality, in combinatorial 
problem solving the more solutions the problem has, the easier it is to find one of them. 
Again, the bottleneck for ASP systems seems to come from the huge number of rules
and atoms that are generated in first, delaying  and making the resolution more
difficult than it should be. 

Beyond these particular examples, the point to stress is that
grounders generate in extension all the search space (for all
potential solutions) that they give then to the solver. But, this
 is clearly not the approach of usual search algorithms.  A classical
coloring algorithm does not firstly enumerate, in extension, all
possible colorations for every vertex in the graph.  A finite domain solver
makes choices by instantiating some variables, propagates the
consequences of these choices, checks the constraints and by
backtracking explores its search space. Following this strategy it
instantiates and desinstantiates variables describing the problem to
solve all along its search process.  But, it does not build, a priori
and explicitly, all the possible tuples of variables and constraints
representing the problem to solve. That is why we think that if we want
to use ASP to solve very large problems we have to
realize the grounding process during the search process and not before
it. 

Is is important to notice that few works advocate the grounding of the program during the search of an answer set and not by a preprocessing.
Some aim at solving the grounding bottleneck by combining ASP to constraint programming: \cite{Baselice_Bonatti_Gelfond_ICLP_05} proposes to reduce the memory requirements for a very specific class of programs, i.e. multi-sorted logic programs with cardinality constraints, 
\cite{Balduccini_ASPOCP_09} proposes an algorithm to make cooperate an ASP solver and a Constraint Logic Programming solver in such a way that ASP is viewed as a specification language for constraint satisfaction problems and 
\cite{Ostrowski_Schaub_TPLP_12} describes the \clingcon. system which is a tight cooperation between the ASP solver \clasp. and the Constraint Programming solver \texttt{GeCode}.
The theory solvers (mainly arithmetic solvers) forbid instances that are in conflict with the constraints reducing by this way the size of the grounding image. 
Some others works use a forward chaining of rules that are instantiated as and when required: \gasp. \cite{gasp} and \asperix.~\cite{lefnic09a,lefnic09b}) developed at the same time, and more recently \omiga. \cite{omiga12}.
They are all based on the notion of computation given in~\cite{Liu2010295}.
\gasp. is implemented in Prolog and Constraint Logic Programming over finite domains.
Each rule instantiation and propagation is realized by building and solving a CSP.
\omiga. is implemented in Java and uses an underlying Rete network for instantiation and propagation.
\asperix., which is the one presented in this article, is implemented in C++.  
Instantiation and propagation are inspired by previous work realized on the \dlv. grounder
 which is based on the semi-naive evaluation technique of~\cite{Ullman}.

Last, concerning a direct handling of first order programs, let us note that there exists some 
works~\cite{gottlob96,eitlusu97,ferleelif07,linzhou07,Truszczynski_CR_12} dealing with first
order nonmonotonic logic programs. These works establish some relations between stable model
semantics and constraints systems or  second order logic or circumscription but they are not really
concerned by the explicit computation of answer sets.

The present paper is an extended version of~\cite{lefnic09a,lefnic09b}.
It details our approach of answer set computation that escapes the preliminary grounding
phase by integrating it in the search process and includes:
\begin{itemize}
\item theoretical foundations of the approach, ``mbt  \asperix. computation'', with complete proofs;
these computations are based on those of~\cite{Liu2010295} 
and include use of constraints and must-be-true propagation in order to guide the search;
\item a detailed description of the main algorithms;
\item experimentations of the resulting system, \asperix., and comparisons with other similar systems and other ``classical''
ASP systems.
Our methodology is particularly well suited for: 
\begin{itemize}
\item solving easy problems with a large grounding,
\item finding only one answer set for a program whose search space is large and proportional to the desired number
of  solutions,
\item solving problems for which pre-grounding is impossible because domains are infinite or open, or because some
pieces of knowledge come from outside (distributed systems for example).
\end{itemize}
\end{itemize}

The paper is organized as follows.
In Section~\ref{sec:backgrounds} we recall the theoretical backgrounds about ASP necessary to the
understanding of our work.
In Section~\ref{sec:asperix} we present our first order rule oriented approach of answer set computation and its implementation in the solver \asperix..
In Section~\ref{sec:experiments} experimental results are presented. 
We conclude in Section~\ref{sec:conclusion} by citing some new perspectives for ASP as a result of our innovative approach. Proofs of theorems are reported in~\ref{sec:proofs}.

\section{Theoretical Background}
\label{sec:backgrounds}
In this section, we give the main backgrounds of ASP framework useful to the understanding of this article.

Set \VARIABLES. denotes the infinite countable set of \emph{variables}, set \FUNCTIONS. denotes the set of \emph{function symbols}, set \CONSTANTS. denotes the set of \emph{constant symbols} and set \PREDICATES. denotes the set of \emph{predicate symbols}.
It is assumed that the sets \VARIABLES., \CONSTANTS., \FUNCTIONS. and  \PREDICATES. are disjoint and that the
set \CONSTANTS. is not empty.
Function $ar$ denotes the arity function from $\FUNCTIONS.$ to $\NN.^*$ and from $\PREDICATES.$ to $\NN.$  which associates to each function or predicate symbol its arity.
Set \TERMS. denotes the set of \emph{terms} defined by induction as follows: 
\begin{itemize}
\item if $v\in \VARIABLES.$ then $v\in \TERMS.$,
\item if $c\in \CONSTANTS.$ then $c \in \TERMS.$,
\item if $f\in \FUNCTIONS.$ with $ar(f)=n>0$ and $t_1,\dots,t_n \in \TERMS.$ then $f(t_1,\dots,t_n) \in \TERMS.$.
\end{itemize}
A \emph{ground term} is a term built over only the two last items of the previous definition.
The \emph{Herbrand universe} is the set of all ground terms.
Set \ATOMES. denotes the set of \emph{atoms} defined as follows:
\begin{itemize}
\item if $a\in \PREDICATES.$ with $ar(a)=0$ then $a \in \ATOMES.$,
\item if $p\in \PREDICATES.$ with $ar(p)=n>0$ and $t_1,\dots,t_n \in \TERMS.$ then $p(t_1,\dots,t_n) \in \ATOMES.$.
\end{itemize}
A \emph{ground atom} is an atom built over only ground terms.
The \emph{Herbrand base} denoted \HERBRANDUNIVERS. is the set of all ground atoms.

A \emph{normal logic program} (or simply \emph{program}) is a set of \emph{rules} like 
\begin{equation}
  \label{eq:lprule}
  \lprule\ ~~n\ge 0, m\ge 0
\end{equation}
where \(c, a_1,\dots,a_n, b_1,\dots,b_m \) 
are atoms.

The intuitive meaning of such a rule is: "if all the $a_i$'s are true and 
it may be assumed that all the $b_j$'s are false then one can
conclude that $c$ is true". 
Symbol $not$ denotes the \emph{default negation}.
A rule with no default negation is a \emph{definite rule} otherwise its is a \emph{nonmonotonic rule}.
A program with only definite rules is a \emph{definite logic program}.
A program is a \emph{propositional program} if all the predicate symbols are of arity 0. 

For each program $P$, we consider that 
the set $\CONSTANTS.$ (resp.  $\FUNCTIONS.$ and $\PREDICATES.$) consists of all constant (resp. function and predicate) symbols appearing in $P$.
These sets determine the set of ground terms and the set of ground atoms of the program.
A \emph{substitution} for a rule $r \in P$ is a mapping from the set of variables from $r$ to the set of ground terms of $P$.
A ground rule $r'$  is a \emph{ground instance} of a rule $r$ if there is a substitution $\theta$ for $r$ such that $r'=\theta(r)$, the rule obtained by substituting every variable in $r$ by the corresponding 
ground term in $\theta$.
The program $P$ (with variables) has to be seen as an intensional version of the 
program \(ground(P)\) defined as follows: given a rule $r$,
$ground(r)$ is the set of all ground instances of $r$
 and then,
\(ground(P)=\bigcup_{r \in P}ground(r)\).
Program $ground(P)$ may be considered as a propositional program.
Let us note that the use of function symbols leads to an infinite Herbrand universe, this point will be discussed in Section~\ref{subsec:discussion}.

\newpage
\begin{example}
\label{ex:ground_program}
 The program 

\(
      P_{\ref{ex:ground_program}}=\left\{
        \begin{array}{l}
          \outtextfact{n(1)}, \;  \outtextfact{n(2)}, \\
          \outtextrule{a(X)}{n(X),~not~b(X)},\\
          \outtextrule{b(X)}{n(X),~not~a(X)}\\
        \end{array}
      \right\}
\)

is a shorthand for the program

\(
      ground(P_{\ref{ex:ground_program}})=\left\{
        \begin{array}{l}
          \outtextfact{n(1)}, \;  \outtextfact{n(2)}, \\
          \outtextrule{a(1)}{n(1),~not~b(1)},\\
          \outtextrule{b(1)}{n(1),~not~a(1)},\\
          \outtextrule{a(2)}{n(2),~not~b(2)},\\
          \outtextrule{b(2)}{n(2),~not~a(2)}\\
        \end{array}
      \right\}
      \)

\end{example}

For a rule $r$
(or by extension for a rule set), we define:
\begin{itemize}
\item  \(\HEAD.(r)=c\) its \emph{head},
\item \(\corpspos{r}=\{a_1,\dots,a_n\}\) its \emph{positive body} and
\item \(\corpsneg{r}=\{b_1,\dots,b_m\}\) its \emph{negative body}.
\end{itemize}

The immediate consequence operator for a definite logic program $P$ is $T_P: 2^{\HERBRANDUNIVERS.} \rightarrow 2^{\HERBRANDUNIVERS.}$ such that $T_P(X)=\{\head{r} \mid r \in P,~ \corpspos{r} \subseteq X \}$.
The \emph{least Herbrand model} of $P$, denoted $Cn(P)$, is the smallest set of atoms closed under $P$, i.e., the smallest set $X$ such that $T_P(X) \subseteq X$.
It can be computed as the least fix-point of the consequence operator $T_P$.

The \emph{reduct} \(P^X\) of a normal logic program $P$ w.r.t. an atom set $X\subseteq \HERBRANDUNIVERS.$ is the definite logic program defined by:
\[
 P^X=\{\outtextrule{\HEAD.(r)}{\corpspos{r}} \mid r \in P,~ \corpsneg{r} \cap X = \emptyset\} 
\]
and it is the core of the definition of an \emph{answer set}.

\begin{definition}\cite{gellif88b}
  \label{def:sm}
 Let $P$ be a normal logic program and $X$ an atom set.  
$X$ is an answer set of $P$ if and only if \(X=Cn(P^X)\).
\end{definition}

For instance, the propositional program \(\{\outtextrule{a}{not~b}, \; \outtextrule{b}{not~a}\}\) has two answer sets \(\{a\}\) and \(\{b\}\).

\begin{example}
\label{ex:background}

  Taking again the program \(P_{\ref{ex:ground_program}}\),
  \(ground(P_{\ref{ex:ground_program}})\)
  has four answer sets:

\(
\begin{array} {cc}
   \{a(1), a(2), n(1), n(2)\},  
&  \{a(1), b(2), n(1), n(2)\},\\
  \{a(2), b(1), n(1), n(2)\},  
&  \{b(1), b(2), n(1), n(2)\}
\end{array}
\)

that are thus the answer sets of $P_{\ref{ex:ground_program}}$.
\end{example}

There is another definition of an anwer set for a normal logic program based on the notion of \emph{generating rules} which are the rules participating to the construction of the answer set.
These rules are important in our approach because they are exactly the rules fired in the \asperix. computation presented in the next section.

\begin{definition} \cite{nomore06}
\label{def:Generatingrules}
Let $P$ be a normal logic program and $X$ be an atom set.  
$GR_P(X)$, the set of \emph{generating rules} of $P$, is defined as 
$GR_P(X) = \{r \in P \mid \corpspos{r} \subseteq X \mbox{ and }  \BODY.^-(r) \cap X = \emptyset\}$.
\end{definition}

\begin{definition} \cite{nomore06}
\label{def:Groundedrules}
Let $R$ be a set of rules. $R$ is $grounded$ if there exists an enumeration
${\langle r_i \rangle}_{i \in [1..n]}$ of the rules of $R$ such that $ \forall
i \in [1..n], \BODY.^+(r_i ) \subseteq \HEAD.(\{r_j~|~j < i\})$.
\end{definition}

The next theorem is inspired by \cite{nomore06}. In \cite{nomore06}, X is an answer set of a program $P$
if and only if $X = Cn({GR_P(X)}^\emptyset)$. It can be reformulated by:

\begin{theorem} \cite{nomore06}
\label{the:AnswersetGR}
Let $P$ be a normal logic program and $X$ be an atom set.  
Then,  $X$ is an answer set of $P$ if and only if $X =  \HEAD.(GR_P(X))$
and $GR_P(X)$ is grounded.

\end{theorem}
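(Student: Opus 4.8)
The plan is to prove the equivalence by connecting the characterization in Theorem~\ref{the:AnswersetGR} to the reformulated statement from \cite{nomore06} that is quoted just before the theorem, namely that $X$ is an answer set of $P$ if and only if $X = Cn({GR_P(X)}^\emptyset)$. I would treat this reformulated statement as the known fact and show that, for a set of rules $R$ with empty negative bodies (which is the situation for $R = {GR_P(X)}^\emptyset$, obtained from $GR_P(X)$ by deleting all negative bodies), one has $Cn(R) = \HEAD.(R)$ if and only if $R$ is grounded. Applying this with $R = {GR_P(X)}^\emptyset$ and noting that $\HEAD.({GR_P(X)}^\emptyset) = \HEAD.(GR_P(X))$ and that ${GR_P(X)}^\emptyset$ is grounded exactly when $GR_P(X)$ is (the reduct only strips negative bodies, leaving positive bodies and heads untouched), would give the theorem. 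A small additional remark is that grounding of $GR_P(X)$ is really a property of the bag/multiset of (head, positive body) pairs, so it transfers verbatim.

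For the core lemma, suppose first that $R$ is grounded with witnessing enumeration $\langle r_i \rangle_{i \in [1..n]}$ satisfying $\BODY.^+(r_i) \subseteq \HEAD.(\{r_j \mid j < i\})$. I would show by induction on $i$ that $\HEAD.(r_i) \in Cn(R)$: since $\BODY.^+(r_i)$ is contained in the heads of earlier rules, which by induction hypothesis all lie in $Cn(R)$, and $Cn(R)$ is closed under $T_R$, the head of $r_i$ is forced into $Cn(R)$. Hence $\HEAD.(R) \subseteq Cn(R)$. For the reverse inclusion $Cn(R) \subseteq \HEAD.(R)$ (which holds for any rule set, grounded or not), observe that $\HEAD.(R)$ is itself closed under $T_R$: if $r \in R$ has $\BODY.^+(r) \subseteq \HEAD.(R)$ then trivially $\HEAD.(r) \in \HEAD.(R)$; since $Cn(R)$ is the least such closed set, $Cn(R) \subseteq \HEAD.(R)$. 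Combining the two inclusions gives $Cn(R) = \HEAD.(R)$.

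Conversely, suppose $Cn(R) = \HEAD.(R)$. I would recall that $Cn(R)$ is computed as the least fixpoint of $T_R$, reached as the union of the iterates $T_R^k(\emptyset)$. This iteration gives a natural stratification of $\HEAD.(R)$ by the first stage at which each atom appears, and one extracts a grounded enumeration of $R$ as follows: since every $r \in R$ has $\HEAD.(r) \in \HEAD.(R) = Cn(R)$, and more importantly every atom in $\BODY.^+(r)$ that is actually derivable belongs to $Cn(R)$, we can argue that each rule of $R$ has its entire positive body inside $Cn(R)$ --- here the subtlety is that a priori a rule in $R$ might have a positive body atom not in $Cn(R)$, but then that rule could never fire and its head would not be forced; one has to check that the hypothesis $Cn(R) = \HEAD.(R)$ together with minimality rules this out, or else restrict attention to the rules that genuinely contribute. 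Ordering the rules of $R$ by the iteration level at which their head is first produced (breaking ties arbitrarily, and placing any rule whose body is not fully derivable --- if such exists --- according to a separate analysis showing it cannot exist under the hypothesis) yields an enumeration in which each rule's positive body, being derived strictly earlier, is contained in the heads of the preceding rules. This establishes groundedness.

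The main obstacle I anticipate is exactly this last point: handling rules in $R = {GR_P(X)}^\emptyset$ whose positive body might not be fully contained in $Cn(R)$, i.e., making precise why the equality $Cn(R) = \HEAD.(R)$ forbids ``dangling'' rules and thus lets every rule be placed in the grounded enumeration. The clean way around it is to note that $GR_P(X)$ is defined relative to the specific set $X$, and when $X$ is an answer set we have $X = Cn({GR_P(X)}^\emptyset)$, so every positive body atom of a generating rule lies in $X = Cn(R)$ by the very definition of $GR_P(X)$ (which requires $\BODY.^+(r) \subseteq X$); this removes the difficulty in the direction we need. For the purely combinatorial lemma stated in isolation one would instead phrase groundedness as a property of the derivable part of $R$, but within the proof of the theorem the link to $X$ makes everything go through smoothly.
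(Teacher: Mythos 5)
Your overall strategy is sound and is in fact more self-contained than the paper's own proof. The paper disposes of the theorem in a few lines: it cites Konczak et al.\ (2006) both for the equivalence ``$X$ is an answer set iff $X = Cn({GR_P(X)}^\emptyset)$'' and for the fact that $GR_P(X)$ is grounded whenever $X$ is an answer set, and it states \emph{without proof} the auxiliary claim that groundedness of $GR_P(X)$ implies $Cn({GR_P(X)}^\emptyset) = \HEAD.(GR_P(X))$. Your first lemma direction (grounded $\Rightarrow$ $Cn(R) = \HEAD.(R)$, by induction along the witnessing enumeration plus the observation that $\HEAD.(R)$ is closed under $T_R$) is precisely a proof of that unproved claim, and your forward direction replaces the citation for groundedness by a direct argument from $X = Cn({GR_P(X)}^\emptyset)$. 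You are also right that the bare converse ``$Cn(R) = \HEAD.(R) \Rightarrow R$ grounded'' is false in general (e.g.\ $R = \{a \leftarrow .\,,\ a \leftarrow b.\}$) and that the definition of $GR_P(X)$, which forces $\BODY.^+(r) \subseteq X = Cn(R)$ for every generating rule, is exactly what rescues the argument.

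The one step that fails as written is the construction of the grounded enumeration: ordering the rules of $R$ by the iteration level at which their \emph{head} is first produced, with arbitrary tie-breaking, does not work. Take $R = \{a \leftarrow .\,,\ b \leftarrow a.\,,\ a \leftarrow b.\}$ (which is ${GR_P(X)}^\emptyset$ for $P=R$ and its answer set $X=\{a,b\}$, so all positive bodies lie in $Cn(R)$): the head of $a \leftarrow b$ first appears at stage $1$ of the $T_R$ iteration while the head of $b \leftarrow a$ first appears at stage $2$, so your ordering may produce $\langle a \leftarrow .\,,\ a \leftarrow b\,,\ b \leftarrow a\rangle$, in which the body $\{b\}$ of the second rule is not covered by the heads of its predecessors. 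The phrase ``each rule's positive body, being derived strictly earlier'' is the culprit: the body of a rule need not be derived before its head when a \emph{different} rule with the same head fires earlier. (The set $R$ above is grounded; it is your construction, not the statement, that misses the witnessing enumeration.) The fix is immediate: order the rules by the least $k$ with $\BODY.^+(r) \subseteq T_R^k(\emptyset)$, which exists since $\BODY.^+(r) \subseteq X = Cn(R) = \bigcup_k T_R^k(\emptyset)$ and bodies are finite; then every atom of $\BODY.^+(r)$ is the head of some rule whose own level is strictly smaller, so the enumeration is grounded. With that correction, and your remark that groundedness and $Cn$ depend only on heads and positive bodies (hence transfer between $GR_P(X)$ and ${GR_P(X)}^\emptyset$), the proof goes through.
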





Special headless rules, 
called \emph{constraints}, are admitted and considered equivalent to rules
like \intextrule{bug}{\dots,~ not~ bug} where $bug$ is a new symbol
appearing nowhere else. For instance, the program \(\{\outtextrule{a}{not~b},
\; \outtextrule{b}{not~a}, \outtextrule{}{a}\}\) has one, and only one, answer set
\(\{b\}\) because constraint \intextrule{}{a} prevents $a$ to be in an answer set.

When dealing with default negation, we call a \emph{literal} an atom, $a$, or the negation
of an atom, $not~a$. A literal $a$ is said to be \emph{positive}, and $not~a$ is said to be
\emph{negative}.
The corresponding atom $a$ of a literal $l$ is denoted by $at(l)$.
For a literal $l$ where $at(l)=a$, 
let us denote $pred(l)$ the function such that $pred(not\; a) = pred(a) = p$ with $p$ the predicate symbol of the atom $a$.

For purposes of knowledge representation, one may have to use conjointly
strong negation (like $\neg a$) and default negation (like $not~a$)
inside a same program. 
This is possible in ASP by means of an
\emph{extended logic program}~\cite{gellif91}
in which rules are built with \emph{classical} literals (i.e. an atom $a$ or its strong negation $\neg a$) instead of atoms only.
Semantics of extended logic programs distinguishes inconsistent answer sets from absence of answer set.
But, if we are not interested in inconsistent answer sets,
the semantics associated to an extended logic program is reducible to answer set
semantics for a normal logic program using constraints by taking into account the following
conventions:
\begin{itemize}
\item every classical literal $\neg x$ is encoded by the atom $nx$,
\item for every atom $x$, the constraint \intextrule{}{x, ~ nx} is added.
\end{itemize}
By this way, only consistent answer sets are kept. In this article, we do
not focus on strong negation and literal will never stand for classical literal.

Let us note that one can also use some particular atoms for (in)equalities and simple arithmetic
calculus on (positive and negative) integers.  
Arithmetic operations are treated as a functional arithmetic and comparison relations are treated as built-in predicates. 

Finally, a program $P$ is said to be \emph{stratified} iff there is a mapping $strat$ from \PREDICATES. to $\NN.$ such that, for each ground rule like (\ref{eq:lprule}), the two following conditions hold:
\begin{itemize}
\item $strat(pred(c)) \geq strat(pred(a_i))$ for all $i \in [1..n]$
\item $strat(pred(c)) > strat(pred(b_j))$ for all $j \in [1..m]$
\end{itemize}

\section{A First Order Forward Chaining Approach for Answer Set Computing}
\label{sec:asperix}
\subsection{\asperix. Computation}
\label{subsec:computation}
In this section, a characterization of answer sets for first-order normal logic programs, based on a concept of \emph{\asperix. computation}, is presented.
This concept is itself based on an abstract notion of \emph{computation}  for ground programs proposed in~\cite{Liu2010295}.
This computation  fundamentally uses a forward chaining of rules. 
It builds incrementally the answer set of the program and does not require the whole set of ground atoms from the beginning of the process.  
So, it is well suited to deal directly with first order rules by instantiating them during the computation.

The only syntactic restriction required by this methodology is that every rule of a program must be \emph{safe}. 
That is, all variables occurring in the head and all variables occurring in the negative body of a rule occur also in its positive body. 
Note that this condition is already required by all standard evaluation procedures.
Moreover, every constraint (i.e. headless rule) is considered given with the particular head \(\bot\) and is also safe.  
For the moment we do not consider function symbols but their use will be discussed in Section~\ref{subsec:discussion}.

An \emph{\asperix. computation} is defined as a process on a computation state based on a \emph{partial interpretation} which is defined as follows.

\begin{definition}
\label{def:partial_interpretation}
A \emph{partial interpretation} for a program $P$ is a pair $\partialinter{IN}{OUT}$ of disjoint atom sets included in the Herbrand base of $P$. 

\end{definition}

Intuitively, all atoms in $IN$ belong to a search answer set and all atoms in $OUT$ do not.

The notion of partial interpretation defines different status for rules.

\begin{definition}
\label{def:rulestatus}

  Let $r$ be a ground rule and $I = \langle IN, OUT\rangle$ be
  a partial interpretation. 
  \begin{itemize}
  \item $r$ is \emph{supported} w.r.t. $I$ when \(\BODY.^+(r) \subseteq IN\),
  \item $r$ is \emph{blocked} w.r.t. $I$ when \(\BODY.^-(r) \cap IN \neq \emptyset\),
  \item $r$ is \emph{unblocked} w.r.t. $I$ when \(\BODY.^-(r) \subseteq OUT\),
  \item $r$ is \emph{applicable} w.r.t. $I$ when $r$ is supported and
    not blocked.\footnote{The negation of blocked, \emph{not blocked},
      is different from \emph{unblocked}.}
  \end{itemize}
\end{definition}

An \asperix. computation is a forward chaining process that instantiates and fires one unique rule at each iteration according to two kinds of inference: 
a monotonic step of \emph{propagation} and a nonmonotonic step of \emph{choice}.
To fire a rule means to add the head of the rule in the set $IN$.

\begin{definition}
\label{def:Delta_Cho_Pro}
Let $P$ be a set of first order rules, $I$ be a partial interpretation and $R$ be a set of ground rules.

\begin{itemize}
\item \(\Delta_{pro}(P,I,R)\) is the set of all supported definite rules and supported unblocked nonmonotonic rules  from  \(ground(P) \setminus R \).
\item \(\Delta_{cho}(P,I,R)\) is the set of all applicable nonmonotonic rules  from \(ground(P) \setminus R \).
\end{itemize}
\end{definition}

It is important to notice that the two sets defined above, like the set $ground(P)$, do not need to be explicitly computed.
It is in accordance with the principal aim of this work that is to avoid their extensive construction. 
When necessary, a first-order rule of $P$ can be selected and grounded with propositional atoms occurring in $IN$ and $OUT$ in order to define a new (not already occurring in $R$) fully ground rule member of $\Delta_{pro}$ or $\Delta_{cho}$.
Because of the safety constraint on rules this full grounding is always possible.
These mechanisms are specified in more details in Subsection~\ref{subsec:gamma}. 
The sets $\Delta_{pro}$ and $\Delta_{cho}$ are used in the following definition of an \asperix. computation.
Specific case of constraints (rules with $\bot$ as head) is treated by adding $\bot$ into \OUT. set.
By this way, if a constraint is fired (violated), $\bot$ should be added into \IN.
and thus, $\langle IN, OUT \rangle$ would not be a partial interpretation.

\begin{definition}
\label{def:Asperixcomputation}

 Let $P$ be a first order normal logic program.
  An \emph{\asperix. computation} for $P$ is a sequence ${\langle R_i , I_i \rangle}_{i=0}^{\infty}$
  of ground rule sets $R_i$ and partial interpretations $I_i=\partialinter{IN_i}{OUT_i}$  that satisfies the following conditions:
  \begin{itemize}
  \item $R_0 = \emptyset$ and $I_0 = \partialinter{\emptyset}{\{\bot\}}$,
  \item (Revision) $\forall i \geq 1$, 
	\begin{itemize}
	\item[](Propagation) $R_i = R_{i-1} \cup \{r_i\}$ with $r_i \in \Delta_{pro}(P,I_{i-1}, R_{i-1})$\\
	and  $I_i = \partialinter{IN_{i-1} \cup \{\HEAD.(r_i)\}}{OUT_{i-1}}$
	\item[or](Rule choice) $\Delta_{pro}(P,I_{i-1}, R_{i-1}) = \emptyset$,\\
	$R_i = R_{i-1} \cup \{r_i\}$ with $r_i \in \Delta_{cho}(P,I_{i-1}, R_{i-1})$\\
	and $I_i = \partialinter{IN_{i-1} \cup \{\HEAD.(r_i)\}}{OUT_{i-1}\cup \BODY.^-(r_i)}$
	\item[or](Stability) $R_i = R_{i-1} $ and $I_i = I_{i-1}$,
	\end{itemize}
  \item (Convergence) $\exists i \geq 0,~\Delta_{cho}(P,I_{i}, R_{i}) = \emptyset$.
\end{itemize}

The computation is said to converge to the set $IN_{\infty} = \bigcup_{i=0}^{\infty}IN_i$.

\end{definition}

\begin{example}
\label{ex:grand_exemple}

Let $P_{\ref{ex:grand_exemple}}$ be the following program:
 \begin{center}
      $\begin{Bmatrix}
	  \begin{array}{l}
		\outtextfact{n(1)} \\
		\outtextrule{n(X+1)}{n(X), (X+1)<=2} \\
		\outtextrule{a(X)}{n(X), not\ b(X), not\ b(X+1)} \\
		\outtextrule{b(X)}{n(X), not\ a(X)} \\
		\outtextrule{c(X)}{n(X), not\ b(X+1)}
	  \end{array} 
	\end{Bmatrix}$
 \end{center}


The following sequence is an \asperix. computation for $P_{\ref{ex:grand_exemple}}$:
\[\begin{array}{ll}
I_0 &= \partialinter{\emptyset}{\{\bot\}} \\
\\
r_1 &=  \outtextfact{n(1)} \in \Delta_{pro}(P_{\ref{ex:grand_exemple}},I_0, \emptyset)\\
I_1 &= \partialinter{\{n(1)\}}{\{\bot\}} \\
\\
r_2 &= \outtextrule{n(2)}{n(1)} \in \Delta_{pro}(P_{\ref{ex:grand_exemple}},I_1, \{r_1\})\\
I_2 &= \partialinter{\{n(1),n(2)\}}{\{\bot\}} \\
\\
& \Delta_{pro}(P_{\ref{ex:grand_exemple}},I_2, \{r_1,r_2\}) = \emptyset \\
r_3 &= \outtextrule{a(1)}{n(1), not\ b(1), not\ b(2)} \in \Delta_{cho}(P_{\ref{ex:grand_exemple}},I_2, \{r_1,r_2\})\\
I_3 &= \partialinter{\{n(1),n(2),a(1)\}}{\{\bot,b(1),b(2)\}} \\
\\
r_4 &= \outtextrule{c(1)}{n(1), not\ b(2)} \in \Delta_{pro}(P_{\ref{ex:grand_exemple}},I_3, \{r_1,r_2,r_3\})\\
I_4 &= \partialinter{\{n(1),n(2),a(1),c(1)\}}{\{\bot,b(1),b(2)\}} \\
\\
&\Delta_{pro}(P_{\ref{ex:grand_exemple}},I_4, \{r_1,r_2,r_3,r_4\}) = \emptyset \\
r_5 &= \outtextrule{a(2)}{n(2), not\ b(2), not\ b(3)} \in \Delta_{cho}(P_{\ref{ex:grand_exemple}},I_4, \{r_1,r_2,r_3,r_4\}) \\
I_5 &= \partialinter{\{n(1),n(2),a(1),c(1),a(2)\}}{\{\bot,b(1),b(2),b(3)\}}\\
\end{array}\]
\[\begin{array}{ll}
r_6 &= \outtextrule{c(2)}{n(2), not\ b(3)} \in \Delta_{pro}(P_{\ref{ex:grand_exemple}},I_5, \{r_1,r_2,r_3,r_4,r_5\})\\
I_6 &= \partialinter{\{n(1),n(2),a(1),c(1),a(2),c(2)\}}{\{\bot,b(1),b(2),b(3)\}}\\
\\
&\Delta_{pro}(P_{\ref{ex:grand_exemple}},I_6, \{r_1,r_2,r_3,r_4,r_5,r_6\}) = \emptyset \\
&\Delta_{cho}(P_{\ref{ex:grand_exemple}},I_6, \{r_1,r_2,r_3,r_4,r_5,r_6\}) = \emptyset \\
I_7 & =I_6\\
\end{array}\]

The previous \asperix. computation converges to the set \\ $\{n(1), n(2), a(1), c(1), a(2), c(2)\}$ which is an answer set for $P_{\ref{ex:grand_exemple}}$.

\end{example}

The following theorem establishes a connection between the results of any \asperix. computation and the answer sets of a normal logic program.

\begin{theorem}
\label{the:Aspcomp}

Let $P$ be a normal logic program and $X$ be an atom set.  
Then,  $X$ is an answer set of $P$ if and only if there is an \asperix. computation $\computation{R_i, I_i}$, $I_i=\partialinter{IN_i}{OUT_i}$, for $P$ such that $IN_{\infty} = X$.
\end{theorem}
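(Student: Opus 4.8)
The plan is to prove the two directions separately, using Theorem~\ref{the:AnswersetGR} as the bridge: $X$ is an answer set of $P$ iff $X = \HEAD.(GR_P(X))$ and $GR_P(X)$ is grounded. So for the ``if'' direction I would start from an \asperix. computation converging to $X$ and show that the set of fired rules, $R_\infty = \bigcup_i R_i$, is essentially $GR_P(X)$ and that the enumeration $r_1, r_2, \dots$ witnesses groundedness; for the ``only if'' direction I would start from an answer set $X$, take a grounded enumeration of $GR_P(X)$, and build a computation that fires these rules in a compatible order. Throughout, $ground(P)$ is the reference ground program, so really this is the statement for ground programs (already close to the computations of~\cite{Liu2010295}) lifted by the remark that answer sets of $P$ are those of $ground(P)$; the on-the-fly grounding is harmless because safety guarantees every needed ground instance is reachable.

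\textbf{Proof of the ``if'' direction.} Suppose $\computation{R_i,I_i}$ is an \asperix. computation with $IN_\infty = X$. First I would establish invariants by induction on $i$: (a) $I_i = \partialinter{IN_i}{OUT_i}$ is indeed a partial interpretation (the $IN$/$OUT$ sets stay disjoint — here is where the role of $\bot \in OUT_0$ and the fact that constraints can never be fired without destroying disjointness must be used, so in a genuine computation no constraint is ever in $\Delta_{pro}$ or $\Delta_{cho}$ at a fired step); (b) every $r_i$ has $\HEAD.(r_i) \in X$ and $\corpspos{r_i} \subseteq IN_{i-1} \subseteq X$, and after a choice step the newly added $\corpsneg{r_i}$ atoms land in $OUT_\infty$, which is disjoint from $X$; (c) $R_i \subseteq GR_P(X)$, since supportedness gives $\corpspos{r} \subseteq IN_\infty = X$ and being unblocked/not-blocked together with the $OUT$ bookkeeping gives $\corpsneg{r} \cap X = \emptyset$. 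Then I would show $R_\infty = GR_P(X)$: the inclusion $\subseteq$ is (c); for $\supseteq$, suppose some $r \in GR_P(X)$ is never fired. Since $\corpspos{r}\subseteq X = IN_\infty$, there is a stage after which $r$ is supported; since $\corpsneg{r}\cap X=\emptyset$ and $\bot\notin\corpsneg r$, $r$ is not blocked, and — using safety — all atoms of $\corpsneg r$ eventually enter $OUT$ once enough choices have been made, so $r$ becomes unblocked; hence from some stage on $r \in \Delta_{pro}$ or at least $r\in\Delta_{cho}$ forever, contradicting the Convergence condition $\Delta_{cho}(P,I_i,R_i)=\emptyset$. (This is the step requiring the most care; see below.) Consequently $X = IN_\infty = \HEAD.(R_\infty) = \HEAD.(GR_P(X))$, and the enumeration $\langle r_i\rangle$ shows $GR_P(X)$ is grounded since $\corpspos{r_i}\subseteq IN_{i-1} = \HEAD.(\{r_j : j<i\})$. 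By Theorem~\ref{the:AnswersetGR}, $X$ is an answer set.

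\textbf{Proof of the ``only if'' direction.} Suppose $X$ is an answer set. By Theorem~\ref{the:AnswersetGR}, $X = \HEAD.(GR_P(X))$ and there is an enumeration $r_1,\dots,r_n$ of $GR_P(X)$ with $\corpspos{r_i}\subseteq \HEAD.(\{r_j:j<i\})$. I would define the computation by firing the $r_i$ in this order — but with a small reordering so that at each step a $\Delta_{pro}$ move is taken whenever one is available (to respect the ``$\Delta_{pro}=\emptyset$ before a choice'' clause). Concretely: repeatedly, if some not-yet-fired rule of $GR_P(X)$ is supported and unblocked w.r.t.\ the current $I$ or is a definite supported rule, fire it as a propagation step; otherwise fire any not-yet-fired rule of $GR_P(X)$ that is applicable as a choice step. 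I must check: this never gets stuck with rules remaining (because the grounded enumeration order always makes at least one unfired rule supported, and being in $GR_P(X)$ makes it not blocked w.r.t.\ $I\subseteq\partialinter{X}{\cdot}$, so it is applicable hence in $\Delta_{cho}$ if not in $\Delta_{pro}$); the choices are legitimate (an applicable nonmonotonic rule of $GR_P(X)$ has $\corpsneg r$ disjoint from $X\supseteq IN$, so adding $\corpsneg r$ to $OUT$ keeps disjointness); after all $n$ rules are fired we are at $IN=\HEAD.(GR_P(X))=X$, and $\Delta_{cho}(P,I,R)=\emptyset$ because any further applicable rule $r'$ would have $\corpspos{r'}\subseteq X$ and $\corpsneg{r'}\cap IN=\emptyset$, and one checks $\corpsneg{r'}\cap X=\emptyset$ as well (using $OUT\subseteq$ complement of $X$ at that stage), so $r'\in GR_P(X)$, contradicting that all of $GR_P(X)$ is already in $R$; pad with Stability steps thereafter. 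This gives an \asperix. computation converging to $X$.

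\textbf{Main obstacle.} The delicate point is the ``$\supseteq$'' argument in the ``if'' direction: showing that Convergence forces \emph{every} rule of $GR_P(X)$ to be fired. The subtlety is that a rule $r\in GR_P(X)$ may only become unblocked after the $OUT$ set has grown enough, and $OUT$ only grows through choice steps; one must argue that the $OUT$-atoms needed for $\corpsneg r$ do get inserted — this relies on safety (so $\corpsneg r$ consists of ground atoms over predicates that are ``decided'') and on a well-foundedness/limit argument about the infinite sequence, essentially mirroring the fixpoint argument for computations in~\cite{Liu2010295}. I would isolate this as a lemma: in any \asperix. computation, for every $r\in ground(P)$ with $\corpspos r\subseteq IN_\infty$ and $\corpsneg r\cap IN_\infty=\emptyset$, either $r$ is eventually fired or $\corpsneg r\not\subseteq OUT_\infty$ — and then separately rule out the latter using that $X=IN_\infty$ is an answer set (equivalently, that $\langle IN_\infty,OUT_\infty\rangle$ with $OUT_\infty$ as large as the computation makes it still ``covers'' $\corpsneg r$). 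The remaining checks — disjointness invariants, the constraint/$\bot$ bookkeeping, and groundedness of the enumeration — are routine inductions.
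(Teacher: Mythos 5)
Your overall route is the same as the paper's: both directions pass through Theorem~\ref{the:AnswersetGR}, the ``only if'' direction builds the computation from a grounded enumeration of $GR_P(X)$ reordered so that propagation always precedes choice (the paper isolates exactly this reordering as Lemma~\ref{lem:Asperixenumeration}), and the ``if'' direction identifies the fired rules with the generating rules. Two remarks, one of which is a genuine (if repairable) flaw in your plan.

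First, the step you single out as the ``main obstacle'' is not an obstacle, and the auxiliary lemma you propose for it is both unnecessary and circular. Membership in $\Delta_{cho}$ requires only that a rule be \emph{applicable}, i.e.\ supported and \emph{not blocked} ($\corpsneg{r}\cap IN=\emptyset$); it does not require the rule to be \emph{unblocked} ($\corpsneg{r}\subseteq OUT$). So you never need to argue that the atoms of $\corpsneg{r}$ ``eventually enter $OUT$ once enough choices have been made.'' At the convergence stage $i$, any $r\in ground(P)\setminus R_i$ with $\corpspos{r}\subseteq IN_i$ and $\corpsneg{r}\cap IN_i=\emptyset$ lies in $\Delta_{cho}^i$, so $\Delta_{cho}^i=\emptyset$ yields $GR_P(IN_i)\subseteq R_i$ outright; combined with the easy inclusion $R_i\subseteq GR_P(IN_i)$ and groundedness of the firing order, Theorem~\ref{the:AnswersetGR} applies to $IN_i$ directly. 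This is the one-line argument you already give (``$r\in\Delta_{cho}$ forever, contradicting Convergence'') before retreating to the lemma. The lemma's dichotomy in terms of $OUT_\infty$ is the wrong invariant, and ``ruling out the latter case using that $IN_\infty$ is an answer set'' assumes the conclusion of the very direction you are proving.

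Second, in the ``only if'' direction your greedy construction inspects only rules of $GR_P(X)$ before taking a choice step, but the Rule choice clause of Definition~\ref{def:Asperixcomputation} requires $\Delta_{pro}(P,I_{i-1},R_{i-1})=\emptyset$ over \emph{all} of $ground(P)\setminus R_{i-1}$, not just over the unfired generating rules. You need the observation (the paper's property (*4)) that along this computation $IN_i\subseteq X$ and $OUT_i\cap X=\emptyset$, so any supported unblocked rule of $ground(P)$ automatically belongs to $GR_P(X)$; hence checking $GR_P(X)$ suffices. You make exactly the analogous argument for $\Delta_{cho}$ at the final stage, but omit it at the intermediate choice steps where it is load-bearing.
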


Let us note that in order to respect the revision principle of an \asperix. computation each sequence of partial interpretations must be generated by using the propagation inference based on rules from $\Delta_{pro}$ as long as possible before using the choice based on $\Delta_{cho}$ in order to fire a nonmonotonic rule. 
Then, because of the non determinism of the selection of rules from $\Delta_{cho}$, the natural implementation of this approach leads to a usual search tree where, at each node, one has to decide whether or not to fire a rule chosen in $\Delta_{cho}$.
Persistence of applicability of the nonmonotonic rule chosen to be fired is ensured by adding to $OUT$ all ground atoms from its negative body. 
On the other branch, where the rule is not fired, the translation of its negative body into a new constraint ensures that it becomes impossible to find later an answer set in which this rule is not blocked.

Propagation can be improved by using ``must-be-true''\footnote{The term ``must be true'' is first used in \cite{DLVmbt}.} atoms:
atoms which have to be in the answer set to avoid a contradiction or, in other words,
atoms already determined to be in \IN. but which are not yet be proved to be in.

\begin{example}
\label{ex:must_be_true}
Let \intextrule{\bot}{not\ b} be a constraint whose body contains only one literal ${not\ b}$ with $b \not\in IN \cup OUT$.
In order to have an answer set, $b$ must be in $IN$ so that the constraint is not applicable
but $b$ is not yet proved (it is not the head of a fired rule).
Thus, one can only conclude that $b$ must be true.

\end{example}

Must-be-true atoms can be used during the propagation step in order to reduce the search space.

\begin{example}
\label{ex:must_be_truePropagation}
Let \intextrule{c}{a, b} be a rule with $a \in IN$ and 
$b \not\in IN$ but $b$ has been determined to be a must-be-true atom.
The rule may be fired during the propagation step 
but one can only conclude that the rule head $c$ must be true (because $b$ is not yet proved).

\end{example}

Must-be-true atoms can also be used to reduce the size of $\Delta_{cho}$, the set of nommonotonic rules that can be chosen to be fired.

\begin{example}
\label{ex:must_be_trueChoice}
Let \intextrule{c}{a, not~d} be a rule with $a \in IN$ and 
$d \not\in IN$ but $d$ has been determined to be a must-be-true atom.
The rule may already be considered to be blocked, even if $d$ is not yet proved, and thus may be excluded from $\Delta_{cho}$.

\end{example}

Note that must-be-true atoms are first used to improve propagation and choice but have to be proved later, otherwise the
computation can not lead to an answer set.

Notions of partial interpretation, rule status and \asperix. computation can be modified in order to consider these new elements.

\begin{definition}
\label{def:mbtInterpretation}
 Let $P$ be a logic program. A \emph{mbt partial interpretation} for $P$ is a triplet $\langle IN, MBT, OUT \rangle$ of disjoint atom sets included in the Herbrand base of $P$. 

\end{definition}

\begin{definition}
\label{def:mbtrulestatus}

  Let $r$ be a ground rule and $I = \langle IN, MBT, OUT\rangle$ be
  a mbt partial interpretation. 
  \begin{itemize}
  \item $r$ is \emph{supported} w.r.t. $I$ when \(\BODY.^+(r) \subseteq IN\),
  \item $r$ is \emph{weakly supported} w.r.t. $I$ when \(\BODY.^+(r)  \subseteq (IN \cup MBT)\)
  \item $r$ is \emph{blocked} w.r.t. $I$ when \(\BODY.^-(r) \cap (IN \cup MBT) \neq \emptyset\),
  \item $r$ is \emph{unblocked} w.r.t. $I$ when \(\BODY.^-(r) \subseteq OUT\),
  \item $r$ is \emph{applicable} w.r.t. $I$ when $r$ is supported and not blocked.
  \end{itemize}
\end{definition}

Propagation is extended by Mbt-propagation: if some rule is weakly supported and unblocked w.r.t. mbt partial interpretation $\langle IN, MBT, OUT \rangle$ (but is not supported, i.e., does not belong to $\Delta_{pro}$), then the head of the rule can be added in $MBT$ set.
And $\Delta_{cho}$, the set of rules that can be chosen, is restricted to the rules that are not blocked w.r.t. mbt partial interpretation.

\begin{definition}
\label{def:Delta_cho_pro_mbt}
Let $P$ be a set of first order rules, $I = \langle IN, MBT, OUT \rangle$  be a mbt partial interpretation and $R$ be a set of ground rules.
\begin{itemize}
\item $\Delta_{pro}(P,I, R) =\{r \in  ground(P) \setminus R \mid \BODY.^+(r) \subseteq IN \mbox{ and } \BODY.^-(r) \subseteq OUT\} $
\item $\Delta_{pro\_mbt}(P,I, R) =\{r \in  ground(P) \setminus R \mid \BODY.^+(r) \subseteq IN \cup MBT, \BODY.^+(r) \not\subseteq IN \mbox{ and } \BODY.^-(r) \subseteq OUT\} $
\item $\Delta_{cho\_mbt}(P,I, R) =\{r \in  ground(P) \setminus R \mid \BODY.^+(r) \subseteq IN,  \mbox{ and } \BODY.^-(r) \cap (IN \cup MBT) = \emptyset\} $
\end{itemize}

\end{definition}

A mbt \asperix. computation is an \asperix. computation with this additional kind of propagation and with the possibility to block a rule from $\Delta_{cho\_mbt}$ instead of firing it (``Rule exclusion"). 
To block a rule is to add a constraint with the negative literals of the rule body. 
If there is only one literal in the negative body, this constraint can be expressed by adding an atom in MBT set (see Example \ref{ex:must_be_true}).
These possibilities restrict rule choice in $\Delta_{cho\_mbt}$ and thus forbid some computations: if a rule $r$ is blocked, computation can only converge to an answer set whose generating rules do not contain $r$.
Note that Convergence principle impose that, at the end of a computation, no constraint is applicable
and each atom from MBT set has been proved (i.e., was moved from MBT to IN set).

\begin{definition}
\label{def:mbtComputation}
 Let $P$ be a first order normal logic program.
  A \emph{mbt \asperix. computation} for $P$ is a sequence ${\langle K_i, R_i , I_i \rangle}_{i=0}^{\infty}$
  of ground rule sets $K_i$ and $R_i$ and mbt partial interpretations $I_i=\langle IN_i, MBT_i, OUT_i \rangle$  that satisfies the following conditions:
  \begin{itemize}
  \item $K_0 = \emptyset$, $R_0 = \emptyset$ and $I_0 = \langle \emptyset, \emptyset, \{\bot\} \rangle$,
  \item (Revision) $\forall i \geq 1$, 
	\begin{itemize}
	\item[](Propagation) $K_i = K_{i-1}$,\\
	$R_i = R_{i-1} \cup \{r_i\}$ with $r_i \in \Delta_{pro}(P,I_{i-1}, R_{i-1})$\\
	and  $I_i = \langle IN_{i-1} \cup \{\HEAD.(r_i)\}, MBT_{i-1}  \setminus \{\HEAD.(r_i)\} , OUT_{i-1} \rangle$
	\item[or] (Mbt-propagation) $K_i = K_{i-1}$, $R_i = R_{i-1}$,\\
	and  $I_i = \langle IN_{i-1}, MBT_{i-1} \cup \{\HEAD.(r_i)\}, OUT_{i-1} \rangle$\\
	with $r_i \in \Delta_{pro\_mbt}(P,I_{i-1}, R_{i-1})$
	\item[or](Rule choice) $\Delta_{pro}(P \cup K_{i-1},I_{i-1}, R_{i-1}) = \emptyset$, \\
	$\Delta_{pro\_mbt}(P\cup K_{i-1},I_{i-1}, R_{i-1}) = \emptyset$,\\
	$K_i = K_{i-1}$,\\
	$R_i = R_{i-1} \cup \{r_i\}$ with $r_i \in \Delta_{cho\_mbt}(P,I_{i-1}, R_{i-1})$\\
	and $I_i = \langle IN_{i-1} \cup \{\HEAD.(r_i)\}, MBT_{i-1}  \setminus \{\HEAD.(r_i)\} ,OUT_{i-1}\cup \BODY.^-(r_i) \rangle$
	\item[or]  (Rule exclusion) $\Delta_{pro}(P\cup K_{i-1},I_{i-1}, R_{i-1}) = \emptyset$,\\
	$\Delta_{pro\_mbt}(P\cup K_{i-1},I_{i-1}, R_{i-1}) = \emptyset$,\\
	\begin{itemize}
		\item[]
		$K_i = K_{i-1}$, $R_i = R_{i-1}$\\
		and $I_i = \langle IN_{i-1}, MBT_{i-1}  \cup \BODY.^-(r_i) ,OUT_{i-1} \rangle$ \\
		with $r_i \in \Delta_{cho\_mbt}(P,I_{i-1}, R_{i-1})$ and $|\BODY.^-(r_i)| = 1$\\
		\item[or]
		$K_i = K_{i-1} \cup \{\bot \leftarrow {\cup}_{b \in \BODY.^-(r_i)} not~b.\}$, $R_i = R_{i-1}$ and $I_i = I_{i-1}$ \\
		with $r_i \in \Delta_{cho\_mbt}(P,I_{i-1}, R_{i-1})$ and $|\BODY.^-(r_i)| > 1$
	\end{itemize}
	\item[or] (Stability) 
	$K_i = K_{i-1}$, $R_i = R_{i-1} $ and $I_i = I_{i-1}$,
	\end{itemize}
  \item (Convergence) $\exists i \geq 0, ~\Delta_{cho\_mbt}(P\cup K_i,I_{i}, R_{i}) = \emptyset \mbox{ and } MBT_i = \emptyset$.
\end{itemize}

\end{definition}

Mbt \asperix. computations characterize answer sets of a normal logic program.
Completeness and correctness are established by the following theorem.

\begin{theorem}
\label{the:mbtAsperixcomp}

Let $P$ be a normal logic program and $X$ be an atom set.  
Then,  $X$ is an answer set of $P$ if and only if there is a mbt \asperix. computation ${\langle K_i, R_i, I_i \rangle}_{i=0}^{\infty}$, $I_i = \langle IN_i, MBT_i, OUT_i\rangle$, for $P$ such that $IN_{\infty} = X$.
\end{theorem}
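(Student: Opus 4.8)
The plan is to reduce this theorem to Theorem \ref{the:Aspcomp} rather than redo the whole answer-set characterization from scratch. The key observation is that a mbt \asperix. computation is essentially an ordinary \asperix. computation enriched with two features: (i) the bookkeeping set $MBT$, which tracks atoms that are scheduled to be proved but not yet proved, and (ii) the auxiliary constraint store $K_i$, which accumulates constraints born from the ``Rule exclusion'' steps. Since the Convergence condition forces $MBT_\infty = \emptyset$ and forces no element of $\Delta_{cho\_mbt}(P \cup K_i, I_i, R_i)$ to survive, at the limit the $MBT$ component is irrelevant to $IN_\infty$ and every atom that ever entered $MBT$ eventually moved into $IN$ via a Propagation step. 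So the plan is: first show that from any mbt \asperix. computation one can read off an ordinary \asperix. computation for $P$ converging to the same $IN_\infty$ (soundness), and conversely that any ordinary \asperix. computation can be ``simulated'' by a mbt one (completeness); then invoke Theorem \ref{the:Aspcomp}.

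For the \textbf{soundness direction} ($IN_\infty = X$ in a mbt computation $\Rightarrow$ $X$ is an answer set), I would argue as follows. Let $R_\infty$ be the limit rule set. First, a lemma: throughout any mbt computation, for every $i$, the triple $\langle IN_i, MBT_i, OUT_i\rangle$ is a genuine mbt partial interpretation (the three sets stay pairwise disjoint) — this is an easy induction over the revision steps, checking each of the five cases, using that a rule in $\Delta_{pro}$ has $\BODY.^-(r) \subseteq OUT$ so its head is not in $OUT$, that Mbt-propagation only touches $MBT$ and its rule is weakly supported but not supported, and that $\bot$ never enters $IN$ because no rule has head $\bot$ except constraints, whose negative body meets $OUT \ni \bot$ only if $\bot \in \BODY.^-$, which cannot happen. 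Next, the crucial claim: at the limit, $R_\infty = GR_P(IN_\infty)$ and $GR_P(IN_\infty)$ is grounded, so Theorem \ref{the:AnswersetGR} applies. Groundedness is immediate from the enumeration $\langle r_i\rangle$ in the order they were fired: whenever $r_i$ was fired, its positive body was $\subseteq IN_{i-1}$, and every atom of $IN_{i-1}$ is the head of some $r_j$ with $j<i$ — here one uses that facts/rule heads are the only source of $IN$-atoms and that an atom, once in $IN$, stays, and once it left $MBT$ it did so by being the head of a Propagation step. That $R_\infty \subseteq GR_P(IN_\infty)$ follows because any fired $r_i$ is supported (body$^+ \subseteq IN_{i-1} \subseteq IN_\infty$) and its negative body is disjoint from $IN_\infty$ (for choice rules, because $\BODY.^-(r_i)$ was added to $OUT$, which is disjoint from $IN$ at the limit; for propagation rules, because $\BODY.^-(r_i) \subseteq OUT_{i-1} \subseteq OUT_\infty$). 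For the reverse inclusion $GR_P(IN_\infty) \subseteq R_\infty$: if some $r \in GR_P(IN_\infty) \setminus R_\infty$, then eventually $r$ becomes permanently supported and its negative body is disjoint from $IN_\infty$; one then needs that it is also not blocked w.r.t. the final mbt interpretation, hence belongs to $\Delta_{pro}$ or to $\Delta_{cho\_mbt}$ at the limit — but $\Delta_{cho\_mbt}$ is empty at convergence and Stability can only be reached when $\Delta_{pro} = \Delta_{pro\_mbt} = \emptyset$ too, contradicting that $r$ could be fired. A small subtlety here: one must handle the constraints $K_\infty$, showing that no constraint in $K_\infty$ is ``violated'' by $IN_\infty$ — but a constraint $\bot \leftarrow \bigcup_{b} not\ b$ is violated only if all its $b$'s are out of $IN_\infty$, i.e., the blocked-out rule $r_i$ would be unblocked; one argues this cannot happen because at the moment of Rule exclusion the alternative branches are precisely where such an answer set would live, and we are on the branch that excluded it. (The $|\BODY.^-(r_i)|=1$ sub-case is handled identically via the $MBT$-then-proved mechanism and the fact that $MBT_\infty = \emptyset$.) Finally, to get the ordinary \asperix. computation: delete all Mbt-propagation and Rule-exclusion steps, keep Propagation (forgetting the $MBT$ component) and Rule choice, and observe the resulting sequence satisfies Definition \ref{def:Asperixcomputation} with the same $IN_\infty$ — the only thing to check is that $\Delta_{pro}$ without $MBT$ is non-empty whenever the original one was, which holds because any atom pushed into $MBT$ was eventually the head of a $\Delta_{pro}$ rule, so the monotone Propagation process in the plain setting reaches the same closure.

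For the \textbf{completeness direction} ($X$ an answer set $\Rightarrow$ a mbt computation with $IN_\infty = X$), the simplest plan is: take an ordinary \asperix. computation for $P$ converging to $X$ (which exists by Theorem \ref{the:Aspcomp}), and reinterpret it as a mbt computation that simply never uses Mbt-propagation or Rule exclusion — set $K_i = \emptyset$, $MBT_i = \emptyset$ for all $i$, keep the same $R_i$, and replace each Propagation/Rule-choice step by its mbt counterpart. Since $MBT$ stays empty, $\Delta_{pro\_mbt}$ is always empty, $\Delta_{cho\_mbt}$ coincides with $\Delta_{cho}$, and $\Delta_{pro}(P \cup K_{i-1}, \cdot) = \Delta_{pro}(P, \cdot)$; so the Revision and Convergence conditions of Definition \ref{def:mbtComputation} reduce exactly to those of Definition \ref{def:Asperixcomputation}, and the Convergence clause $MBT_i = \emptyset$ is trivially met. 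Hence $IN_\infty = X$.

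The \textbf{main obstacle} I anticipate is the soundness direction's treatment of the $MBT$ set and the exclusion-constraints $K_i$ jointly — specifically, proving that at the limit every $MBT$ atom has genuinely been proved and that the accumulated constraints in $K_\infty$ do not spuriously kill the answer set $IN_\infty$ we converge to. One has to be careful that an atom can oscillate in and out of $MBT$ (it can be re-added if... actually it only leaves $MBT$ when proved, and the definitions only ever add to $MBT$, so monotonicity of ``eventually in $IN$'' must be extracted from the Convergence requirement $MBT_\infty = \emptyset$ rather than from step-wise monotonicity). A clean way to organize this is to prove an invariant of the form: $IN_i \cup MBT_i \subseteq X$ and $OUT_i \cap X = \emptyset$ for the candidate $X = IN_\infty$, by induction, and separately that every constraint in $K_i$ is satisfied by $X$; then Convergence plus this invariant pins down $R_\infty = GR_P(X)$ and groundedness, and Theorem \ref{the:AnswersetGR} closes the argument. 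I expect the routine parts — disjointness invariants, the groundedness enumeration, the plain-computation extraction — to go through mechanically, with the $K_i$/$MBT_i$ interaction being the only place needing genuine care.
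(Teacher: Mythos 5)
Your proposal is correct, and its load-bearing content coincides with the paper's own proof: both directions are obtained by reduction to Theorem~\ref{the:Aspcomp}. For completeness you pad an ordinary \asperix. computation with $K_i = MBT_i = \emptyset$ for all $i$, which is exactly Lemma~\ref{lem:mbtAsperixcomp1}. For soundness the paper simply projects the mbt computation onto ${\langle R_i, \langle IN_i, OUT_i\rangle \rangle}_{i=0}^{\infty}$ and checks that this is an ordinary \asperix. computation (Lemma~\ref{lem:mbtAsperixcomp2}); you do sketch this projection at the end of your soundness paragraph, but you precede it with a direct re-derivation of $R_\infty = GR_P(IN_\infty)$ and groundedness via Theorem~\ref{the:AnswersetGR}, which amounts to re-proving Lemma~\ref{lem:Asperixcomp2} from scratch inside the mbt setting. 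That longer route would work, but it is unnecessary, and most of the ``genuine care'' you anticipate dissolves under the projection: (i) the sets $K_i$ and $MBT_i$ occur only in the \emph{preconditions} of the mbt Revision and Convergence clauses, never in what must be concluded about $IN_\infty$ as an answer set of $P$, so nothing about the constraints in $K_\infty$ being ``satisfied'' needs to be proved --- since $\Delta_{pro}(P\cup K, I, R) \supseteq \Delta_{pro}(P, I, R)$ and, once $MBT=\emptyset$, $\Delta_{cho}(P,I,R) \subseteq \Delta_{cho\_mbt}(P\cup K, I, R)$, the mbt preconditions are strictly stronger than the ordinary ones and emptiness transfers in the right direction; (ii) Mbt-propagation and Rule exclusion leave $R$, $IN$ and $OUT$ untouched, so they project to Stability steps; (iii) your ``only thing to check,'' that $\Delta_{pro}$ without $MBT$ behaves like the original, is vacuous because Definition~\ref{def:Delta_cho_pro_mbt} makes $\Delta_{pro}$ independent of $MBT$ and identical to the $\Delta_{pro}$ of Definition~\ref{def:Delta_Cho_Pro}. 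The one genuinely shaky spot in your text --- arguing that a constraint in $K_\infty$ cannot be violated ``because we are on the branch that excluded it'' --- would not survive as written, since a computation is a single sequence rather than a branch of a search tree; fortunately that claim is not needed for the theorem at all.
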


Note that computations model only successfull branches of a search tree.
On the other hand, must-be-true atoms and rules blocking enable to prune failed branches of the tree
and to reduce non determinism of the search by restricting the possible choices for the oracle 
(because some rules are explicitly excluded, and others are blocked by must-be-true atoms).
So, these new elements do not improve the number of steps of a computation
but they improve the number of steps needed to find a computation when there is no oracle to guide the search
and, then, they make easier the search of answer sets.

\subsection{\asperix. Main Algorithm}
\label{subsec:core}
Now, we are interested in the practical computation of an answer set. The \asperix. algorithm, following the principle of mbt \asperix. computation seen
in section \ref{subsec:computation}, is based on the construction of three disjoint atom sets \IN., \MBT. and \OUT. during the search for an answer set. It alternates two steps.
On the one hand, a propagation step which instantiates all supported and unblocked rules which may be built from \IN., \MBT. and \OUT. and fires them, i.e. adds their head
in \IN. (or \MBT.).
On the other hand, a choice step which forces or prohibits a nonmonotonic instantiated applicable rule to be fired during the next propagation step.

In order to treat the information more efficiently, the rules of a program $P$ are ordered following the \emph{strongly connected components} ($SCC$) of the
dependency graph of $P$:
the nodes of the dependency graph of a program $P$ are its predicate symbols and the arcs are defined by $\{(p,q) | \exists r \in P,\ p = pred(\head{r}),\ q \in
pred(\corpspos{r}\cup\corpsneg{r})\}$.
The strongly connected components $\{C_1, ..., C_n\}$ are ordered in such a way that if $i<j$ then no node (i.e. predicate symbol) of $C_i$ depends of a node of $C_j$.
A rule is said to belong to a SCC $C$ if the predicate symbol of its head is in the component $C$.
Note that constraints are not really concerned by ordering of rules but, for standardizing notations,
constraints are considered to belong to a unique component whose number is greater than that of the last SCC,
i.e., if $C_n$ is the last SCC then constraints are considered to belong to $C_{n+1}$.

\begin{example}
\label{ex:SCC}
(Example~\ref{ex:grand_exemple} continued)

The strongly connected components (SCC) of the graph of the program $P_{\ref{ex:grand_exemple}}$ are $C_1 = \{n\}$, $C_2 = \{a,b\}$ and $C_3 = \{c\}$ (Figure \ref{graphe_dependance}).

\begin{figure}[!ht]
      \centering
      \input{EXAMPLES/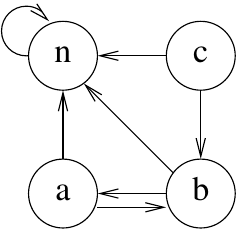_t}
      \caption{Dependency graph of $P_{\ref{ex:grand_exemple}}$} 
      \label{graphe_dependance} 
\end{figure}

\end{example}

The \asperix. algorithm solves one by one the SCC $\{C_1, ..., C_n\}$ of a program $P$ by starting by $C_1$.
When no propagation nor choice can no longer be done on the current SCC, the predicate symbols of the SCC are said to be \emph{solved} and the SCC too.
It means that nothing can be deduced anymore for those predicate symbols.
The atoms which are instances of the predicate symbols of the current SCC and which are not in \IN. are implicitly added to \OUT..
Note that they are not explicitly added to \OUT. because ground instances of a predicate are not known (not computed): they could be
infinite and, even if finite, to compute and store them is useless.

Rules of the program are instantiated on the fly during the propagation phase and the choice step.
Hence the propositional program $ground(P)$  which contains all the instantiated rules of the program is never really computed.
The propagation step and the choice step are realized in the \asperix. algorithm thanks to the functions \gammapro. and \gammacho. (which are selection functions in $\Delta_{pro} \cup \Delta_{pro\_mbt}$ and $\Delta_{cho\_mbt}$ sets used in the mbt \asperix. computation of Subsection~\ref{subsec:computation}).
The  \gammapro. function searches for a weakly supported unblocked rule amongst the current and next non-solved SCC.
So propagation operates on several components: each rule is fired as soon as possible to quickly detect a possible conflict.
Rules instantiated by \gammapro. are stored in a set $Subst=\{subst(r_i), ...,subst(r_n)\}$ during the answer set search to mark the substitution of rules that have
already been used.
For each first-order rule $r_i$, $subst(r_i)$ denotes the set of all substitutions $\theta$ such that $\theta(r_i)$ has already been fired. And $subst\_rule(r_i) = \bigcup_{\theta \in subst(r_i)} \{\theta(r_i)\}$ is the set of instantiated rules obtained thanks to substitutions $subst(r_i)$. 
The notation is extended to a set $R$ of first-order rules: $subst\_rule(R) = \bigcup_{r_i \in R}subst\_rule(r_i)$.
The \gammacho. function chooses an applicable rule in the current SCC when nothing can no longer be propagated.
So choice, unlike propagation, operates only on the current component. 
This strategy, consisting of solving the SCC one after another, makes it possible to solve efficiently stratified programs (or some stratified parts of programs). 

Functions \gammapro. and \gammacho. are specified in more details in Subsection~\ref{subsec:gamma} and are defined informally as follows:

\begin{itemize}
 \item $\gammapro.(P,S,S',T,SCC,Subst)$: nondeterministic function which selects a rule (or a constraint) $r$ belonging to a SCC greater or equal to the current SCC in the dependency graph of a program $P$ such that $\corpspos{r} \subseteq S\cup S'$, $\corpsneg{r} \subseteq T$ and $r \in  ground(P) \backslash subst\_rule(P)$ or returns $\NULL.$ if no such a rule exists.
 \item $\gammacho.(P,S,S',T,SCC,Subst)$: nondeterministic function which selects a rule $r$ belonging to the current SCC in the dependency graph of a program $P$ such that  $\corpspos{r} \subseteq S$, $\corpsneg{r} \cap (S \cup S') = \emptyset$ and $r \in  ground(P) \backslash subst\_rule(P)$ or returns $\NULL.$ if no such a rule exists.
\end{itemize}

\begin{algorithm}[!ht]
\label{algo:solve}
\LinesNumbered
  \FuncSty{Function solve($P_R,~ P_K,~ IN,~ MBT,~ OUT, ~ SCC, ~ Subst$)}\;
// search of one answer set for a program $P = P_R \cup P_K$\\
  \Repeat(//Propagation phase){$r_0 = \NULL.$}{ 
    $r_0 \leftarrow \gamma_{pro}(P_R \cup P_K,IN, MBT,OUT,SCC,Subst)$\; \label{ln:gamma-pro}
    \If {$r_0 \neq \NULL.$}{
      \eIf {$(\corpspos{r_0}\cap \MBT.)\neq \emptyset$} 
      {
	$MBT \leftarrow MBT \cup \{\head{r_0}\}$\;\label{ln:add-mbt}
      } 
      {
	$IN \leftarrow IN \cup \{\head{r_0}\}$\;\label{ln:add-in}
	\If {$(\head{r_0} \in MBT$)} 
	{\label{ln:del-mbt}
	  $MBT \leftarrow MBT \backslash \{\head{r_0}\}$\;
	}
      }
    }
  }
  \eIf (//Contradiction detected){$((IN \cup MBT) \cap OUT \neq \emptyset)$}{\label{ln:contradiction}
    \Return $no\_answer\_set$\; } 
  {
    $r_0 \leftarrow \gamma_{cho}(P_R,IN,MBT,OUT,SCC,Subst)$\;\label{ln:gamma-cho}
    \eIf (//Choice point) {$r_0 \neq \NULL.$} 
     {
      $stop \leftarrow solve(P_R, P_K, IN, MBT, OUT \cup \corpsneg{r_0}, SCC,Subst)$\;\label{ln:force-r0}
      \If {$stop = no\_answer\_set$} 
      {\label{ln:block-r0begin}
	$atoms \leftarrow \{a | a \in \corpsneg{r_0}, pred(a) \in pred(SCC)\}$\;
        \eIf {$(| atoms | = 1)$}
        {
	  $MBT \leftarrow MBT \cup atoms$\; \label{ln:block-atom}
        }
        {
	  $P_K \leftarrow P_K \cup \{ \bot \leftarrow \cup_{a_i \in atoms}\ not\ a_i\}$\;\label{ln:block-atoms}
        }
        $stop \leftarrow solve(P_R, P_K, IN, MBT, OUT, SCC,Subst)$\;\label{ln:block-r0end}
      }
      \Return $stop$ \;
    }
    (// The SCC is solved) {\label{ln:solved-scc}
      \eIf {$pred(\MBT.)\cap pred(SCC) = \emptyset$} 
      {\label{ln:mbt-check}
        \eIf{$\neg last(SCC)$}
        {
	  \Return $solve(P_R, P_K, IN, MBT, OUT, SCC+1,Subst)$\;
        }
        {
	  \eIf (// a constraint is violated){$\gamma_{check}(P_K,IN,MBT,OUT,SCC)$}
	  {\label{ln:gamma-check}
	    \Return $no\_answer\_set$\; 
	  }
	  (// An answer set has been found){
	    \Return $IN$\; 
	  }
        }
      }
      (// a \MBT. atom can not be proved){\label{ln:mbt-fail}
        \Return $no\_answer\_set$\; 
      }
    }

  }
\caption{$solve$}
\end{algorithm}

The function $solve$ of Algorithm~\ref{algo:solve} specifies the algorithm of the search of one answer set for a program $P$.
The set $P_K$ is the set of constraints (rules with the symbol $\bot$ at their heads) of $P$ and $P_R$ the other rules.
By default, $\bot$ is into the set \OUT..
Then, if a constraint is fired, a contradiction is immediately detected since $\bot$ is added into the set \IN. and the sets \IN. and \OUT. are no longer disjoint.
The algorithm of the function \solve. computes one answer set (or none if the program is incoherent) thanks to the variable $stop$ which stops the search once an answer set has been found.
This algorithm may be easily extended to compute an arbitrary number of answer sets.
Let us note that, for sake of simplicity, the function $solve$ will return either a set (when there is an answer set) or the constant $no\_answer\_set$ if there is no answer set. 

The main parts of the function \solve. are now described.
Initially, $IN = \emptyset$, $MBT = \emptyset$, $OUT = \{\bot\}$, $SCC$ is the index of the first SCC and $Subst = \emptyset$.

The propagation phase successively fires each weakly supported and unblocked instantiated rule $r_0$.
At each step, the call $\gammapro.(P_R \cup P_K, IN , MBT, OUT, SCC, Subst)$  selects and instantiates a unique unblocked rule $r_0$ such that $\corpspos{r_0} \subseteq \IN. \cup \MBT.$ (line~\ref{ln:gamma-pro}).
If such a rule exists, its head atom \head{r_0} must belong to the answer set.
This head atom is added into the set \IN. (line~\ref{ln:add-in}) if the positive body of the rule is included in the set \IN. or added into the set \MBT. (line~\ref{ln:add-mbt}) otherwise since at least one atom $a$ of the positive body of the rule has not yet proved its membership to the set \IN. ($a\in \MBT.$ but $a\notin \IN.$). 
Moreover, a head atom which is added into \IN. must be deleted from \MBT. since a proof of its membership to the answer set has been found (line~\ref{ln:del-mbt}).
When there is no more unblocked rule $r_0$ such that $\corpspos{r_0} \subseteq \IN. \cup \MBT.$, $(\IN. \cup \MBT.)\cap \OUT. = \emptyset$ is checked in order to detect a contradiction (line~\ref{ln:contradiction}).
If no contradiction is detected, the algorithm begins the choice step.

The choice point forces or forbids a nonmonotonic applicable rule to be fired.
The call $\gammacho.(P_R, ~IN, ~MBT,~OUT, ~SCC, ~Subst)$ selects and instantiates a unique applicable rule of $P_R$ whose head belongs to the current SCC (line~\ref{ln:gamma-cho}). If such a rule exists, $r_0$ is forced to be unblocked and then will be fired during the next propagation phase: 
its negative body is added to the \OUT. set and function \solve. is recursively called with its new parameters (line~\ref{ln:force-r0}).
If a recursive call to the function \solve. detects a contradiction, the algorithm backtracks on the last choice point on the  rule $r_0$ which has been forced to be fired and blocks it (lines~\ref{ln:block-r0begin}-\ref{ln:block-r0end}): if $a$ is the only atom of the negative body of $r_0$  then $a$ is added to the set \MBT. (line~\ref{ln:block-atom}) else a constraint including all the atoms of the negative body of $r_0$ is added to the program (line~\ref{ln:block-atoms}).
More precisely, the only atoms of the negative body that are considered are those with a predicate symbol belonging to the current SCC because atoms from a lower SCC are already solved, i.e. they are in \IN. or \OUT.. 
When there is no more choice point, the current SCC is solved (line~\ref{ln:solved-scc}) but it must be checked that no atom of the \MBT. set has a predicate symbol in the current SCC (line~\ref{ln:mbt-check}).
If such an atom exists, 
\MBT. and \OUT. sets are not disjoint.
Indeed, if a SCC is solved, atoms which are instances of predicate symbols of the SCC and which are not in \IN. are implicitly added to \OUT..
Then if a \MBT. atom is an instance of a predicate symbol of the current SCC, a failure is observed and the backtrack process continues (line~\ref{ln:mbt-fail}).
If the last SCC is solved, the set \IN. represents an answer set of $P$ if no constraint is applicable.
This test is realized thanks to the nondeterministic function \gammacheck. (line~\ref{ln:gamma-check}) which is specified in more details in Subsection~\ref{subsec:gamma} and is defined informally as follows:

$\gammacheck.(P,S,S',T, SCC)$: function which checks if there is any constraint $c$ such that $\corpspos{c} \subseteq S$, $\corpsneg{c} \cap S = \emptyset$ and $c \in  ground(P)$.

\begin{example} 
\label{ex:trace_grand_exemple}

\begin{figure}[b]
	\hspace*{-4cm}
      \includegraphics[width=2.2\textwidth]{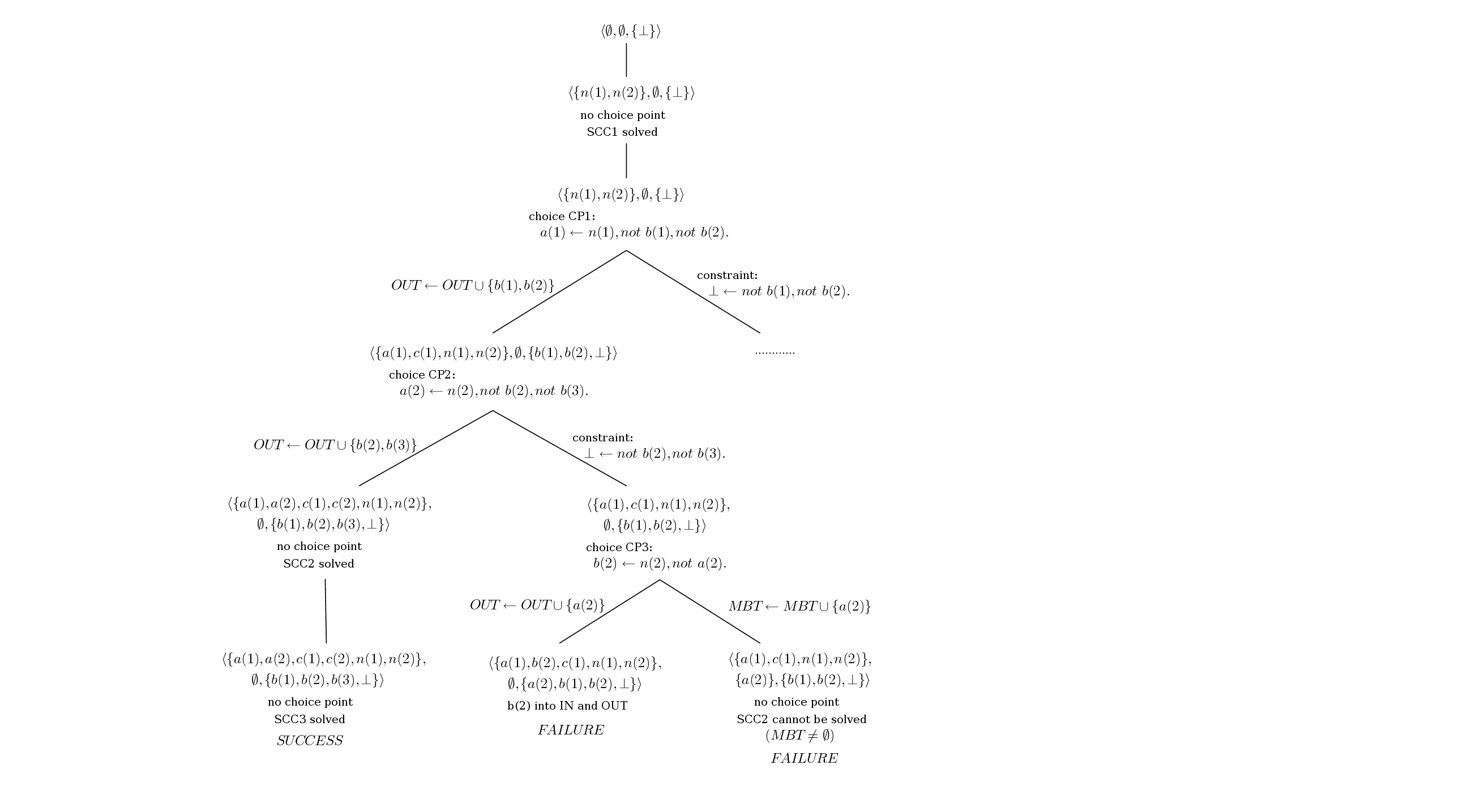}
            \caption{The tree-shaped execution of the answer sets of program $P_{\ref{ex:grand_exemple}}$.}
      \label{fig:trace} 
\end{figure}   

The execution of the \asperix. algorithm for program $P_{\ref{ex:grand_exemple}}$ of Example~\ref{ex:grand_exemple} is represented by a tree in Figure~\ref{fig:trace}.
At the beginning $\IN. = \emptyset$, $\MBT. = \emptyset$, $\OUT. = \{ \bot \}$ and the current SCC is the component $C_1 = \{n\}$.
After the first propagation, $n(1)$ and $n(2)$ are in \IN. thanks to the two rules \intextfact{n(1)} and \intextrule{n(2)}{n(1),(1+1) <= 2}. 
No choice point exists and the first SCC is solved since the \MBT. set is empty.
The component $C_2=\{a,b\}$ becomes the current SCC.

The first choice is realized on the current SCC (choice point $CP1$): the rule  \intextrule{a(1)}{n(1), not \ b(1), not \ b(2)} becomes unblocked by adding $b(1)$ and $b(2)$ into the set \OUT. (left branch after choice point $CP1$).
A new propagation phase shows that $a(1)$ and $c(1)$ are in \IN. since \intextrule{a(1)}{n(1), not \ b(1), not \ b(2)} and \intextrule{c(1)}{n(1), not \ b(2)} can be fired. 
Then, a new choice is realized (choice point $CP2$) and the rule \intextrule{a(2)}{n(2), not \ b(2), not \ b(3)} is forced to be unblocked (left branch after choice point $CP2$). 
The atom $b(3)$ is added into the set \OUT..
A new propagation phase shows that $a(2)$ and $c(2)$ are in \IN. since \intextrule{a(2)}{n(2), not \ b(2), not \ b(3)} and \intextrule{c(2)}{n(2), not \ b(3)} can be fired.
The second SCC is solved since no other rule is applicable and the \MBT. set is still empty.
In the same way, no propagation nor choice point is possible in the SCC $C_3=\{c\}$.
Since no constraint is applicable, a first answer set is obtained: $\{ a(1),a(2),c(1),c(2),n(1),n(2) \}$. 

If another answer set is wished, the algorithm backtracks to the last choice point on \intextrule{a(2)}{n(2), not \ b(2), not \ b(3)} of the component $C_2$ and blocks the rule (right branch after choice point $CP2$) by adding a constraint \intextrule{\bot}{not \ b(2), not \ b(3)} into $P_K$.
A new choice is realized (choice point $CP3$) and the rule \intextrule{b(2)}{n(2), not \ a(2)} is forced to be unblocked (left branch after choice point $CP3$) by adding $a(2)$ into the \OUT. set.
During the propagation step, $b(2)$ is added into the \IN. set since \intextrule{b(2)}{n(2), not \ a(2)} is fired. 
The atom $b(2)$ is then simultaneously in the sets \IN. and \OUT. which leads to a contradiction.

The algorithm backtracks to the choice point \intextrule{b(2)}{n(2), not \ a(2)} of the component $C_2$ (choice point $CP3$) and the rule is blocked by adding the atom $a(2)$ into the \MBT. set (right branch after choice point $CP3$).
Since there is no more possible choice and the \MBT. set contains an atom whose predicate symbol is in the current SCC, this atom cannot be proved and this
leads to a failure.
The algorithm backtracks to the first choice point on \intextrule{a(1)}{n(1), not \ b(1), not \ b(2)} of the component  $C_2$ (choice point $CP1$) and blocks the rule and searches for a new possible answer set (right branch after choice point $CP1$). The process keeps going until the whole tree is computed when all the answer sets are required. Let us note that when dealing with the computation
of one answer set like explained in the algorithm, only the first branch is considered.

\end{example}

\subsection{Functions $\gamma$}
\label{subsec:gamma}

Functions $\gamma$ have a crucial role in two important steps of the search of an answer set.
The function \gammapro. is called during the propagation step in order to choose the rules to fire and then to add their heads into \IN. (or \MBT.).
The function \gammacho. is called during the choice step in order to force or to forbid a rule to be fired during the next propagation step.
The function \gammacheck. is called during the verification step in order to verify that no constraint is applicable.
 Since the principle of the solver \asperix. is to instantiate the rules on the fly during the search of an answer set, functions $\gamma$ need to call a function \instantiateRule. which searches for a substitution for the atoms of a rule.
This function is detailed in its own Subsection~\ref{subsec:instantiation}.

\paragraph{\bf {Function \gammapro..}}

The function \gammapro. searches for a rule to fire w.r.t. \IN., \MBT. and \OUT. sets.
This function computes a complete instantiation of a rule such that the positive body is in $IN \cup MBT$ and the negative body is in \OUT..
The rule to instantiate is chosen amongst a set of rules $R$ consisting of rules that could lead to new, unprocessed instances.
These rules are those whose body contains some predicate symbol of what we call an \emph{atom to propagate}.
Atoms to propagate are atoms recently added into \IN., \MBT. and \OUT. sets, and not yet used for propagation phase.
Thereby, when an atom $a$ is added into \IN. or \MBT. (resp. \OUT.) set, the rules containing $pred(a)$ in their positive body (resp. negative body) will be in the $R$ set for the next call to \gammapro. in order to propagate this atom, i.e. to use its presence in \IN. or \MBT. (resp. \OUT.) for building new instances of rules to be fired.
During the first call of the function \solve., atoms to propagate are the facts of the program, and the set $R$ contains all the rules which have some predicate symbols of the facts in their positive body.
During a call after a choice point, atoms to propagate are those added into \OUT. during this choice point, and the set $R$ contains all the rules which have  some predicate symbols of these atoms in their negative body.
During a call after the access to the next SCC, the  predicate symbols of the current SCC are solved and then 
all instances of these predicate symbols that are not in \IN. are implicitly added into \OUT..
Atoms to propagate are all these instances determined to be false, and then the set $R$ contains all the rules which have in their negative body some of these solved predicate symbols.

The Algorithm~\ref{algo:gamma_pro} of the function \gammapro. chooses a first-order rule $r$ amongst the set $R$ (the first one, line~\ref{ln:firstR}) and tries to find a weakly supported unblocked instantiation of  the rule.
It calls the function \instantiateRule. which returns this next instantiation if any (line~\ref{ln:instantiate}).
If there is no more weakly supported unblocked instantiated rule which may be extracted from $r$ (line~\ref{ln:no-instantiation}), \gammapro. deletes from $R$ the rule $r$ and treats the next rule.
This process is repeated until a weakly supported unblocked rule is found or there is no more rule in $R$.
When a rule allows a substitution (line~\ref{ln:instantiation-found}), the latter is stored in $Subst$ in order to find some others at the next call to \gammapro.. 

\begin{algorithm}[ht]
\label{algo:gamma_pro}
  
\FuncSty{Function $\gamma_{pro}(P,IN,MBT,OUT,SCC,Subst)$}\;
  $R \leftarrow$ Set of rules (including constraints) containing predicate symbols to propagate\;
  \eIf {$R \neq \emptyset$}
  {
      \Repeat{$\theta \neq \NULL.$ or $R = \emptyset$}{
	  $r \leftarrow first(R)$\; \label{ln:firstR}
	  \tcc{Searching for an instantiation of the rule $r$ with $\corpspos{r} \subseteq \IN. \cup \MBT.$ and $\corpsneg{r} \subseteq \OUT.$}
	  $\theta \leftarrow instantiateRule(r,\gamma_{pro},\IN.,\MBT.,\OUT.,subst(r))$\;\label{ln:instantiate}
	  \If{$\theta = \NULL.$}{\label{ln:no-instantiation}
	      $R \leftarrow R \backslash \{ r \}$\;
	  }
      }
      \eIf{$\theta \neq \NULL.$}
      {\label{ln:instantiation-found}
	  \tcc{An unblocked weakly supported instantiated rule is found}
	  $subst(r) \leftarrow subst(r) \cup \{ \theta \}$\;
	  \Return $\theta(r)$\;
      }
      {
	  \Return $\NULL.$\;
      }
  }
  {
    \Return $\NULL.$\;
  }
\caption{$\gamma_{pro}$}  

\end{algorithm}

\begin{example}
\label{ex:gamma_pro}
Example~\ref{ex:trace_grand_exemple} is taken again. An answer set is searched after the choice point on the rule \intextrule{a(1)}{n(1), not\ b(1), not\ b(2)}
(choice point $CP1$): the atoms $b(1)$ and $b(2)$ are added into the set \OUT. in order to force the rule to be fired (left branch after choice point $CP1$).
During the propagation step, many calls to the function \gammapro. are executed.
During the first call the set $R$ consists of all the rules containing in their negative body the predicate symbol $b$ of the atoms $b(1)$ and $b(2)$ that must be propagated.
This set $R$ then contains the rules \intextrule{a(X)}{n(X), not\ b(X), not\ b(X+1)} and \intextrule{c(X)}{n(X), not\ b(X+1)}
Arbitrarily, the rule \intextrule{a(X)}{n(X), not\ b(X), not\ b(X+1)} of the set $R$ is chosen and a supported unblocked instantiation \intextrule{a(1)}{n(1), not\ b(1), not\ b(2)} is found.
The function \gammapro. returns the instantiation of the rule and the \solve. function adds  $a(1)$ into \IN..
During the next call to \gammapro., the set $R$ must contain, in addition to the previous rules, any rule containing in its positive body the predicate symbol $a$ of the atom to be propagated $a(1)$ (since $a(1)$ has been added into \IN.).
Since no rule respects this condition, the set $R$ still contains only the two previously added rules.
The function \gammapro. searches for a new weakly supported unblocked instantiation of the rule  \intextrule{a(X)}{n(X), not\ b(X), not\ b(X+1)}.
No such instantiation is found and the rule is deleted from the set $R$.
The function \gammapro. searches for a new weakly supported unblocked instantiation of the rule \intextrule{c(X)}{n(X), not\ b(X+1)}.
The instantiation \intextrule{c(1)}{n(1), not\ b(2)} is then returned to the \solve. function which adds $c(1)$ into \IN..
Then during the next call to the function \gammapro., the set $R$ must be updated with the rules containing in their positive body the predicate symbol $c$ of the atom to propagate $c(1)$.
As previously, no rule respects this condition and the set $R$ still contains the only rule \intextrule{c(X)}{n(X), not\ b(X+1)}.
A new weakly supported unblocked instantiation is sought but this rule leads to a failure. 
The rule  \intextrule{c(X)}{n(X), not\ b(X+1)} is then deleted from the set $R$ which becomes empty.
Then the function \gammapro. returns the value \NULL. and the propagation step of the function \solve. stops.

 \end{example}

\paragraph{\bf Function \gammacho..}

The function \gammacho. is executed when no rule can be fired anymore and there is some SCC to be solved.
This function searches an applicable instantiated rule belonging to the current SCC.
The Algorithm~\ref{algo:gamma_cho} of function \gammacho. is similar to the algorithm of the function \gammapro..
The function \gammacho. searches for an applicable instantiated rule amongst a set $R$ of rules which have in their negative body at least one predicate symbol from the current SCC 
(otherwise, if all predicate symbols from negative body belong to previous SCC, they are already solved and then the rule can be considered as a monotonic one and is only used for propagation).
The function \gammacho. chooses a rule in this set $R$ before calling the function \instantiateRule. searching for the next applicable instantiation for the considered rule.
In a similar way as the function \gammapro., the process is repeated until an applicable instantiated rule is found for a rule of $R$ or there is no more rule in $R$.

\begin{algorithm}[ht]
\label{algo:gamma_cho}
  \FuncSty{Function $\gamma_{cho}(P,\IN.,\MBT.,\OUT.,SCC,Subst)$}\;
  $R \leftarrow$ Set of rule belonging to the current SCC such that the negative body contains at least a predicate symbol not solved\;
  \eIf {$R \neq \emptyset$}
  {
     \Repeat{$\theta \neq \NULL.$ or $R = \emptyset$}{
	  $r \leftarrow first(R)$\;
	  \tcc{Searching for an instantiation of the rule $r$  with $\corpspos{r} \subseteq \IN.$ and $\corpsneg{r} \cap (\IN.\cup \MBT.) = \emptyset$}
	  $\theta \leftarrow instantiateRule(r,\gamma_{cho},\IN.,\MBT.,\OUT., subst(r))$\;
	  \If{$\theta = \NULL.$}{
	      $R \leftarrow R \backslash \{ r \}$\;
	  }
     }
     \eIf{$\theta \neq \NULL.$}
      {
	  \tcc{An applicable instantiated rule is found}
	  $subst(r) \leftarrow subst(r) \cup \{ \theta \}$\;
	  \Return $\theta(r)$\;
      }
      {
	  \Return $\NULL.$\;
      }
  }
  {
    \Return $\NULL.$\;
  }
  
\caption{$\gamma_{cho}$}  

\end{algorithm}

\begin{example}
\label{ex:gamma_cho}
Example~\ref{ex:gamma_pro} is taken again.
After the first SCC has been solved, a first choice is realized on the current SCC,  $C_2 = \{a,b\}$, by the function \gammacho..
The rules of this component which contains in their negative body at least one predicate symbol $a$ or $b$ of $C_2$ are added into the set $R$ of the rules that may be chosen.
Then, the rules \intextrule{a(X)}{n(X), not\ b(X), not\ b(X+1)} and \intextrule{b(X)}{n(X), not\ a(X)} are in $R$.
Arbitrarily, the function \gammacho. searches for an applicable instantiation of the first  rule of this set and a choice point on \intextrule{a(1)}{n(1), not\ b(1), not\ b(2)} is returned to the calling function \solve. (choice point $CP1$).
After the propagation step, \gammacho. searches for a new applicable instantiation of the rule \intextrule{a(X)}{n(X), not\ b(X), not\ b(X+1)} and a choice point on \intextrule{a(2)}{n(2), not\ b(2), not\ b(3)} is returned to the calling function \solve. (choice point $CP2$).
After a new propagation step, \gammacho. searches in vain a new applicable instantiation of the rule \intextrule{a(X)}{n(X), not\ b(X), not\ b(X+1)} 
This last rule is then deleted from the set $R$ and \gammacho. searches for an applicable instantiation of the rule \intextrule{b(X)}{n(X), not\ a(X)} which leads to a failure.
The set $R$ is now empty and the function \gammacho. returns \NULL. to the calling function \solve. to mean that no other choice may be realized on the current SCC.

\end{example}

\paragraph{\bf Function \gammacheck..}

The function \gammacheck. is executed when no more choice point is possible for the last SCC.
This function verifies that no constraint containing at least one predicate symbol of the last SCC is applicable in order to determine if the set \IN. is an answer set.
The Algorithm~\ref{algo:gamma_check} of the function \gammacheck. is similar to the algorithm of the function \gammacho.. 
The function \gammacheck. searches for an applicable instantiated constraint amongst a set $C$ of constraints whose negative body contains at least a not-solved predicate symbol of the last SCC.
The function \gammacheck. chooses a constraint in the set $C$ and calls the function \instantiateRule. which searches for an applicable instantiated constraint.
If no instantiated constraint is applicable, the algorithm returns \false. and the set \IN. is an answer set of the program. If a constraint is applicable, the algorithm
returns \true. which means there is a failure on the branch (the search of answer sets keeps going on other branches if any).

\begin{algorithm}[ht]
\label{algo:gamma_check}

  \FuncSty{Function $\gamma_{check}(P,IN,MBT,OUT,SCC)$}\;
  $C \leftarrow$ Set of constraints such that the negative body contains at least a predicate symbol not solved\;
  \eIf {$C \neq \emptyset$}
  {
     \Repeat{$\theta \neq \NULL.$ or $C = \emptyset$}{
	  $c \leftarrow first(C)$\;
	  \tcc{Searching for an instantiation of the constraint $c$ such that $\corpspos{c} \subseteq \IN.$ and $\corpsneg{c} \cap \IN. = \emptyset$}
	  $\theta \leftarrow instantiateRule(c,\gamma_{check},\IN.,\MBT.,\OUT.,\emptyset)$\;
	  \If{$\theta = \NULL.$}{
	      $C \leftarrow C \backslash \{ c \}$\;
	  }
     }
     \eIf{$\theta \neq \NULL.$}
      {
	  \tcc{An applicable instantiated constraint is found}
	  \Return \true.\;
      }
      {
	  \Return \false.\;
      }
  }
  {
    \Return \false.\;
  }
\caption{Function $\gamma_{check}$}  
\end{algorithm}

\subsection{Rule Instantiation}
\label{subsec:instantiation}

In this section is described the process of instantiation of a rule.
This process is a lazy one only called when needed.
Since  we only consider safe rules, the instantiation of a rule is in fact the instantiation of its positive body.
In a forward chaining approach, the only rule instantiations of interest are those that lead to a not blocked supported rule or an unblocked weakly supported rule. 
Hence, the rule instantiation is mainly directed by the instantiated atoms already present in the sets \IN. and \MBT..

The algorithm used in the \asperix. solver and described below is inspired by the previous work realized on the \dlv. grounder
\cite{dlvgrounder,DLVinst} which is based on the semi-naive evaluation technique of~\cite{Ullman}. 
The goal is to find a substitution for all the literals of the body of a rule $r$ thanks to the atoms already in \IN., \MBT. or \OUT..
To do this, a partial substitution $\theta$ is built as possible values are found for the variables of the literals of the body of the rule $r$.
It is assumed that the literals $l_1$, $l_2$, \dots, $l_n$ of the body of the rule $r$ are ordered following a list $[l_1,l_2,\dots,l_n]$: $firstLiteral(r)$ (resp. $lastLiteral(r)$) corresponds to $l_1$ (resp. $l_n$) and $previousLiteral(r)$ (resp. $nextLiteral(r)$) corresponds to the literal which precedes (resp. follows) the literal under consideration in the list.
The substitution calculus for a literal $l$ of a rule $r$ is realized thanks to the functions $firstMatch$ and $nextMatch$.
These functions look for a substitution which has not already been computed, i.e. not leading to a substitution for $r$ present in the set $subst(r)$ of all substitutions $\theta$ such that $\theta(r)$ has already been fired.
If the literal $l$ is positive, a substitution such that the substituted atom is in the set \IN. (or $\IN. \cup \MBT.$) is searched.
If the literal is negative, (a) a substitution such that the substituted corresponding atom is in the set \OUT. is searched if the goal is an unblocked rule or (b) the non membership of the substituted atom to the set $\IN.\cup\MBT.$ is checked if the goal is a not blocked rule\footnote{In this case, the body of the rule is ordered in such a way that negative literals appear after the positive literals containing their variables.}.

In the functions $firstMatch$ and $nextMatch$ which follow, the parameter $\gamma$  shows if an unblocked weakly supported or not blocked supported rule is looked for.

\begin{itemize}
 \item $firstMatch(l, \theta, \gamma, IN, MBT, OUT, subst)$ is a function which searches for the first possible substitution for a literal $l$ w.r.t. the sets \IN., \MBT. and \OUT., selection criterion $\gamma$ (unblocked weakly supported or applicable rule) and the current partial substitution $\theta$.
$firstMatch$ returns true and updates the partial substitution $\theta$ in case of success.
Otherwise, the function returns false.

 \item $nextMatch(l, \theta, \gamma, IN, MBT, OUT, subst)$ is a function which searches for the next possible substitution for literal $l$ given the already realized substitutions. 
\end{itemize}

For a rule $r$, a {\em free variable} of a literal $l$ is an occurrence of a variable $X$ such that it is its first occurrence in the body of $r$ when starting traversing the literal $l$.
In other words, no other literal which precedes $l$ in the body of $r$ contains an occurrence of the variable $X$. 
During the instantiation of a rule, a possible substitution is sought for all the free variables of every traversed literal and the substitutions of the previously calculated variables are kept.
If a literal has no free variable, the validity of the substitution w.r.t. the selection criterion $\gamma$ is checked (i.e. the substituted corresponding atom $\theta(at(l))$ is in \IN. or $\IN.  \cup \MBT.$ if $l \in \corpspos{r}$ and  $\theta(at(l))$ is in \OUT. or $\theta(at(l))$ is not in \IN. if $l \in \corpsneg{r}$).

\begin{example}
\label{ex:freeVariables}
Let \intextrule{a(X,Y,Z)}{b(X,Y), c(X,Y), d(X,Z)} be a rule.
The ordered list of the body of the rule is $[l_1=b(X,Y), l_2=c(X,Y), l_3=d(X,Z)]$ with:
  \begin{itemize}
      \item $freeVariables(l_1) = \{ X, Y \}$
      \item $freeVariables(l_2) = \emptyset$ 
      \item $freeVariables(l_3) = \{ Z \}$.
  \end{itemize}

\end{example} 

\begin{algorithm}[!ht]
\label{algo:instantiation}

  \FuncSty{Function instantiateRule($r,~ \gamma,~ IN,~MBT,~ OUT, ~ subst$)}\;
  $\theta \leftarrow lastSubstitution(r)$\; \label{ln:last-subst}
  \eIf{$\theta = \emptyset$}
  {\label{ln:empty-theta}
    \tcc{Searching for the first possible substitution of first literal}
    $l \leftarrow firstLiteral(r)$\;
    $matchFound \leftarrow firstMatch(l, \theta, \gamma, IN, MBT, OUT,subst)$\;
  }
  {\label{ln:nonempty-theta}
    \tcc{Searching for the next possible substitution of last literal}
    $l \leftarrow lastLiteral(r)$\;
    $\theta \leftarrow \theta \backslash freeVariableSubstitutions(l)$\;
    $matchFound \leftarrow nextMatch(l, \theta, \gamma, IN, MBT, OUT,subst)$\;
  }
  \While{\true.}
  {
    \eIf{$matchFound$}
    {\label{ln:matchfound}
      \eIf {$l \neq lastLiteral(r)$}
      {
	$l \leftarrow nextLiteral(r)$\;
	$matchfound \leftarrow firstMatch(l, \theta, \gamma, IN, MBT, OUT,subst)$\
      }
      {\label{ln:complete-substitution}
	  \tcc{A complete substitution is found}
	  \Return $\theta$\;
      }
    }
    {
      \tcc{No substitution for literal $l$. Bactrack to previous literal (if any) to find its next possible substitution}
      \eIf{$l \neq firstLiteral(r)$}
      {\label{ln:backtrack}
	$l \leftarrow previousLiteral(r)$\;
	$\theta \leftarrow \theta \backslash freeVariableSubstitutions(l)$\;
	$matchFound \leftarrow nextMatch(l, \theta, \gamma, IN, MBT, OUT,subst)$\;
      }
      {
	\Return \NULL.\; \label{ln:failure}
      }
    }
  }
\caption{$instantiateRule$}

\end{algorithm}

Function $instantiateRule$ of Algorithm~\ref{algo:instantiation} specifies the instantiation principles of a rule for constant sets \IN., \MBT. and \OUT..
This function is initialized with the partial substitution $\theta$ which is the last found substitution (thanks to the function  $lastSubstitution$) for the rule $r$ if any (line~\ref{ln:last-subst}). 
If it is the first attempt for the instantiation of this rule, $\theta$ is empty  (line~\ref{ln:empty-theta}) and the function searches a first substitution for the first literal of the body of the rule $r$ using the function $firstMatch$.
Otherwise, a substitution for $r$ has already been computed (line~\ref{ln:nonempty-theta}), the function searches a new possible substitution for the rule. For this, the function searches the next possible instance of the last literal of the rule $r$ by deleting from $\theta$ the substitutions of the free variables of this literal (thanks to the function  $freeVariableSubstitutions$) and by calling the function $nextMatch$.
During the execution of the main loop, the function first checks if a substitution has been found for the current literal $a$ (line~\ref{ln:matchfound}).
If it is the case, it searches a first substitution for the next literal of the rule body respecting the partial substitution $\theta$.
When all the atoms have been considered, a complete substitution is found (line~\ref{ln:complete-substitution}).
The function returns this substitution.
When the instantiation of a literal fails (i.e. there is no possible substitution for it), the function backtracks on the previous literal (line~\ref{ln:backtrack}) and updates $\theta$ by deleting the substitutions of the free variables of this literal.
Hence the function calls the function $nextMatch$ which searches the next possible instantiation for this literal.
The instantiation of a rule $r$ fails when no more substitution is possible for the first literal (line~\ref{ln:failure}).

Actually, the instantiation algorithm of a rule is slightly more complicated than the Algorithm~\ref{algo:instantiation} since the atoms dynamically added into \IN., \MBT. and \OUT. sets during the answer set computation, called \emph{atoms to propagate}, have to be taken into account: if possible, each substitution has to be computed once and only once.
Hence \asperix. uses a queue called $propagate\_IN$ (resp. $propagate\_MBT$ and  $propagate\_OUT$) which contains the atoms to be added into the set \IN.  (resp. \MBT. and \OUT.).
When all the instances of a rule $r$, for given sets $\IN._0$,  $\MBT._0$ and $\OUT._0$, have been generated, an atom to propagate $a_p$ whose predicate symbol $p$ appears in the body of the rule $r$ is extracted.
Now $I_1 = \langle {\IN._1}, {\MBT._1}{\OUT._1}\rangle$ denotes the mbt partial interpretation obtained by adding $a_p$ into $I_0 = \langle \IN._0,\MBT._0, \OUT._0\rangle$.
The body of the rule is ordered in such a way that the first literals are those whose predicate symbol is the one of the atom to propagate $a_p$
(they are the literals that might unify with $a_p$).
Then these literals whose predicate symbol is $p$ are successively marked and placed at the beginning of the rule.
The marked literal might only take the value  of the atom to propagate $a_p$ whereas the following (non marked) literals might take any values in $I_1$.
Then, if the instantiation of the first literal fails, it is unmarked, the next literal of predicate $p$ becomes the first literal of the rule body and is marked in turn, and the instantiation of the rule is started again.
The unmarked literals might then take any values in $I_0$ (which excludes the values of $a_p$ already used) while the marked literal can only take the value of the atom to propagate, and the non marked literals always take their values  in $I_1$.
If the instantiation of the first literal fails and there is no other literal to be marked, the instantiation of the rule fails.

\begin{example}
\label{ex:instantiation}
Let $r_0$ be a rule and $\IN._0$, $propagate\_IN$ and ${\IN._1}$ be sets of atoms defined as follow:
 \begin{center}
      $	  \begin{array}{l}

				r_0 = \outtextrule{a(X+Y)}{a(X),  b(X,Y),  a(Y)} \\
				\IN._0 = \{b(1,1), b(1,2)\}\\
				propagate\_IN = \{a(1)\}\\
				{\IN._1} = \{b(1,1), b(1,2), a(1)\}\\
	\end{array} $
 \end{center}

 
\begin{table}[ht]

    \centering
    \begin{tabular}{l l l l l r}
    $a(X+Y)$ & $\leftarrow$	 & $a(X),$ & $a(Y),$ & $b(X,Y)$ & \\
    & & \multicolumn{3}{l}{*** first call to \instantiateRule. ***} & \\
(1.1)			  &  & $\bold{a(1)}$ & - & - & ($l_1$ marked)\\
(1.2)			  &  & $\bold{a(1)}$ & $a(1)$ & - & \\
(1.3)			  &  &$\bold{a(1)}$ & $a(1)$ & $b(1,1)$ & $\Rightarrow$ complete instantiation \\
    & & \multicolumn{3}{l}{*** second call to \instantiateRule. ***} & \\
(2.1)			  &  & $\bold{a(1)}$ & $a(1)$ & NO & \\
(2.2)			  &  & $\bold{a(1)}$ & NO & - & \\
(2.3)			  &  &{\bf  NO} & - & - &  $\Rightarrow$ failure \\
(2.4)			  &  & - & $\bold{a(1)}$ & - &  	($l_2$ marked) \\
(2.5)			  &  & NO & $\bold{a(1)}$ & - &  	$\Rightarrow$ failure \\
    \end{tabular}
    \caption{Decomposition of the instantiations of the rule $r_0$ for the atom to propagate $a(1)$ (Example \ref{ex:instantiation})\label{tab:a1}}
\end{table}
 
The atom $a(1)$ has to be propagated by instantiating the rule $r_0$.
Table~\ref{tab:a1} shows the different steps of the instantiation.
The literals to be marked (whose predicate symbol is $a$) of the body of the rule are $a(X)$ and $a(Y)$.
These literals are placed at the beginning of the body of $r_0$ like this: $[l_1= a(X),  l_2= a(Y), l_3= b(X,Y) ]$.
In Table~\ref{tab:a1}, for clarity, the sequence of literals of the rule body is not changed when the marked literal changes. But the marked literal (shown in bold) is processed first, which is the same.
The first attempt for an instantiation begins and for the first time with atom to propagate $a(1)$.
The literal $l_1=a(X)$ is then marked and takes as unique value that of the atom to propagate $a(1)$ ((1.1) Table~\ref{tab:a1}).
Hence, value $1$ is substituted to the variable $X$ in $\theta$. 
Then, the following literal in the body of the rule, $l_2=a(Y)$, becomes the current literal and takes as value the first amongst those into ${\IN._1}$ which is also $a(1)$. 
Hence, value $1$ is substituted to the variable $Y$ in $\theta$ ((1.2) Table~\ref{tab:a1}).
Then the last literal, $l_3= b(X,Y)$, is reached.
This literal has no free variable and the membership into ${\IN._1}$ is simply checked for $b(1,1)$ which is obtained from $b(X,Y)$ by substituting $X$ and $Y$ by the values in $\theta$ ((1.3) Table~\ref{tab:a1}).
There is no more literal to consider then a complete substitution has been found.
The atom of the head $a(X+Y)$ takes the values of the substitution $\theta$. 
Hence, the forward chaining algorithm can add $a(2)$  into the $propagate\_IN$ queue.

Now, during a new instantiation attempt of the rule for the atom to propagate $a(1)$, the function restarts with the last substitution of the rule $\theta = \{ X/1, Y/1 \}$ in order to find a new substitution for the literal  $l_3=b(X,Y)$.
The second attempt for an instantiation begins with atom to propagate $a(1)$ for the second time.
Since $b(X,Y)$ has no free variable, there can be no other substitution than the current one ((2.1) Table~\ref{tab:a1}).
The process then backtracks to the literal $l_2=a(Y)$ which has no other substitution in ${\IN._1}$ ($a(2)$ has been inferred after $a(1)$ and is not into the current set ${\IN._1}$) ((2.2) Table~\ref{tab:a1}).
Since literal $l_1=a(X)$ can only take the value $a(1)$, it also fails ((2.3) Table~\ref{tab:a1}).
Since the last literal has failed, the literal $l_2=a(Y)$ is now marked instead of $a(X)$, and is instantiated with the atom to propagate $a(1)$. Hence, value $1$ is substituted to the variable $Y$ in $\theta$  ((2.4) Table~\ref{tab:a1}).
Literal $l_1=a(X)$ is unmarked and can only  take the values of the atoms of $\IN._0$, thus no substitution is possible.
Hence the algorithm fails on the first literal ((2.5) Table~\ref{tab:a1}).
Since there is no more literal to be marked, the rule instantiation ends by a failure for the atom to propagate $a(1)$.
The sets becomes as follow:

\begin{center}
	 $ \begin{array}{l}
				\IN._0 = \{b(1,1), b(1,2), a(1)\}\\
				propagate\_IN = \{a(2)\}\\
				{\IN._1} = \{b(1,1), b(1,2), a(1),a(2)\}\\
	\end{array} $
 \end{center}
 
\begin{table}[ht]

    \centering
    \begin{tabular}{l l l l l r}
    $a(X+Y)$ & $\leftarrow$	 & $a(X),$ & $a(Y),$ & $b(X,Y)$ & \\
    & & \multicolumn{3}{l}{*** third call to \instantiateRule. ***} & \\
(1.1)			  &  & $\bold{a(2)}$ & - & - &  ($l_1$ marked)\\
(1.2)			  &  & $\bold{a(2)}$ & $a(1)$ & - & \\
(1.3)			  &  & $\bold{a(2)}$ & $a(1)$ & NO & \\
(1.4)			  &  & $\bold{a(2)}$ & $a(2)$ & - & \\
(1.5)			  &  & $\bold{a(2)}$ & $a(2)$ & NO & \\
(1.6)			  &  & $\bold{a(2)}$ & NO & - & \\
(1.7)			  &  &{\bf  NO} & - & - &  $\Rightarrow$  failure\\
(1.8)			  &  & - & $\bold{a(2)}$ & - &  ($l_2$ marked)\\
(1.9)			  &  & $a(1)$ & $\bold{a(2)}$ & - & \\
(1.10)			  &  & $a(1)$ & $\bold{a(2)}$ & $b(1,2)$ & $\Rightarrow$ complete instantiation \\
    & & \multicolumn{3}{l}{*** fourth call to \instantiateRule. ***} & \\
(2.1)			  &  & $a(1)$ & $\bold{a(2)}$ & NO & \\
(2.2)			  &  & NO & $\bold{a(2)}$ & - & \\
(2.3)			  &  & - & {\bf NO} & - & $\Rightarrow$  failure\\
    \end{tabular}
    \caption{Decomposition of the instantiations of the rule $r_0$ for the atom to propagate $a(2)$ (Example \ref{ex:instantiation} continued)\label{tab:a2}}
\end{table}

The next atom $a(2)$ is extracted from the queue to propagate.
The third attempt for an instantiation of $r_0$  begins with atom to propagate $a(2)$ for the first time.
Table~\ref{tab:a2} shows the different steps of the instantiation.
The literals $a(X)$ and $a(Y)$ are again to be marked.
The rule instantiation is restarted with the literal $l_1=a(X)$ which is the marked literal.
The variable $X$ is substituted by the value $2$ since the only allowed value is that of the atom to propagate $a(2)$ ((1.1) Table~\ref{tab:a2}).
The current literal is now $l_2=a(Y)$ where $Y$ is substituted by the value $1$ since $a(1)$ is into ${\IN._1}$ ((1.2) Table~\ref{tab:a2}).
The literal $b(X,Y)$ has no free variable and since the atom $b(2,1)$ which respects the substitution $\theta = \{ X/2, Y/1 \}$ is not in ${\IN._1}$, the literal $b(X,Y)$ has no possible substitution ((1.3) Table~\ref{tab:a2}).
Then a new instantiation for  $l_2=a(Y)$ is sought: its next possible value is $2$ (since $a(2)$ is in ${\IN._1}$) ((1.4) Table~\ref{tab:a2}).
Again, since the atom $b(2,2)$ which respects the substitution $\theta = \{ X/2, Y/2 \}$ is not in  ${\IN._1}$, the literal $b(X,Y)$ has no possible substitution ((1.5) Table~\ref{tab:a2}).
The process backtracks to the literal $l_2=a(Y)$ which has no possible value ((1.6) Table~\ref{tab:a2}).
Hence, the process backtracks to the literal $l_1=a(X)$ which has no possible value since the only possible value was that of the atom to propagate $a(2)$ ((1.7) Table~\ref{tab:a2}).
Since the first literal has failed, the process restarts by marking the second literal $a(Y)$ (and unmarking the first $a(X)$).
The marked literal $l_2=a(Y)$ is processed first, it substitutes $Y$ by the value $2$ of the atom to propagate $a(2)$ ((1.8) Table~\ref{tab:a2}).
The unmarked literal  $l_1=a(X)$ may only take its values into $\IN._0$.
The variable $X$ is then substituted by the value $1$ ((1.9) Table~\ref{tab:a2}).
The literal $l_3=b(X,Y)$ has no free variable and since $b(1,2)$ which respects the substitution $\theta = \{ X/1, Y/2 \}$ is in ${\IN._1}$  a complete substitution is found ((1.10) Table~\ref{tab:a2}).
The atom $a(X+Y)$ of the head takes then the value of the substitution $\theta$.
Hence, the forward chaining algorithm can add $a(3)$ into $\IN._1$ and into $propagate\_IN$.

Then, during a new instantiation attempt of the rule $r_0$, the atom to propagate is still $a(2)$.
The process restarts from the last substitution $\theta = \{ X/1, Y/2 \}$ and search for a new substitution for the literal $l_3=b(X,Y)$.
A fourth attempt for an instantiation begins with atom to propagate $a(2)$ for the second time.
Since $b(X,Y)$ has no free variable, there can be no other substitution than the current one ((2.1) Table~\ref{tab:a2}).
The process then backtracks to the literal $l_1=a(X)$ that has no other substitution since the only possible values are those from $\IN._0$ (then neither the atom to propagate $a(2)$ nor $a(3)$ appeared after $a(2)$ are possible) ((2.2) Table~\ref{tab:a2}).
The literal $l_2=a(Y)$  also fails since the marked literal only accepts the value of the atom to propagate $a(2)$ ((2.3) Table~\ref{tab:a2}).
Since there is no more literal to be marked, the instantiation of the failing rule ends for this atom to propagate.
The process continues with the atom $a(3)$ which also leads to a failure.


\end{example}

\subsection{\asperix. language}
\label{subsec:discussion}
The core language of \asperix. is that of normal logic programs~\cite{gellif88b} with function symbols and true (or strong)  negation without inconsistent answer set.
\asperix. also provides dedicated treatment of lists with built-in predicates, as in \dlvcomplex.~\cite{dlvcomplex}, an extension of \dlv. with lists and sets.
On the other side, \asperix. does not provide aggregate atoms and optimization statements~\cite{DBLP:journals/tkde/BuccafurriLR00} which are accepted by the main current systems. 

One of the important issues in ASP is the treatment of function symbols.
Uninterpreted function symbols are important because they enable representation of recursive structures such as lists and trees.
But reasoning becomes undecidable if no restriction is enforced.
A lot of work has been made for identifying program classes for which reasoning is decidable~\cite{acfil2011-nonmon30,DBLP:journals/tplp/AlvianoFL10,Calimeri:2011:FRP:2350124.2350130,Lierler:2009:OMD:1604371.1604401,DBLP:journals/tplp/BaseliceB10,DBLP:journals/tplp/GrecoMT13}.

The inherent difficulty with functions in
general  (and arithmetic in particular)  in the framework of ASP is that it makes the Herbrand universe
infinite in whole generality. 
ASP grounders \lparse. ~\cite{lparse} and versions up to 3.0 of \gringo. ~\cite{DBLP:conf/lpnmr/GebserST07} 
accept programs respecting some syntactic domain restrictions 
and are able to deal with some restricted versions of functions.  

\dlv. grounder~\cite{dlvgrounder} and \gringo. (since version 3.0)~\cite{gringo}  only require programs to be safe and can deal with all programs having
a finite instantiation.
\dlv. guarantees  finite instantiation for finitely ground programs but membership in this class is not decidable.
It integrates a Finite Checker module which can check if a program belongs to a sub-class of finitely ground programs (argument-restricted programs).
For programs that are not member of this sub-class, answer sets can be computed without preliminary check but ending is not guaranteed.

\asperix. can deal with these programs and with some other programs whose instantiation is infinite but whose answer sets are finite.
For example, the program~$P_{\ref{ex:exInut}a}$ from Section~\ref{sec:intro} is not finitely ground:
intelligent instantiation of the program must be finite to be finitely ground.
The key points of intelligent instantiation  are that rules are instantiated with atoms appearing in head of rules of the program, 
and simplifications are performed relatively to facts and rule heads of preceding components of the dependency graph.
In example~$P_{\ref{ex:exInut}a}$, choice between $a$ and $b$ makes both possible for the grounder, 
and constraint has no effect on intelligent instantiation of the program.
Thus, the grounding of rules from~$P_{\ref{ex:exInut}a}$ will be the same with or without the constraint \intextrule{}{a}: infinite in both cases.
\asperix. halts on~$P_{\ref{ex:exInut}a}$ and  is thus able to halt on non finitely ground programs
but it is not able to verify in advance if answer sets are finite or not, and thus if computation will end or not.
Nevertheless, ending can be guaranteed by means of command-line options specifying the maximum allowed nesting level for functional terms and
the biggest admissible integer (\dlv. grounder provides similar possibilities).
These restrictions ensure that our computations always converge to an answer set if it exists.
Formalizing the class of programs for which \asperix. halts will be the subject of a forthcoming work.

\section{Experimental results}
\label{sec:experiments}

Following Algorithm~\ref{algo:solve} of Section~\ref{subsec:core},
the solver called \asperix. has been implemented in C++ and is available at
\url{http://www.info.univ-angers.fr/pub/claire/asperix}. 

%
There are two other ASP systems, 
\gasp.~\cite{gasp} and \omiga.~\cite{omiga12}, that
realize the grounding of the program during the search of an answer set. 

\gasp. is an implementation in Prolog and Constraint Logic
Programming over finite domains of the notion of computation 
(see Section  \ref{subsec:computation}).
The main ideas are the same as those of \asperix..
Notable differences are the following.
Well founded consequences of the program are computed first.
Then propagation is close to ours. \gasp. does not deal with must-be-true atoms but two special cases of propagation, not treated by \asperix., are implemented:
(a) if the head of a rule is known to be in OUT set and the body of the rule is satisfied except for one positive literal, then this literal must be false (added to OUT)
and (b) if, for some undefined atom $a$,  there is no applicable rule whose head is $a$, then $a$ can be added to OUT.
For each rule, instantiation and propagation are realized by building and solving a CSP that determines atoms derivable from the rule.
Representation of interpretations uses Finite Domain Sets, such a data structure is efficient  to represent compactly intervals
but it need to code tuples (instances of predicates) by integers (very big integers if domain is large and arity of predicate too).
This representation impose the set of ground terms of the program to be finite and thus function symbols are excluded.
On the other hand, \gasp. supports some cardinality constraints. 
To our knowledge, \gasp. remained at the prototype stage and is no longer developed.

\omiga. is implemented in Java.
Functional symbols (of non-zero arity) are not supported.
Principles of propagation and choice are the same as those of \asperix. but implementation uses Rete algorithm for improving the speed of propagation.
First order rules are represented by a Rete network.
Each node represents a literal (or a set of literals) from the body of a rule or the atom of the head of a rule.
It stores all instances of the node that are true w.r.t. current partial interpretation.
Thereby all partial instantiations of rules are stored in the network.
This lead to an efficient propagation regarding computation time, but memory space is sacrified.
Dependency graph and solved predicates seems to be treated in a similar manner to that of \asperix..
Current version~\cite{omiga13} uses must-be-true propagation and tries to introduce methods for conflict-driven learning of non-ground rules:
when a constraint is violated, a new constraint is built by unfolding of rules whose firing contributes to the conflict.
This learned constraint is then transformed into special rules so as to be used for propagation.

In the following we give some results of evaluation of \asperixv. 
highlighting its adequacy to some particular problems.
It is compared with 
\clingo. (composed by \gringov. and \claspv.) ~\cite{gringo,clasp}, 
\dlvv.~\cite{dlv},
\gaspv.~\cite{gasp}
 and \omigav.~\cite{omiga12}. Version without learning is used for \omiga. because learning lowers its performances.
 All the systems have been run on an Intel Core i7-3520M PC with 4 cores 
at 2.90GHz and about 4GB RAM, running Linux Ubuntu 12.04 64 bits. 
For each instance of a problem, the memory usage is limited to 3.000MB and computation time to 600 seconds.
 RunLim1.7 is used for these limitations tasks. Tables of results use $OoM$ (resp. $OoT$) to indicate Out of Memory (resp. Out of Time).
Results for \gasp. are only given for the first two examples, because it does not accept  other tested programs.

\paragraph{\bf Schur problem}
The Schur number problem is to partition $N$ numbers into $M$ sets such that all of the sets satisfy: if $x$ and $y$ are assigned to the same set, then $x+y$ is not in the set.
The following program~\cite{gasp} is for $M=3$ sets and $N=4$ numbers.

 \centerline{\(
 P_{Schur-4} = \left\{
   \begin{array}{llll}
\outtextfact{number(1)},& \outtextfact{number(2)},& \outtextfact{number(3)},& \outtextfact{number(4)},\\
\outtextfact{part(1)},& \outtextfact{part(2)},& \outtextfact{part(3)},\\
\multicolumn{4}{l}{\outtextrule{inpart(X,1)}{not~inpart(X,2),~not~inpart(X,3),~number(X)},}\\
\multicolumn{4}{l}{\outtextrule{inpart(X,2)}{not~inpart(X,1),~not~inpart(X,3),~number(X)},}\\
\multicolumn{4}{l}{\outtextrule{inpart(X,3)}{not~inpart(X,1),~not~inpart(X,2),~number(X)},}\\
\multicolumn{4}{l}{\leftarrow ~number(X), ~number(Y), ~part(P),}\\
\multicolumn{4}{l}{~~~~inpart(X,P), ~inpart(Y,P), ~inpart(Z,P),}\\
\multicolumn{4}{l}{~~~~T=Y+1,~X < T, ~Z = X+Y.}\\
   \end{array}
 \right\} 
 \)}
 \noindent

The results are shown in Table \ref{tab:res:ex:Schur} for $M=3$.
$AS$ reports the number of answer sets which are all computed.
For all $N\geq 14$, Schur-$N$ has no answer set.

The program is a typical ``guess and check'' program.
The seach space is expressed by the three rules with $inpart$ as head predicate, and constraint eliminates ``bad choices''.
The grounding of the program is rather small but the search space is large.
The problem is very easy for \clingo. and \dlv. but very hard for \asperix. and \gasp..
Systems using grounding on the fly have to repeat instantiation of the same rules in each branch of the search tree.
Moreover, constraints are not efficiently managed by systems like \asperix.:
it does not use constraints for propagation but only checks if a constraint is violated.
Compared to \asperix., \omiga. performs well for computation time, 
certainly because Rete network improve speed of instantiation (partial instantiations are stored in the network) and the network remains relatively small in such an example.
This example illustrates a large class of programs that \asperix. mismanage:
programs with many choices and little propagation by forward chaining.

\begin{table}[!htbp]
  \centering
  \begin{footnotesize}
  \begin{tabular}{llr|c|c|c|c|c|}
    \hline
 	&        &              & \asperix. & \clingo.  & \dlv. & \omiga. & \gasp.\\
    \hline
    \hline
   $N=1$& $AS=3$  & time in sec  &  $<$0.1   & $<$0.1    & $<$0.1 & $<$0.1  & $<$0.1  \\
        &        & memory in MB &   $<$2.0  & $<$2.0    & $<$2.0 & 20.0    &  4.4 \\
    \hline
   $N=2$& $AS=6$  & time in sec  &  $<$0.1   &$<$0.1     & $<$0.1 & $<$0.1  & $<$0.1  \\
        &        & memory in MB &  $<$2.0   & $<$2.0    & $<$2.0 & 20.0    & 5.2  \\
    \hline
   $N=3$& $AS=18$  & time in sec  &  $<$0.1   & $<$0.1    & $<$0.1 & 0.1     & 0.3  \\
        &        & memory in MB &  $<$2.0   & $<$2.0    & $<$2.0 & 20.0    & 6.3  \\
    \hline
   $N=4$& $AS=30$  & time in sec  &  $<$0.1   & $<$0.1    & $<$0.1 & 0.2     & 1.0  \\
        &        & memory in MB &  $<$2.0   & $<$2.0    & $<$2.0 & 24.0    & 8.3  \\
    \hline
   $N=5$& $AS=66$  & time in sec  & $<$0.1    & $<$0.1    & $<$0.1 & 0.3     & 3.6  \\
        &        & memory in MB & $<$2.0    &  $<$2.0   & $<$2.0 & 50.0    & 8.3  \\
    \hline
   $N=6$& $AS=120$  & time in sec  & $<$0.1    & $<$0.1    & $<$0.1 & 0.4     & 11.0  \\
        &        & memory in MB & $<$2.0    &  $<$2.0   & $<$2.0 & 63.0    & 8.3  \\
    \hline
   $N=7$& $AS=258$  & time in sec  &    0.3    & $<$0.1    & $<$0.1 & 0.6     & 41.0  \\
        &        & memory in MB &    2.0    & $<$2.0    & $<$2.0 & 99.0    & 7.5  \\
    \hline
   $N=8$& $AS=288$  & time in sec  &    1.0    & $<$0.1    & $<$0.1 & 0.8     & 113.0  \\
        &        & memory in MB &    2.0    & $<$2.0    & $<$2.0 & 100.0   &  5.7 \\
    \hline
   $N=9$& $AS=546$  & time in sec  &   3.6     & $<$0.1    & $<$0.1 & 1.3    & 370.0  \\
        &        & memory in MB &   2.0     & $<$2.0    &$<$2.0 &  165.0  &  7.5 \\
    \hline
  $N=10$& $AS=300$  & time in sec  &  11.1     &$<$0.1     & $<$0.1 & 1.9    & OoT  \\
        &        & memory in MB &   2.0     & $<$2.0     & $<$2.0 & 220.0  &  - \\
    \hline
  $N=11$& $AS=186$  & time in sec  &  39.7     & $<$0.1    & $<$0.1 & 2.8    & OoT  \\
        &        & memory in MB &   2.2     & $<$2.0     & $<$2.0 & 290.0   &  - \\
    \hline
  $N=12$& $AS=114$  & time in sec  &  131.0    & $<$0.1    & $<$0.1 & 4.1     & OoT  \\
        &        & memory in MB &   2.2     &  $<$2.0    & $<$2.0 & 290.0        & -  \\
    \hline
  $N=13$& $AS=18$  & time in sec  &  448.0    & $<$0.1    & $<$0.1 & 6.5     & OoT  \\
        &        & memory in MB &   2.2     &  $<$2.0    & $<$2.0 & 285.0        &  - \\
    \hline
  $N=14$& $AS=0$  & time in sec  & OoT    & $<$0.1    & $<$0.1 & 11.0     & OoT  \\
        &        & memory in MB &   -     &  $<$2.0    & $<$2.0 & 287.0        &  - \\
    \hline
  \end{tabular}
  \end{footnotesize}

  \caption{Experimental results for Schur}
  \label{tab:res:ex:Schur}
\end{table}


Conversely, the following examples illustrate problems for which grounding on the fly is well adapted.

\paragraph{\bf Birds problem}

Problem \emph{birds} is a stratified program encoding a taxonomy about flying and non flying birds.
$b$ stands for $bird$, $f$ for $flying$, $nf$ for $non flying$, $p$ for $penguin$, $sp$ for $super penguin$, and $o$ for $ostrich$.

 \centerline{\(
 P_{birds} = \left\{
   \begin{array}{lrr}
     \outtextrule{p(X)}{sp(X)},& \outtextrule{b(X)}{p(X)},& \outtextrule{b(X)}{o(X)}, \\
     \multicolumn{2}{l}{\outtextrule{f(X)}{b(X),~ not~ p(X),~ not~ o(X)},}& ~~\outtextrule{f(X)}{sp(X)},\\
     \multicolumn{2}{l}{\outtextrule{nf(X)}{p(X),~ not~ sp(X)},}& \outtextrule{nf(X)}{o(X)}\\
   \end{array}
 \right\} 
 \)}
 \noindent

 We add to this program the atoms encoding
 $N$ birds with $10 \%$ of ostriches, $20 \%$ of penguins whose half
 of them are super penguins.

%

\begin{figure}[h]
  \centering
  \includegraphics[height=12cm, angle=-90]{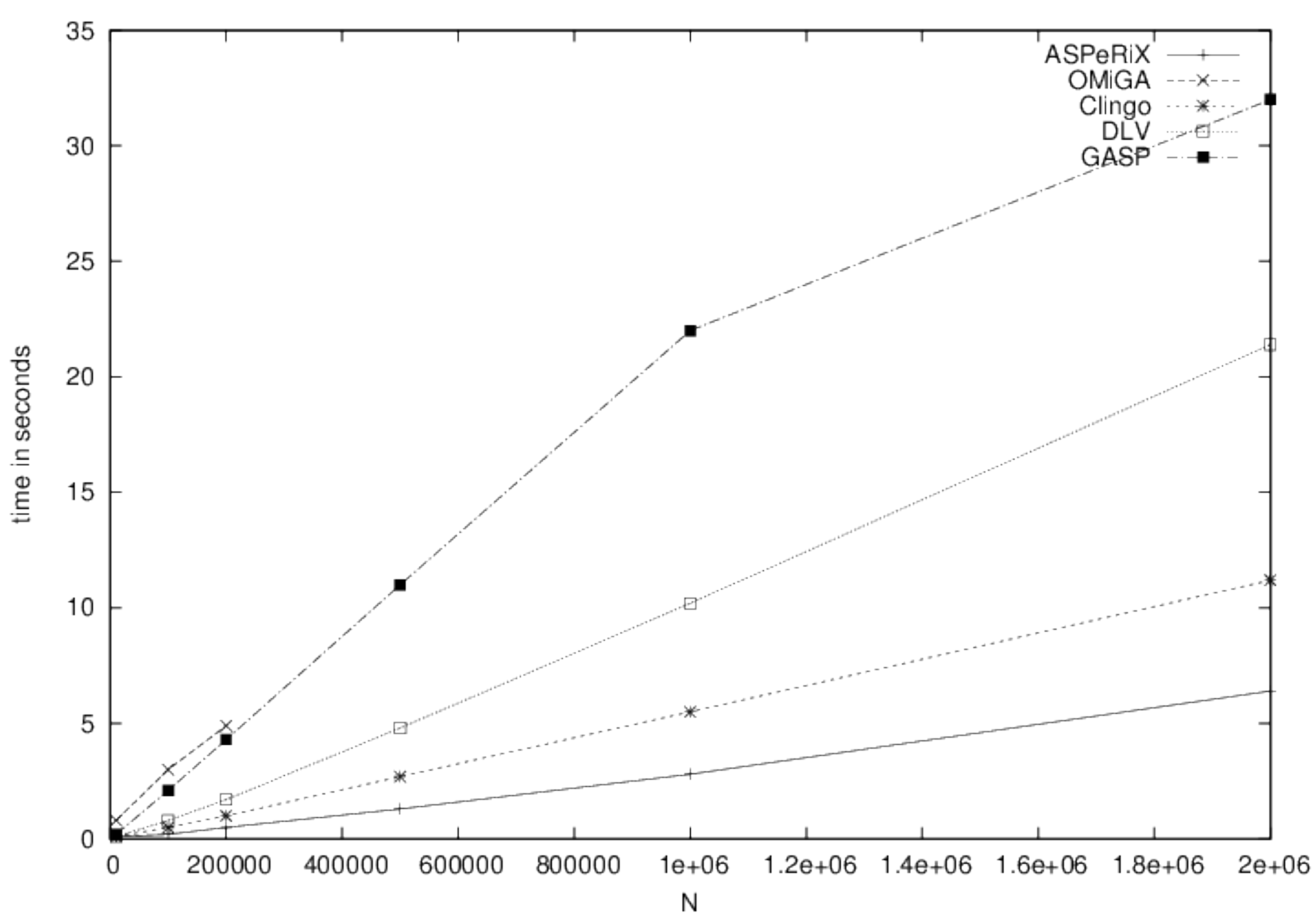}
  \caption{Time for birds}
  \label{fig:res:ex:birds:time}
\end{figure}

\begin{figure}[h]
  \centering
  \includegraphics[height=12cm, angle=-90]{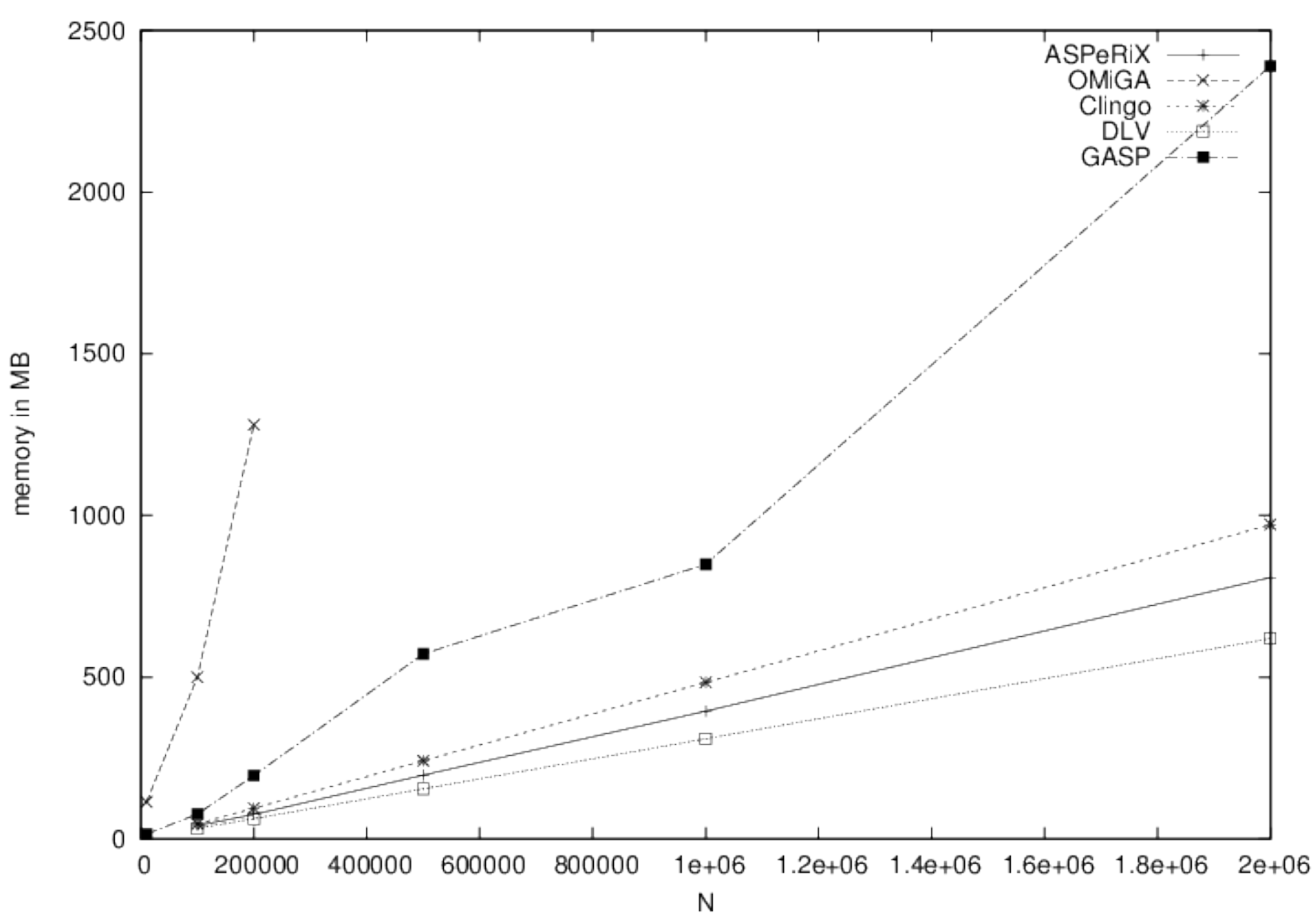}
  \caption{Space for birds}
  \label{fig:res:ex:birds:space}
\end{figure}

The unique answer set of such a program can be computed polynomially.
\asperix. uses only propagation step, without choice point, 
and grounders completely evaluate the program so that the solver has nothing to do.
Experimental results for \emph{birds} are comparable for \asperix., \clingo., \gasp. and \dlv..
\asperix. has the best results for CPU time, and \dlv. for memory usage 
(Figure \ref{fig:res:ex:birds:time} and \ref{fig:res:ex:birds:space}).
For such a problem, the number of instantiated rules must be nearly the same for all systems.
On the other side, \omiga. system uses a very large amount of memory space,
certainly due to the Rete network which is designed to sacrifice memory for increased speed.
Unfortunately, memory gains expected by the first order approach  are lost.

\paragraph{\bf Cutedge problem}

\emph {cutedge} program is proposed in~\cite{omiga12}: given a random graph with 100 vertices
and $N$ edges, each answer set is obtained by deleting an edge and compute some transitive closure
on the remaining edges.

 \centerline{\(
 P_{cutedge} = \left\{
   \begin{array}{l}
      \outtextrule{delete(X,Y)}{edge(X,Y),~not~keep(X,Y)},\\
      \outtextrule{keep(X,Y)}{edge(X,Y),~delete(X1,Y1),~X1 != X},\\
      \outtextrule{keep(X,Y)}{edge(X,Y),~delete(X1,Y1),~Y1 != Y},\\
      \outtextrule{reachable(X,Y)}{keep(X,Y)},\\
      \outtextrule{reachable(X,98)}{reachable(X,Z),~reachable(Z,98)}\\
   \end{array}
 \right\} 
 \)}
 \noindent

\begin{table}[!htbp]
  \centering
  \begin{footnotesize}
  \begin{tabular}{llr|c|c|c|c|c|}
    \hline
	&       &       & \asperix. & \clingo.  & \dlv. & \omiga. \\
    \hline
    \hline
    $N=2.8K$ & $AS=1$ & time  in sec &  $<$0.1  &  21 &  115  &  0.6 \\
		&    & memory in MB &  14.8  &  345 &  103  &  85  \\
    \hline
    & $AS=10$ & time  in sec &  0.2   &  21 &  226  &  1.8 \\
		  &    & memory in MB & 14.8  &  345 &  103  &  200 \\
    \hline
    & $AS=100$ & time  in sec &  2.7  &   32 &  OoT  &  11.4 \\
		&      & memory in MB  & 15.1  &  345 &   -   &  1042 \\
    \hline
    & $AS=500$ & time  in sec &  16    &   78 &   OoT    &  48 \\
		&      & memory in MB  &  16.4  &  345 &  -   &  1050 \\
    \hline
    & $AS=1000$ & time  in sec &  36    &   123 &   OoT   &  84 \\
		&	& memory in MB &  18.0  &  345  &  -   &  1050 \\
    \hline
    & $AS=all$ & time  in sec & 167  &   189 &   OoT    &  165 \\
		 &    & memory in MB &  24.1 &  345  &  -    &  1050  \\
    \hline
    \hline
    $N=4.9K$ & $AS=1$ & time  in sec &  0.1   & 60  & 325  &   1   \\
		&    & memory in MB &  23.7  &  881 & 144  &  177    \\
    \hline
    & $AS=10$ & time  in sec &  0.5   &  63  & OoT  &   3.7   \\
		 &     & memory in MB &  23.8  &  881 &  -   &  425  \\
    \hline
    & $AS=100$ & time  in sec &   5.8   & 94   &   OoT  &  29   \\
		&      & memory in MB  &  24.0   &  881 &  -  &  1100     \\
    \hline
    & $AS=500$ & time  in sec &  30.7  & 228  &  OoT   &   125   \\
		&      & memory in MB  &  25.3  &  881 & -   &   1150     \\
    \hline
    & $AS=1000$ & time  in sec &  67  & 373  &  OoT   &   245   \\
		&	& memory in MB &  27  &  881 & -   &   1190     \\
    \hline
    \hline
     $N=5.9K$ & $AS=1$ & time  in sec &   0.1   & 94   & 465  &   1   \\
		&      & memory in MB &  28.5  &  1167 & 202  &  132  \\
    \hline
    & $AS=10$ & time  in sec & 0.8  & 94    & OoT &  5  \\
		&      & memory in MB & 28.6 &  1168 &  -  & 680 \\
    \hline
    & $AS=100$ & time  in sec &  7.6  &  114  &  OoT   &  41  \\
		&      & memory in MB  & 29.1 &  1168  &  -  &  1135   \\
    \hline
    & $AS=500$ & time  in sec &  42   & 210  &  OoT   &  192   \\
		&      & memory in MB  & 31.3 &  1168 & -   &   1125  \\
    \hline
    & $AS=1000$ & time  in sec &  92    & 316   &  OoT   &  352  \\
		&	& memory in MB &  34.1  &  1168 &  -  &   1132  \\
    \hline
  \end{tabular}
  \end{footnotesize}

  \caption{Experimental results for cutedge}
  \label{tab:res:ex:cutedge}
\end{table}

%
%

Computing each answer set is only based on propagation, and the number of answer sets equals the number of edges.
The number of rules needed to compute all answer sets is proportional to $N^2$ 
while the rule number needed to compute one is proportional to $N$.
But systems with pregrounding phase must generate all ground instances of rules
even if only one answer set is required.
The results are shown in Table \ref{tab:res:ex:cutedge}.
\asperix. has the best results for this program both for CPU time and memory usage.
\omiga. and \clingo. use much more memory and are much slower than \asperix..
As expected, memory usage of \clingo. is independent of the number of answer sets required 
and is close to the square of that used by \asperix..
For its part, \dlv. quickly exceeds the time limit imposed.
\paragraph{\bf Hamiltonian cycle problem}

The program $P_{\ref{ex:hamcirccompgraph}}$ (see Example~\ref{ex:hamcirccompgraph}),
Hamiltonian cycle in a complete graph,
  is another easy problem with a lot of answer sets.
Each answer set is easy to compute but the whole instantiation is huge.
Experiments for the computation of one answer set in a graph with $N$ vertices are represented in Figures \ref{fig:res:ex:hamiltonian:time} and \ref{fig:res:ex:hamiltonian:space}. 
\asperix. performs well on this example whereas \omiga. has time and memory problems 
similar to that of \clingo. and \dlv..
One more time, a simple problem becomes intractable by systems with pregrounding phase 
because they drown it in a lot of useless information so that memory used quickly becomes prohibitive.

%


\begin{figure}[h]
  \centering
  \includegraphics[height=12cm, angle=-90]{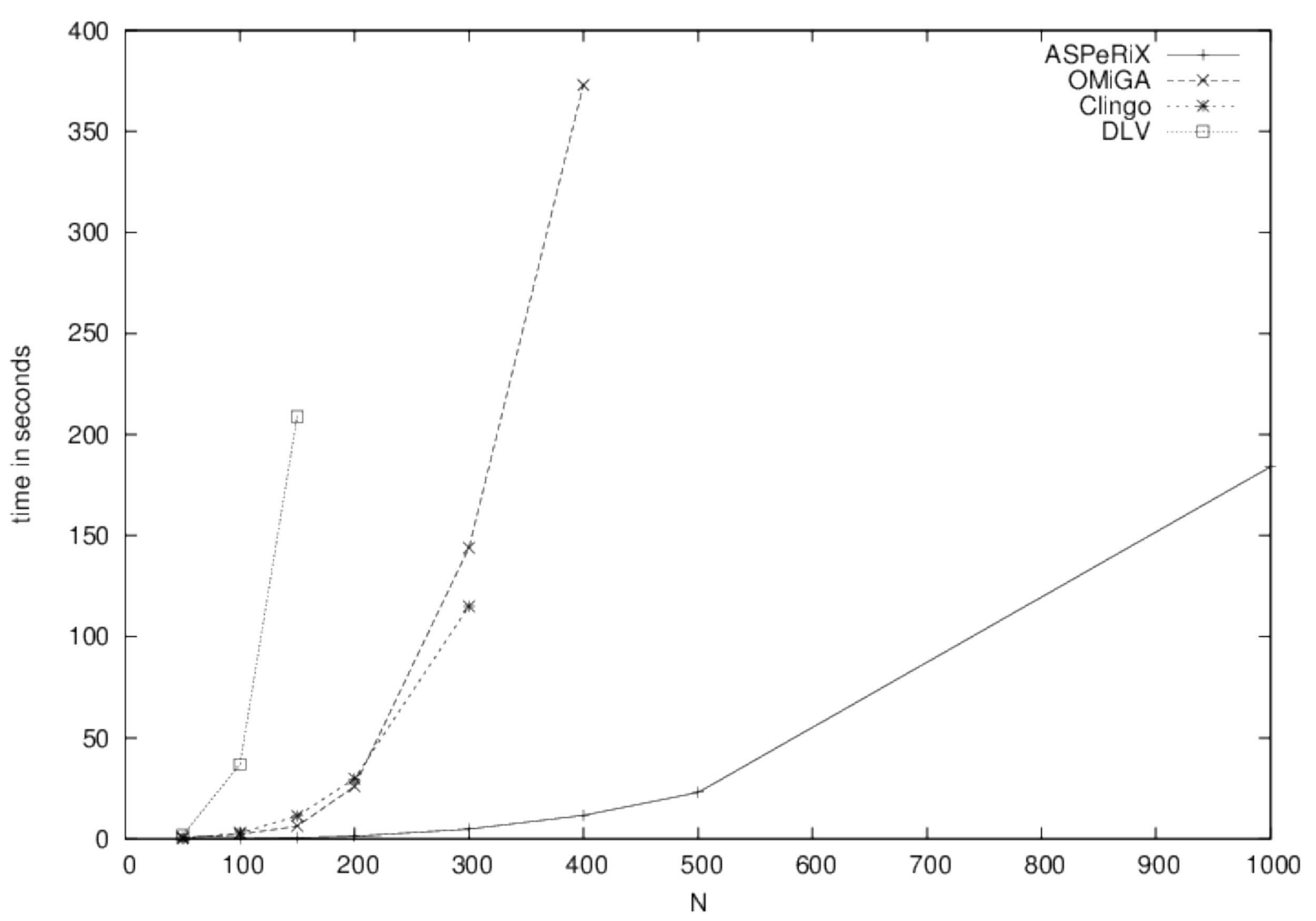}
  \caption{Time for Hamiltonian cycle}
  \label{fig:res:ex:hamiltonian:time}
\end{figure}

\begin{figure}[h]
  \centering
  \includegraphics[height=12cm, angle=-90]{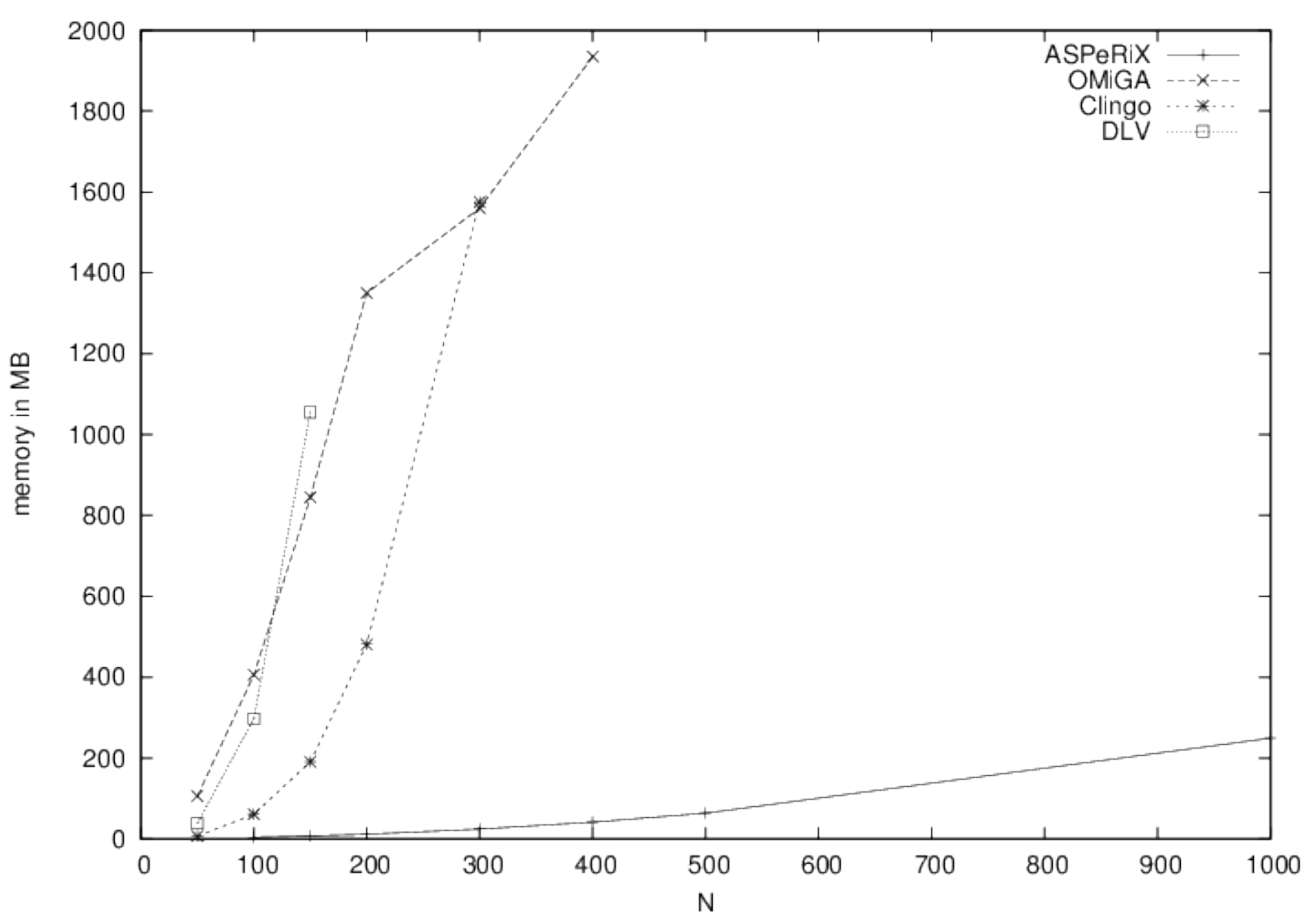}
  \caption{Space for Hamiltonian cycle}
  \label{fig:res:ex:hamiltonian:space}
\end{figure}

\paragraph{\bf Hanoi problem}

 \emph{Hanoi} example illustrates a planning problem 
where the maximum number of allowed steps is given as input.
$NbD$ is the number of disks in the problem and 
$NbM$ is the maximum number of moves that are allowed to move all disks from the first rod to the third.
The least value of $NbM$ is the minimum required to achieve the goal,
then its value is gradually increased to evaluate its impact.
The complete program is given in~\ref{sec:Hanoi} and experimental results are shown in Table \ref{tab:res:ex:hanoi}.
\asperix. performances are (almost) independent of the given number of moves: search, and therefore grounding, are stopped when a solution is found. Conversely, grounders are quickly overwhelmed as they are obliged to fully instantiate the program with all hypothetical (and unnecessary in this case) calculation steps\footnote{
\iclingo.~\cite{iclingo} was created to address this specific problem. 
Some directives are added to the program in order to incrementally instantiate some predicates of the program.
But it does not escape the grounding/solving separation, it only introduces some tools to control the process.
}.
This example cannot be computed by \omiga. due to restrictions on the input language it accepts (function symbols are not supported). 

\begin{table}[!htbp]
  \centering
  \begin{footnotesize}
  \begin{tabular}{llr|c|c|c|}
    \hline
            &          & & \asperix. & \clingo.  & \dlv. \\
    \hline
    \hline
    $NbD=4$ & $NbM=15$ & time in sec &   $<$0.1   &   $<$0.1  &    $<$ 0.1   \\
		      & & memory in MB &   -     &   -    &     -    \\
    \cline{4-6}
	    & $NbM=60$ & time in sec &   $<$0.1    &  0.7   &     0.8     \\
		      & & memory in MB &   -     &   27   &     22     \\
    \cline{4-6}
	    & $NbM=100$ & time in sec &   $<$0.1   & 1.5   &     3.6     \\
		      & & memory in MB &   -     &   54   &     51     \\
    \cline{4-6}
	    & $NbM=500$ & time in sec &   $<$0.1   &  11.6  &     -     \\
		      & & memory in MB &   -     &  327   &     OoM    \\
    \cline{4-6}
	    & $NbM=1000$ & time in sec &   $<$0.1  &   27   &   -      \\
		      & & memory in MB &   -     &  693   &    OoM     \\
    \cline{4-6}
	    & $NbM=2000$ & time in sec &   $<$0.1  &  66   &    -      \\
		      & & memory in MB &   -     &  1523  &    OoM      \\
    \cline{4-6}
	    & $NbM=5000$ & time in sec &   $<$0.1   &    -   &   -       \\
		      & & memory in MB &   -   &  OoM  &    OoM      \\
    \cline{4-6}
	    & $NbM=10000$ & time in sec &  $<$0.1  &  -     &   -       \\
		      & & memory in MB &   -   &   OoM    &     OoM     \\
    \cline{4-6}
	    & $NbM=50000$ & time in sec &   0.1  &  -     &    -     \\
		      & & memory in MB &   12.9   &  OoM    &  OoM        \\
    \cline{4-6}
	    & $NbM=100000$ & time in sec &  0.3 &  -    &    -      \\
		      & & memory in MB  &   23.9  &  OoM    &   OoM       \\
    \hline
    \hline
    $NbD=5$ & $NbM=31$ & time in sec &   0.2    &   0.1   &     0.1   \\
		      & & memory in MB &  3.7   &   6.4   &     4     \\
    \cline{4-6}
	    & $NbM=50$ & time in sec &   0.2    &   0.7   &     0.9     \\
		      & & memory in MB &   3.7  &    28   &     31     \\
    \cline{4-6}
	    & $NbM=100$ & time in sec &   0.2   &   8.2   &     9.4     \\
		      & & memory in MB &   3.8  &   245   &     270     \\
    \cline{4-6}
	    & $NbM=500$ & time in sec &   0.2   &    91   &     -     \\
		      & & memory in MB &   3.8  &  2055   &     OoM    \\
    \cline{4-6}
	    & $NbM=1000$ & time in sec &   0.2  &     -   &    -      \\
		      & & memory in MB &   3.8   &   OoM  &     OoM     \\
    \cline{4-6}
	    & $NbM=5000$ & time in sec &   0.2    &   -    &      -   \\
		      & & memory in MB &   4.8   &   OoM     &    OoM       \\
    \cline{4-6}
	    & $NbM=10000$ & time in sec &   0.2  &   -    &      -    \\
		      & & memory in MB &   5.8    &   OoM    &    OoM       \\
    \cline{4-6}
	    & $NbM=50000$ & time in sec &   0.4  &   -    &      -    \\
		      & & memory in MB &   14.4   &   OoM     &    OoM       \\
    \cline{4-6}
	    & $NbM=100000$ & time in sec &   0.7   &   -   &     -     \\
		      & & memory in MB &   25     &   OoM    &     OoM      \\
    \hline
    \hline
    $NbD=6$ & $NbM=63$ & time in sec &   4.7    &     0.6  &    1.1   \\
		      & & memory in MB &   10.3  &      24  &     29    \\
    \cline{4-6}
	     & $NbM=100$ & time in sec &   4.7   &     9  &     -     \\
		      & & memory in MB &   10.3   &    272  &    OoM     \\
    \cline{4-6}
	    & $NbM=150$ & time in sec &   4.7     &     83  &    -      \\
		      & & memory in MB &   10.3   &   1863   &   OoM       \\
    \cline{4-6}
	    & $NbM=200$ & time in sec &   4.7     &     -   &    -      \\
		      & & memory in MB &   10.3   &    OoM   &   OoM       \\
    \cline{4-6}
	    & $NbM=500$ & time in sec &   4.7     &   -      &    -      \\
		      & & memory in MB &   10.4   &   OoM      &   OoM      \\
    \cline{4-6}
	    & $NbM=1000$ & time in sec &   4.7    &   -      &    -      \\
		      & & memory in MB &   10.5    &  OoM      &    OoM     \\
    \cline{4-6}
	    & $NbM=5000$ & time in sec &   4.7     &   -     &    -      \\
		      & & memory in MB &   11.3    &  OoM      &     OoM     \\
    \cline{4-6}
	    & $NbM=10000$ & time in sec &   4.8     &   -     &     -     \\
		      & & memory in MB &   12.4    &   OoM      &    OoM     \\
    \cline{4-6}
	    & $NbM=50000$ & time in sec &   5   &     -    &     -     \\
		      & & memory in MB &   21     &   OoM      &   OoM       \\
    \cline{4-6}
	    & $NbM=100000$ & time in sec &   5.6   &    -     &     -     \\
		      & & memory in MB &   31.7     &   OoM      &   OoM       \\
    \hline
  \end{tabular}
  \end{footnotesize}
  \caption{Experimental results for Hanoi tower problem}
  \label{tab:res:ex:hanoi}
\end{table}
\paragraph{\bf Three coloring problem}

The program  $P_{\ref{ex:3color}}$ (see Example~\ref{ex:3color}), 3-coloring problem on a graph organized as a
bicycle wheel, 
poses no problem for \clingo. and \dlv. (cf. Table \ref{tab:res:ex:3col}).
But \asperix. and \omiga. have bad results on this example
because they are mismanaging constraints.
Once a vertex is colored, say $red$, 
constraint \intextrule{}{e(V,U), col(V, C), col(U,C)} prohibits coloring adjacent vertices of the same color.
In propositional systems, unit propagation (or equivalent) works well and 
allows to infer that adjacent vertices are not colored red.
But first-order approach does not allow, in general case, to use unit propagation
and thus, constraints are mainly used for verification and not for propagation.
A lot of work remains on these points.
First-order constraints could instead allow more powerful propagation.
Suppose for example a constraint \intextrule{}{p(X,Y), p(Y,Z)}
and $p(1,2)$ is added in \IN. set
then, for all $Z$, $p(2,Z)$ can be excluded at once from current solution,
even if $Z$ values are potentially infinite.
But these opportunities are not exploited yet.

\begin{table}[!htbp]
  \centering
  \begin{footnotesize}
  \begin{tabular}{lll|c|c|c|c|}
    \hline
    &           &  & \asperix. & \clingo.  & \dlv. & \omiga. \\
    \hline
    \hline
    $N=11$ & $AS=1$ & time  in sec & $<$0.1 	&  $<$0.1	& $<$0.1	&  0.3 \\
		 &  & memory in MB	& $<$2	& 1 	& $<$1	&  45  	\\
    \hline
    $N=11$ & $AS=all(6)$  & time  in sec &     3.4 &   $<$0.1 	&  $<$0.1  &  7.7 \\
		 &  & memory in MB &     1.8 & 1 	&   $<$1    &  132  \\
    \hline
    $N=101$ & $AS=1$ & time  in sec  & $<$0.1  & $<$0.1 &  $<$0.1  &  OoT \\
		 & & memory in MB &  3    & 1.5    &  1.1    &  -       \\
    \hline
    $N=101$ & $AS=all(6)$ & time  in sec  &  OoT   & $<$0.1 &  $<$0.1   &  OoT\\
		 & & memory in MB  & -    &  1.8   &   1.4  &   -      \\
    \hline
    $N=501$ & $AS=1$ 	& time  in sec &   1.6   &  $<$0.1 &  $<$0.1 &  OoT \\
		 &  & memory in MB    	&   8.8   &  3.3   &   3.3  &  -  \\
    \hline
    $N=501$ & $AS=all(6)$ & time  in sec  &  OoT   &  $<$0.1 &  0.3  & OoT \\
		 & & memory in MB &  -   & 3.3    &  3.3  &  - \\
    \hline
    $N=1001$ & $AS=1$   & time  in sec  &  13.3  & $<$0.1 & 0.1  &   OoT   \\
		 & & memory in MB    &  15.9  & 5.5    & 5.5    &  -  \\
    \hline
    $N=1001$ & $AS=all(6)$ & time  in sec  &  OoT & $<$0.1 & 1.4  &  OoT   \\
		 & & memory in MB &  - &  5.5   & 5.5  &   -  \\
    \hline
  \end{tabular}
  \end{footnotesize}

  \caption{Experimental time results for 3col}
  \label{tab:res:ex:3col}
\end{table}

To sum up, \asperix. is efficient to deal with stratified programs or 
simple problems whose instantiation is infinite or huge
 but much of which is useless to compute one specific answer set.
On the other hand, the system is not competitive for more  combinatorial problems, 
with a large search space and few solutions,
because propositional methods for propagation, heuristics, learning lemmas
did not apply to the first order case.

\section{Conclusion}
\label{sec:conclusion}
In this paper, we have presented the \asperix. approach to answer set computation.
Our methodology deals with first order rules following a forward chaining
with grounding process realized on the fly and has been implemented in
the ASP solver \asperix..
This paper is the first comprehensive document in which a survey of the 
important techniques relevant to our approach is presented.

Starting from a short description of state-of-the-art ASP working principle, we have  presented by many 
examples the main motivation of our approach: escaping the bottleneck of the preliminary 
phase of grounding in which many state-of-the-art systems fall.
After a presentation of the theoretical foundations of ASP,
we have described by an \asperix. computation our first order forward chaining approach 
for answer set computing and have established the soundness and completeness of this calculus w.r.t. the semantics of ASP
(Proofs are reported in \ref{sec:proofs}).
We have then described in details the main algorithms of \asperix. and 
particularly those which realize the selection of the first order rules to be instantiated and 
applied  according to the current answer set in construction.

Our methodology allows very
good performances for definite and stratified programs.
It outperforms systems with a pregrounding phase for programs with large grounding 
but much of it is unnecessary to solve the problem.
On the other side, performances quickly degrade for combinatorial problems with large search spaces, 
especially if forward chaining propagation can not be exploited.

We have shown that our approach escapes the bottleneck of the preliminary 
phase of grounding that is the only difficulty for some classes of programs.
A direct consequence of our new approach is that the use of symbolic
functions in general and arithmetic calculus in particular inside ASP
is greatly facilitated. 


The forward chaining with the grounding process realized on the fly as an operational semantics 
emphasizes the programming aspect of ASP in which the answer set is not only the result of a black
box but the result of a process that may be followed.
This is interesting when dealing with knowledge coming from the web and expressed in description
logic since the structure of information uses rules that are chained ones with the others (whereas this
is not always the case for a program encoding a combinatorial problem).
Moreover, when dealing with knowledge expressed in description logic, one important
issue is the ability to query the knowledge base. The grounding process realized on the fly
will then allow to focus only on the rules useful to find an answer to the query.
For this category of programs, we think that our approach may be of great interest.

Furthermore, computing the answer sets of a program is a fundamental goal but not an exclusive one. 
Debugging a program, controlling its behavior, introducing in it some features coming from other programming languages may be of great interest for ASP. 
We think that our methodology of answer set computing, guided by the rules of the program, is the good starting point 
towards these new goals.

The \asperix. project is still in progress.
Improvements at the algorithmic level are underway by the development and implementation 
of backjumping and clause learning techniques.
On the other hand, we plan to fully respect the core language ASP~\cite{ASPcompetition2014} by introducing, among others,  minimization / maximization
 and aggregates
and extend it by introducing existentially quantified variables in multi-head rules to encode fragments 
of Description Logics which are logical formalisms for ontologies and the Semantic Web.

\section*{Tribute}

In memory of the late Pascal Nicolas who was at the origin of this work. He sadly 
passed away in 2010 but his enthusiasm, his passion for research and his 
great humanity are still with us.

\bibliography{ref}

\appendix
\section{Hanoi example}
\label{sec:Hanoi}
The following ASP program is the $Hanoi$ example with 4 discs.\\

{\ttfamily
\%------ Initial settings

number\_of\_moves(10000).

largest\_disc(4).\\

\%------ Initial state

initial\_state(towers(l(4,l(3,l(2,l(1,nil)))),nil,nil)).\\

\% ------ Goal state

goal(towers(nil, nil, l(4,l(3,l(2,l(1,nil)))))).\\

\% ------ all discs involved ------

disc(1..4).\\

\% ------ legal stacks ------

legalStack(nil).

legalStack(l(T,nil)) :- disc(T).

legalStack(l(T,l(T1,S))) :- legalStack(l(T1,S)), disc(T), T > T1.\\

\% ------ possible moves ------

possible\_state(0,towers(S1,S2,S3))

\hspace{1 cm}:- initial\_state(towers(S1,S2,S3)),

\hspace{1 cm}legalStack(S1), legalStack(S2), legalStack(S3).

possible\_state(I,towers(S1,S2,S3))

\hspace{1 cm}:- possible\_move(I,T,towers(S1,S2,S3)).\\

\% From stack one to stack two.

possible\_move(J,towers(l(X,S1),S2,S3),towers(S1,l(X,S2),S3))

\hspace{1 cm}:-  possible\_state(I,towers(l(X,S1),S2,S3)),

\hspace{1 cm}number\_of\_moves(N), I<=N, legalStack(l(X,S2)), J=I+1, not ok(I).\\

\% From stack one to stack three.

possible\_move(J,towers(l(X,S1),S2,S3),towers(S1,S2,l(X,S3)))

\hspace{1 cm}:-  possible\_state(I,towers(l(X,S1),S2,S3)),

\hspace{1 cm}number\_of\_moves(N), I<=N, legalStack(l(X,S3)), J=I+1, not ok(I).\\

\% From stack two to stack one.

possible\_move(J,towers(S1,l(X,S2),S3),towers(l(X,S1),S2,S3))

\hspace{1 cm}:-  possible\_state(I,towers(S1,l(X,S2),S3)),

\hspace{1 cm}number\_of\_moves(N), I<=N, legalStack(l(X,S1)), J=I+1, not ok(I).\\

\% From stack two to stack three.

possible\_move(J,towers(S1,l(X,S2),S3),towers(S1,S2,l(X,S3)))

\hspace{1 cm}:-  possible\_state(I,towers(S1,l(X,S2),S3)),

\hspace{1 cm}number\_of\_moves(N), I<=N, legalStack(l(X,S3)), J=I+1, not ok(I).\\

\% From stack three to stack one.

possible\_move(J,towers(S1,S2,l(X,S3)),towers(l(X,S1),S2,S3))

\hspace{1 cm}:-  possible\_state(I,towers(S1,S2,l(X,S3))),

\hspace{1 cm}number\_of\_moves(N), I<=N, legalStack(l(X,S1)), J=I+1, not ok(I).\\

\% From stack three to stack two.

possible\_move(J,towers(S1,S2,l(X,S3)),towers(S1,l(X,S2),S3))

\hspace{1 cm}:-  possible\_state(I,towers(S1,S2,l(X,S3))),

\hspace{1 cm}number\_of\_moves(N), I<=N, legalStack(l(X,S2)), J=I+1, not ok(I).\\

\%------ actual moves ------

\% a solution exists if and only if there is a "possible\_move"

\% leading to the goal.

\% in this case, starting from the goal, we proceed backward

\% to the initial state to single out the full set of moves.\\

\% Choose from the possible moves.

move(I,towers(S1,S2,S3))

\hspace{1 cm}:- goal(towers(S1,S2,S3)), possible\_state(I,towers(S1,S2,S3)).

ok(I) :- move(I,towers(S1,S2,S3)), goal(towers(S1,S2,S3)).

win :- ok(I).

:- not win.\\

move(J,towers(S1,S2,S3))

\hspace{1 cm}:- move(I,towers(A1,A2,A3)),

\hspace{1 cm}possible\_move(I,towers(S1,S2,S3),towers(A1,A2,A3)), J=I-1,

\hspace{1 cm}not nomove(J,towers(S1,S2,S3)).\\

nomove(J,towers(S1,S2,S3))

\hspace{1 cm}:- move(I,towers(A1,A2,A3)),

\hspace{1 cm}possible\_move(I,towers(S1,S2,S3),towers(A1,A2,A3)), J=I-1,

\hspace{1 cm}not move(J,towers(S1,S2,S3)).\\

\%------ precisely one move at each step ------

moveStepI(I) :- move(I,T).\\

:- legalMoveNumber(I), ok(J), I<J, not moveStepI(I).\\

:- legalMoveNumber(I), move(I,T1), move(I,T2), T1!=T2.\\

legalMoveNumber(0).\\

legalMoveNumber(K)

\hspace{1 cm}:- legalMoveNumber(I), number\_of\_moves(J), I < J, K=I+1.\\

\#hide.

\#show move/2.
}

\section{Proofs}
\label{sec:proofs}
%

\subsection{Proof of Theorem \ref{the:AnswersetGR}}
\label{subsec:proofGREnrAS}
\begin{proof} (of Theorem \ref{the:AnswersetGR})
\label{proof:GREnrAS}
Let $P$ be a normal logic program and $X \subseteq  \mathcal{A}$. Let us note first that if $GR_P(X)$
is grounded then $Cn({GR_P(X)}^\emptyset) = \HEAD.(GR_P(X))$.

If $X$ is an answer set of
$P$ then, by a theorem from (Konczak et al. 2006), $GR_P(X)$ is grounded and, also according
to (Konczak et al. 2006), $X = Cn({GR_P(X)}^\emptyset)$. Since $Cn({GR_P(X)}^\emptyset) = \HEAD.(GR_P(X))$,
we can conclude $X = \HEAD.(GR_P(X))$.

Let us now suppose that $X = \HEAD.(GR_P(X))$ and $GR_P(X)$ is grounded. We have 
$Cn({GR_P(X)}^\emptyset) = \HEAD.(GR_P(X))$, then $X = Cn({GR_P(X)}^\emptyset)$ and,
according to (Konczak et al. 2006), $X$ is an answer set of $P$.

\end{proof}

\subsection{Proof of Theorem \ref{the:Aspcomp}}
\label{subsec:proofAspcomp}
We first give some material needed in the proof. 
Auxiliary Lemma~\ref{lem:Asperixenumeration} is used in the proof of Lemma~\ref{lem:Asperixcomp1}. Lemmas \ref{lem:Asperixcomp1} and \ref{lem:Asperixcomp2} establish completeness and correctness.

Lemma~\ref{lem:Asperixenumeration} shows that the generating rules of a program can be ordered so as to correspond to the order of application of rules in an \asperix.
 computation.
 Condition (1) says that a rule used at step $i$ is supported at this step.
 Condition (2) says that if a rule is a member of $\Delta_{pro}$ at step $i$ but is used at a later stage $j$, then all rules used at steps between $i$ and $j$ are members of $\Delta_{pro}$ at step $i$. In other words, condition (2) says that propagation is entirely completed before making a choice.

\begin{lemme}
\label{lem:Asperixenumeration}
Let $P$ be a normal logic program and $X$ be  an answer set of $P$.  
Then,  there exists an enumeration ${\langle r_i \rangle}_{i \in [1..n]}$ of $GR_P(X)$, the set of generating rules of $X$, such that 
for all $i \in [1..n]$ the following two conditions are satisfied:
\begin{enumerate}[label=(\arabic*)]
\item $\BODY.^+(r_i) \subseteq \HEAD.(\{r_k \mid k<i\})$
\item for all $j>i$, if $\BODY.^+(r_j) \subseteq \HEAD.(\{r_k \mid k<i\})$ and $\BODY.^-(r_j) \subseteq \BODY.^-(\{r_k \mid k<i\})$
	then $\BODY.^-(r_i) \subseteq \BODY.^-(\{r_k \mid k<i\})$.
\end{enumerate}
\end{lemme}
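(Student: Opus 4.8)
The plan is to build the enumeration $\langle r_i \rangle_{i\in[1..n]}$ of $GR_P(X)$ incrementally, choosing $r_i$ from among the rules not yet placed, and at each stage respecting the priority ``propagation before choice''. Concretely, at stage $i$ suppose $r_1,\dots,r_{i-1}$ have been chosen, and set $H_{i-1}=\HEAD.(\{r_k\mid k<i\})$ and $N_{i-1}=\BODY.^-(\{r_k\mid k<i\})$. Among the remaining generating rules I partition them into those that are ``propagation-ready now'', namely $r$ with $\BODY.^+(r)\subseteq H_{i-1}$ and $\BODY.^-(r)\subseteq N_{i-1}$ (these are the rules that would be supported and unblocked by what has been fired so far without any new choice), and the rest. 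If the first class is non-empty, pick $r_i$ from it; otherwise pick $r_i$ from the remaining rules subject only to $\BODY.^+(r_i)\subseteq H_{i-1}$. The two invariants to maintain are exactly conditions (1) and (2) in the statement; (1) is immediate from the selection rule, and (2) holds because whenever a rule $r_j$ that is ``propagation-ready'' at stage $i$ is deferred, that can only happen when $r_i$ was itself chosen from the propagation-ready class, i.e.\ $\BODY.^-(r_i)\subseteq N_{i-1}$.

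First I would need to check that the construction never gets stuck, i.e.\ that at every stage $i\le n$ there is \emph{some} remaining generating rule $r$ with $\BODY.^+(r)\subseteq H_{i-1}$. This is where Theorem~\ref{the:AnswersetGR} enters: since $X$ is an answer set, $GR_P(X)$ is grounded and $X=\HEAD.(GR_P(X))$. Groundedness gives an enumeration of all of $GR_P(X)$ in which every rule's positive body is contained in the heads of strictly earlier rules; so if $H_{i-1}\subsetneq X$ (which holds as long as not all generating rules have been placed, using $X=\HEAD.(GR_P(X))$ and the fact that each chosen rule's head lies in $X$), then taking the $\subseteq$-minimal not-yet-placed rule in that fixed grounded enumeration yields a rule whose positive body is a subset of $H_{i-1}$. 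Hence the selection is always possible and the process produces a full enumeration of length $n=|GR_P(X)|$.

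Then I would verify condition (2) carefully. Fix $i$ and $j>i$ and assume $\BODY.^+(r_j)\subseteq H_{i-1}$ and $\BODY.^-(r_j)\subseteq N_{i-1}$; I must show $\BODY.^-(r_i)\subseteq N_{i-1}$. At stage $i$, the rule $r_j$ was among the remaining rules and satisfied precisely the membership conditions defining the ``propagation-ready now'' class, so that class was non-empty; by the selection rule $r_i$ was then taken from that class, which forces $\BODY.^-(r_i)\subseteq N_{i-1}$ (and also $\BODY.^+(r_i)\subseteq H_{i-1}$). That is exactly the conclusion of (2).

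The main obstacle I anticipate is the ``no stuck state'' argument, i.e.\ ensuring that the greedy priority choice (always taking a propagation-ready rule when one exists) does not paint us into a corner where a generating rule with already-available positive body is blocked forever; the resolution is to note that the positive-body condition $\BODY.^+(r)\subseteq H_{i-1}$ for \emph{some} remaining rule is guaranteed purely by groundedness of $GR_P(X)$ (Definition~\ref{def:Groundedrules}) applied to the set of rules not yet chosen — the negative-body condition only affects \emph{which} rule we pick, never \emph{whether} a pick exists. A secondary point to be careful about is that every chosen $r_i$ is genuinely a member of $GR_P(X)$ and that the heads accumulate correctly so that $H_n=X$; both follow from $X=\HEAD.(GR_P(X))$ together with the observation that we only ever place rules from $GR_P(X)$ and place all $n$ of them.
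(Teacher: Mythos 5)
Your proof is correct, but it follows a genuinely different route from the paper's. The paper first invokes Theorem~\ref{the:AnswersetGR} to obtain a grounded enumeration of $GR_P(X)$, so that condition (1) holds outright, and then repairs condition (2) by recursively swapping a rule $r_i$ violating (2) with a later rule $r_j$ witnessing the violation. You instead build the enumeration greedily from scratch, always preferring a ``propagation-ready'' rule (positive body contained in the accumulated heads $H_{i-1}$ and negative body in the accumulated negative bodies $N_{i-1}$) and otherwise picking any remaining rule whose positive body lies in $H_{i-1}$; groundedness of $GR_P(X)$ is used only to show that such a pick always exists. What your version buys is that condition (2) holds by construction, and it sidesteps a point the paper's swap argument leaves implicit: after exchanging ranks $i$ and $j$ one still has to check that condition (1) survives at the intermediate ranks $i<k<j$, since the rule pushed back to rank $j$ may have been supplying atoms to the positive bodies of rules between $i$ and $j$. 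Your ``no stuck state'' argument --- take the first not-yet-placed rule of a fixed grounded enumeration; all its predecessors are already placed, so its positive body lies in $H_{i-1}$ --- is exactly right, and the verification of (2) (if some deferred $r_j$ is propagation-ready at stage $i$ then the propagation-ready class was nonempty, so $r_i$ was drawn from it and $\BODY.^-(r_i)\subseteq N_{i-1}$) is sound. The only cosmetic addition worth making is the one-line remark that every $\HEAD.(r_k)$ lies in $X$, which you already note follows from $X=\HEAD.(GR_P(X))$, so the process indeed terminates after placing all $n$ rules.
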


\begin{proof} (of Lemma~\ref{lem:Asperixenumeration})
\label{proof:Asperixenumeration}
Let $P$ be a normal logic program and $X$ be  an answer set of $P$.  
By Theorem \ref{the:AnswersetGR},  there exists an enumeration ${\langle r_i \rangle}_{i \in [1..n]}$ of $GR_P(X)$ such that 
 $\forall i \in [1..n]$, $ \BODY.^+(r_i) \subseteq \HEAD.(\{r_k \mid k<i\})$, i.e. such that condition (1) is satisfied.
This enumeration can be recursively modified in the following way in order to verify condition (2).
For each $i \in  [1..n]$, if $r_i$ satisfies (2) then $r_i$ remains at rank $i$, else there exists $r_j$ with $j>i$ that falsifies condition (2). 
In this last case, it suffices to swap the two rules in the enumeration to satisfy condition (2) at rank $i$.

\end{proof}

\noindent {\bf Notation.} If $P$ is a normal logic program and ${\langle R_i, \langle IN_i, OUT_i\rangle \rangle}_{i=0}^{\infty}$ is  
a sequence of ground rule sets $R_i$ and partial interpretations $\langle IN_i,OUT_i \rangle$,
then $\Delta_{pro}^i$ denotes $\Delta_{pro}(P,\langle IN_{i}, OUT_{i}\rangle, R_i)$
and  $\Delta_{cho}^i$ denotes $\Delta_{cho}(P,\langle IN_{i}, OUT_{i}\rangle, R_i)$.

\begin{lemme}
\label{lem:Asperixcomp1}
Let $P$ be a normal logic program and $X$ be  an answer set of $P$.
Then there exists an \asperix. computation that converges to $X$.
\end{lemme}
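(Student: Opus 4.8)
The plan is to take an answer set $X$ of $P$ and explicitly construct an \asperix. computation converging to it, using the enumeration of generating rules provided by Lemma~\ref{lem:Asperixenumeration}. First I would fix such an enumeration ${\langle r_i \rangle}_{i \in [1..n]}$ of $GR_P(X)$ satisfying conditions (1) and (2) of that lemma. The idea is then to build the sequence ${\langle R_i, I_i \rangle}_{i=0}^{\infty}$ by firing the rules $r_1,\dots,r_n$ in this order: starting from $R_0 = \emptyset$, $I_0 = \partialinter{\emptyset}{\{\bot\}}$, at step $i$ (for $1 \le i \le n$) I set $R_i = R_{i-1} \cup \{r_i\}$ and update the partial interpretation by adding $\HEAD.(r_i)$ to $IN$ and, when $r_i$ is nonmonotonic, adding $\BODY.^-(r_i)$ to $OUT$; for $i > n$ the computation is stationary ($R_i = R_{i-1}$, $I_i = I_{i-1}$, the Stability case). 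By Theorem~\ref{the:AnswersetGR}, $X = \HEAD.(GR_P(X))$, so $IN_n = X$, hence $IN_\infty = X$.

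The bulk of the work is checking that this sequence really satisfies all the clauses of Definition~\ref{def:Asperixcomputation}. I would argue, by induction on $i$, that each $I_i = \partialinter{IN_i}{OUT_i}$ is a genuine partial interpretation, i.e.\ $IN_i \cap OUT_i = \emptyset$ and $\bot \notin IN_i$: since all heads of generating rules lie in $X$ and no generating rule has head $\bot$ (a violated constraint would put $\bot$ into $IN$, contradicting $X$ being an answer set), we get $IN_i \subseteq X$; and $OUT_i \subseteq \{\bot\} \cup \bigcup \BODY.^-(r_k)$, each such $\BODY.^-(r_k)$ being disjoint from $X$ by definition of $GR_P(X)$. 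For the Revision condition I must show that at each step $i \le n$ the rule $r_i$ is legitimately fireable: condition (1) of Lemma~\ref{lem:Asperixenumeration} gives $\BODY.^+(r_i) \subseteq \HEAD.(\{r_k \mid k<i\}) \subseteq IN_{i-1}$, so $r_i$ is supported; since $r_i \in GR_P(X)$ we have $\BODY.^-(r_i) \cap X = \emptyset$, and as $IN_{i-1} \subseteq X$ this gives $\BODY.^-(r_i) \cap IN_{i-1} = \emptyset$, so $r_i$ is not blocked — hence $r_i$ is applicable and belongs to $\Delta_{cho}^{i-1}$, and if moreover $\BODY.^-(r_i) \subseteq OUT_{i-1}$ it belongs to $\Delta_{pro}^{i-1}$. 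I also need $r_i \notin R_{i-1}$, which holds because the enumeration has no repetitions. The delicate point is the priority of propagation over choice: the Rule choice clause requires $\Delta_{pro}^{i-1} = \emptyset$. So when $r_i$ is fired as a choice I must verify no rule of $ground(P) \setminus R_{i-1}$ is supported and unblocked w.r.t.\ $I_{i-1}$; this is exactly where condition (2) of Lemma~\ref{lem:Asperixenumeration} is used, together with the fact that any rule supported and unblocked at step $i-1$ would have to be a generating rule (its positive body is in $IN_{i-1} \subseteq X$ and its negative body is in $OUT_{i-1}$, hence disjoint from $X$) not yet used.

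Finally, for the Convergence condition I need $\Delta_{cho}^{n} = \emptyset$: any applicable rule w.r.t.\ $I_n = \partialinter{X}{OUT_n}$ not in $R_n$ would be a generating rule outside $GR_P(X)$ — impossible. The main obstacle I anticipate is precisely the propagation-before-choice bookkeeping in the Revision step: showing that whenever the construction is forced to use $\Delta_{cho}$ the set $\Delta_{pro}$ is genuinely empty. This requires carefully translating condition (2) of Lemma~\ref{lem:Asperixenumeration} — which is phrased in terms of heads and negative bodies of earlier rules — into the statement that no \emph{unused} ground rule (generating or not) is simultaneously supported and unblocked at that stage, and this is the step where one must be most careful about the difference between "not blocked" and "unblocked" and about rules that are not generating rules at all.
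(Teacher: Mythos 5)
Your proposal is correct and follows essentially the same route as the paper's proof: the same explicit construction of the sequence from the enumeration of $GR_P(X)$ given by Lemma~\ref{lem:Asperixenumeration}, the same use of the inclusion $\Delta_{pro}^{i-1} \subseteq GR_P(X)$ together with condition (2) to justify that propagation is exhausted before each choice, and the same convergence argument. The only cosmetic difference is that you add $\BODY.^-(r_i)$ to $OUT$ also when $r_i$ is fired by propagation, but this is harmless since in that case $\BODY.^-(r_i) \subseteq OUT_{i-1}$ already.
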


\begin{proof} (of Lemma~\ref{lem:Asperixcomp1})
\label{proof:Asperixcomp1}
Let $P$ be a normal logic program and $X$ be  an answer set of $P$.
Then, there exists  an enumeration ${\langle r_i \rangle}_{i \in [1..n]}$ of $GR_P(X)$ that satisfies conditions (1) and (2) from Lemma \ref{lem:Asperixenumeration}.

Let  ${\langle R_i, \langle IN_i, OUT_i\rangle \rangle}_{i=0}^{\infty}$ be the sequence defined as follows.
\begin{itemize}
  \item $R_0 = \emptyset$, $IN_0 = \emptyset$ and $OUT_0 = \{\bot\}$
  \item $\forall i, 1\leq i\leq n$, $R_i = R_{i-1} \cup \{r_i\}$, $IN_i = IN_{i-1} \cup \{\HEAD.(r_i)\}$ and $OUT_i = OUT_{i-1}\cup \BODY.^-(r_i)$
  \item $\forall i > n$, $R_i = R_{i-1}$, $IN_i = IN_{i-1}$ and $OUT_i = OUT_{i-1}$
\end{itemize}
For all $i \in [1..n]$, we have:
\begin{enumerate}[label=(*\arabic*)]
\item $X =  \HEAD.(GR_P(X))$ (by Theorem \ref{the:AnswersetGR})
\item $IN_i = \bigcup_{j=1}^i \{\HEAD.(r_j)\}$ and $IN_{\infty} = \bigcup_{i=0}^{\infty}IN_i =  X$ (by (*1))
\item  $OUT_i = \bigcup_{j=1}^i \BODY.^-(r_j)$ and therefore $OUT_i \cap X = \emptyset$ (by Definition \ref{def:Generatingrules} of $GR_P(X)$)
\item $\Delta_{pro}(P,\langle IN_{i},OUT_{i} \rangle, R_{i}) \subseteq GR_P(X)$
\end{enumerate}
Property (*4) can be proved as follows.
By definition \ref{def:Delta_Cho_Pro}, $\Delta_{pro}^i = \{r \in  ground(P) \setminus R_{i} \mid \BODY.^+(r) \subseteq IN_{i} \mbox{ and } \BODY.^-(r) \subseteq OUT_{i}\}$. 
And by (*2) and (*3), $IN_{i} \subseteq X$ and $OUT_{i} \cap X = \emptyset$.
Thus $\Delta_{pro}^i \subseteq GR_P(X)$.

We are now able to prove that the sequence ${\langle R_i, \langle IN_i, OUT_i\rangle \rangle}_{i=0}^{\infty}$ is an \asperix. computation.

Let us first note that $\forall i,~\langle IN_i, OUT_i\rangle$ is a partial interpretation since $IN_i \cap OUT_i = \emptyset$ (by (*2) and (*3)).

Now we prove that Revision principle holds for each $i \geq 1$.
Let $i$ such that $1\leq i\leq n$, then $r_i$ is such that $\BODY.^+(r_i) \subseteq \HEAD.(\{r_k \mid k<i\}) = IN_{i-1}$. Two cases are possible.
First, if $\BODY.^-(r_i) \subseteq \BODY.^-(\{r_k \mid k<i\}) = OUT_{i-1}$, then $r_i \in \Delta_{pro}^{i-1}$ and Revision principle holds at rank $i$.
Second, if $\BODY.^-(r_i) \not\subseteq \BODY.^-(\{r_k \mid k<i\})$ then, by definition of enumeration 
${\langle r_i \rangle}_{i \in [1..n]}$, there is no rule $r_j$ with $j>i$ such that $\BODY.^+(r_j) \subseteq IN_{i-1}$ and $\BODY.^-(r_j) \subseteq OUT_{i-1}$. So $\Delta_{pro}^{i-1} \cap GR_P(X) = \emptyset$. 
And as $\Delta_{pro}^{i-1} \subseteq GR_P(X)$ (by (*4)), $\Delta_{pro}^{i-1} = \emptyset$.
Moreover, $r_i$ is a generating rule, thus $\BODY.^-(r_i) \cap X = \emptyset$ and $\BODY.^-(r_i) \cap IN_{i-1} = \emptyset$ (since $IN_{i-1} \subseteq X$).
Thereby $r_i \in \Delta_{cho}^{i-1}$ and Revision principle holds.
If $i > n$, Revision principle trivially  holds (Stability). 

At step $n$, we have $IN_{n} = \bigcup_{j=1}^n \{\HEAD.(r_j)\} = X$ and $R_{n} = \bigcup_{j=1}^n \{r_j\} = GR_P(X)$.
$\Delta_{cho}^{n+1} =  \{r \in  ground(P) \setminus R_{n} \mid \BODY.^+(r) \subseteq X \mbox{ and } \BODY.^-(r) \cap X = \emptyset\}$.
Thus $\Delta_{cho}^{n+1} = \emptyset$. 
Convergence principle holds and $IN_{\infty} = IN_{n} = X$.
\end{proof}

\begin{lemme}
\label{lem:Asperixcomp2}
Let $P$ be a normal logic program and ${\langle R_i, \langle IN_i, OUT_i\rangle \rangle}_{i=0}^{\infty}$ be an \asperix. computation for $P$.
Then,  $IN_{\infty}$ is an answer set of $P$.
\end{lemme}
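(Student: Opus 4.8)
The plan is to reduce the claim to Theorem~\ref{the:AnswersetGR}. Write $X = IN_\infty$, $R_\infty = \bigcup_{i\ge 0}R_i$ for the set of all rules fired during the computation, and $OUT_\infty = \bigcup_{i\ge 0}OUT_i$. It suffices to prove $GR_P(X) = R_\infty$: then $GR_P(X)$ inherits groundedness from $R_\infty$, and $\HEAD.(GR_P(X)) = \HEAD.(R_\infty) = X$, so Theorem~\ref{the:AnswersetGR} applies. I would first record two invariants. A straightforward induction on $i$, using that each Revision step is a Propagation, a Rule choice, or a Stability (the first two adding one rule together with its head, the last changing nothing), shows $IN_i = \HEAD.(R_i)$ for every $i$, whence $IN_\infty = \HEAD.(R_\infty)$. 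And since each $\langle IN_i, OUT_i\rangle$ is a partial interpretation while $(IN_i)$ and $(OUT_i)$ are non-decreasing, the ``take the larger index'' argument gives $IN_\infty \cap OUT_\infty = \emptyset$; in particular $OUT_\infty \cap X = \emptyset$ and $\langle X, OUT_\infty\rangle$ is itself a partial interpretation.

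Next I would prove the inclusion $R_\infty \subseteq GR_P(X)$ together with groundedness of $R_\infty$. Let $r_i$ be the rule fired at step $i$. It is supported w.r.t.\ $I_{i-1}$, so $\corpspos{r_i} \subseteq IN_{i-1} = \HEAD.(R_{i-1})$; reading the fired rules in the order of firing (skipping Stability steps), this is exactly an enumeration witnessing that $R_\infty$ is grounded, and it also shows $\corpspos{r_i}\subseteq X$. For the negative body, a rule fired by Propagation is definite or unblocked, hence $\corpsneg{r_i}\subseteq OUT_{i-1}$, while a rule fired by Rule choice has $\corpsneg{r_i}$ added to $OUT_i$; either way $\corpsneg{r_i}\subseteq OUT_\infty$, and $OUT_\infty\cap X=\emptyset$ yields $\corpsneg{r_i}\cap X=\emptyset$. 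Thus $r_i\in GR_P(X)$.

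The heart of the argument is the reverse inclusion $GR_P(X)\subseteq R_\infty$. Here I would use the Convergence requirement of Definition~\ref{def:Asperixcomputation}, together with the revision discipline — propagation is exhausted before any choice, and a Stability step is legitimate only once neither $\Delta_{pro}$ nor $\Delta_{cho}$ can provide a fresh rule — to establish that in the limit both $\Delta_{pro}(P,\langle X,OUT_\infty\rangle,R_\infty)=\emptyset$ and $\Delta_{cho}(P,\langle X,OUT_\infty\rangle,R_\infty)=\emptyset$. Granting this, take $r\in GR_P(X)$: then $\corpspos{r}\subseteq X$ and $\corpsneg{r}\cap X=\emptyset$, so $r$ is supported and not blocked w.r.t.\ $\langle X,OUT_\infty\rangle$, i.e.\ applicable. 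If $r\notin R_\infty$, then $r$ belongs to $\Delta_{pro}(P,\langle X,OUT_\infty\rangle,R_\infty)$ when $r$ is definite and to $\Delta_{cho}(P,\langle X,OUT_\infty\rangle,R_\infty)$ when $r$ is nonmonotonic, contradicting emptiness. Hence $r\in R_\infty$, so $GR_P(X)=R_\infty$ and, by Theorem~\ref{the:AnswersetGR}, $X$ is an answer set of $P$.

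The main obstacle is precisely the passage to the limit in the last paragraph: turning the Convergence condition — which only asserts that $\Delta_{cho}$ is empty at some index — together with the order of revisions into the statement that $\Delta_{pro}$ and $\Delta_{cho}$ are both empty ``at infinity''. This needs a careful monotonicity (and, for genuinely infinite computations, fairness) argument, and as a minor side point one must read Definition~\ref{def:Groundedrules} with a well-founded enumeration in the case where $R_\infty$ is infinite.
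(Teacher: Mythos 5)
Your overall strategy is exactly the paper's: prove $R_\infty\subseteq GR_P(IN_\infty)$ together with groundedness by tracking the monotone growth of $IN$ and $OUT$ and their disjointness, prove the reverse inclusion from the Convergence condition, and conclude via Theorem~\ref{the:AnswersetGR}. The forward inclusion and the groundedness argument in your second paragraph coincide with the paper's proof of $\forall i,\ R_i\subseteq GR_P(IN_i)$, and your invariant $IN_i=\HEAD.(R_i)$ is used there in the same way.

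The one step you leave open --- passing from ``$\Delta_{cho}$ is empty at some index'' to ``$\Delta_{pro}$ and $\Delta_{cho}$ are empty at infinity'' --- is precisely where the paper does its (very short) work, and it is resolved without any limit or fairness argument. Take the index $i$ supplied by Convergence. The paper reads $\Delta_{cho}^i=\emptyset$ as saying that no rule outside $R_i$ is supported and not blocked w.r.t.\ $\langle IN_i,OUT_i\rangle$, which gives $GR_P(IN_i)\subseteq R_i$ \emph{at stage $i$}, with no passage to the limit; combined with $R_i\subseteq GR_P(IN_i)$ this already yields $GR_P(IN_i)=R_i$ and hence, by Theorem~\ref{the:AnswersetGR}, that $IN_i$ is an answer set. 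The only thing left to observe --- which the paper leaves implicit and which closes your gap --- is that the computation is frozen from stage $i$ on: since $IN_i\cap OUT_i=\emptyset$, every rule in $\Delta_{pro}^i$ (supported and unblocked) is in particular supported and not blocked, so $\Delta_{pro}^i=\emptyset$ as well; then at step $i+1$ only Stability is available, and a trivial induction gives $R_j=R_i$ and $I_j=I_i$ for all $j\geq i$. Consequently $R_\infty=R_i$, $IN_\infty=IN_i$ and $OUT_\infty=OUT_i$, so your ``at infinity'' statement is literally the stage-$i$ statement, $R_\infty$ is finite (at most $i$ rules), and your side worry about well-founded enumerations of an infinite $R_\infty$ evaporates. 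In short: nothing in your argument fails, but the obstacle you flag as the heart of the matter is a two-line monotonicity remark once you work at the convergence index instead of at the limit.
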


\begin{proof} (of Lemma~\ref{lem:Asperixcomp2})
\label{proof:Asperixcomp2}
Let ${\langle R_i, \langle IN_i, OUT_i\rangle \rangle}_{i=0}^{\infty}$ be an \asperix. computation for $P$.

We first prove that $\forall i > 0,~\forall j \geq i-1,~R_{i}  \subseteq GR_P(IN_j)$.
For each rule $r_i$, $\BODY.^+(r_i) \subseteq IN_{i-1}$ and $IN$ set increases monotonically, thus $\BODY.^+(r_i) \subseteq IN_{j}, \forall j \geq i-1$.
If $r_i \in {\Delta}_{pro}^{i-1}$, then $\BODY.^-(r_i) \subseteq OUT_{i-1}$ and $OUT_{i-1} \cap IN_{i-1} = \emptyset$. 
Since $IN$ and $OUT$ sets grow monotonically with an empty intersection, $ \BODY.^-(r_i) \cap IN_j = \emptyset, \forall j \geq i-1$.
If $r_i \in {\Delta}_{cho}^{i-1}$, then $\BODY.^-(r_i) \cap IN_{i-1} = \emptyset$. 
And, since $OUT_i = OUT_{i-1} \cup \BODY.^-(r_i)$, we have $\forall j  \geq i,  \BODY.^-(r_i) \subseteq OUT_{j}$,
and thus, with the same reasonning as above ($r_i \in {\Delta}_{pro}^{i-1}$), $\BODY.^-(r_i) \cap IN_j = \emptyset, \forall j \geq i-1$.

 $R_i = \bigcup_{k=1}^{i}\{r_k\}$ and, since $\forall j \geq k-1, ~r_k \in  GR_P(IN_{j})$, $r_k \in  GR_P(IN_{i})$.
 Thus $R_i \subseteq GR_P(IN_i)$.

By Convergence principle we have $\exists i,~{\Delta}_{cho}^{i} =  \{r \in  ground(P) \setminus R_{i} \mid \BODY.^+(r) \subseteq IN_{i} \mbox{ and } \BODY.^-(r) \cap IN_{i} = \emptyset\} = \emptyset$, 
then $GR_P(IN_{i}) \subseteq R_{i}$.
Since $\forall i, R_i \subseteq GR_P(IN_{i})$, $GR_P(IN_{i}) = R_{i}$.
And  $IN_{i} = \HEAD.(R_{i})$ (by definition of an \asperix. computation), 
thus $IN_i = head(GR_P(IN_{i}))$.

Moreover, for all $i > 0$, $body^+(r_i) \subseteq IN_{i-1}$, thus $R_i = \bigcup_{k=1}^i\{r_k\}$ is grounded and, since $R_i = GR_P(IN_{i})$, $GR_P(IN_{i})$ is grounded.
Finally, by Theorem \ref{the:AnswersetGR}, $IN_{i}$ is an answer set of P.

\end{proof}

\begin{proof} (of Theorem \ref{the:Aspcomp})
Lemmas \ref{lem:Asperixcomp1} and \ref{lem:Asperixcomp2} prove each one direction of the equivalence.
\end{proof}

\subsection{Proof of Theorem \ref{the:mbtAsperixcomp}}
\label{subsec:proofmbtAspcomp}

Lemmas \ref{lem:mbtAsperixcomp1} and \ref{lem:mbtAsperixcomp2} establish completeness and correctness.

\begin{lemme}
\label{lem:mbtAsperixcomp1}
Let $P$ be a normal logic program and $X$ be an answer set for $P$.
Then there exists  a mbt \asperix. computation for $P$ that converges to X.
\end{lemme}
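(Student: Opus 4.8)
The plan is to \emph{embed} the ordinary \asperix.\ computation provided by Lemma~\ref{lem:Asperixcomp1} into a mbt \asperix.\ computation that never exercises the two new inference options (Mbt-propagation and Rule exclusion) and that keeps both the must-be-true component and the learned-constraint set empty throughout. Intuitively the mbt refinements can only prune successful branches or restrict the oracle, so a branch that succeeds in the plain setting still succeeds verbatim in the richer one; the proof makes this precise.

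Concretely, I would reuse the construction from the proof of Lemma~\ref{lem:Asperixcomp1}: fix an enumeration ${\langle r_i \rangle}_{i\in[1..n]}$ of $GR_P(X)$ satisfying conditions~(1) and~(2) of Lemma~\ref{lem:Asperixenumeration}, take the same $R_i$, $IN_i$ and $OUT_i$ as there, and additionally set $K_i=\emptyset$ and $MBT_i=\emptyset$ for every $i$, so that $I_i=\langle IN_i,\emptyset,OUT_i\rangle$. The verification rests on a handful of easy identities that hold because $MBT\equiv\emptyset$ and $K\equiv\emptyset$: (a) $P\cup K_{i-1}=P$; (b) the set $\Delta_{pro}(P,\langle IN,\emptyset,OUT\rangle,R)$ of Definition~\ref{def:Delta_cho_pro_mbt} is exactly the $\Delta_{pro}$ of Definition~\ref{def:Delta_Cho_Pro}; (c) $\Delta_{pro\_mbt}(P,\langle IN,\emptyset,OUT\rangle,R)=\emptyset$, since its defining conditions $\BODY.^+(r)\subseteq IN$ and $\BODY.^+(r)\not\subseteq IN$ are contradictory; and (d) whenever $\Delta_{pro}=\emptyset$ --- the only situation in which a choice is made --- the set $\Delta_{cho\_mbt}(P,\langle IN,\emptyset,OUT\rangle,R)$ contains only nonmonotonic rules and hence coincides with $\Delta_{cho}$, because a supported definite rule lying outside $R$ would already belong to $\Delta_{pro}$. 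With these at hand I would check the clauses of Definition~\ref{def:mbtComputation} in turn: the initial state matches; each Propagation step of the source computation maps to a Propagation step here (by~(b)); each Rule choice step maps to a Rule choice step, its strengthened precondition --- which additionally demands $\Delta_{pro\_mbt}(P\cup K_{i-1},I_{i-1},R_{i-1})=\emptyset$ and is phrased over $P\cup K_{i-1}$ --- being satisfied by~(a) and~(c), while membership in $\Delta_{cho\_mbt}$ follows from~(d); Stability maps to Stability; and each $I_i$ is a legitimate mbt partial interpretation because $IN_i\cap OUT_i=\emptyset$. Convergence is immediate: at step $n$ we have $R_n=GR_P(X)$ and $IN_n=X$, whence $\Delta_{cho\_mbt}(P\cup K_n,I_n,R_n)=\{r\in ground(P)\setminus R_n\mid \BODY.^+(r)\subseteq X,\ \BODY.^-(r)\cap X=\emptyset\}=GR_P(X)\setminus R_n=\emptyset$ and $MBT_n=\emptyset$; and trivially $IN_\infty=X$.

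I do not expect a genuine obstacle: the argument is essentially the bookkeeping of the proof of Lemma~\ref{lem:Asperixcomp1} pushed through one further layer of definitions. The two points that call for a little care are (i) matching the Rule-choice precondition of Definition~\ref{def:mbtComputation} --- discharged for free by $K\equiv MBT\equiv\emptyset$ --- and (ii) making sure that identity~(d) is invoked only at choice steps, where $\Delta_{pro}=\emptyset$ genuinely holds, whereas for the Convergence clause one reads $\Delta_{cho\_mbt}$ directly off $R_n=GR_P(X)$ and Definition~\ref{def:Generatingrules} rather than through~(d). Everything else is routine.
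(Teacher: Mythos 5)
Your proof is correct and takes essentially the same route as the paper: the paper's proof invokes Theorem~\ref{the:Aspcomp} to obtain an \asperix.\ computation converging to $X$ and then pads it with $K_i = MBT_i = \emptyset$, asserting without further detail that the result is ``clearly'' a mbt \asperix.\ computation in which Rule exclusion and Mbt-propagation are never used. Your identities (a)--(d) and the direct check of the Convergence clause simply spell out the bookkeeping that the paper leaves implicit.
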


\begin{proof} (of Lemma~\ref{lem:mbtAsperixcomp1})
\label{proof:mbtAsperixcomp1}
Let $P$ be a normal logic program and $X$ an answer set for $P$.
By Theorem \ref{the:Aspcomp}, 
there exists an \asperix. computation ${\langle R_i, \langle IN_i, OUT_i\rangle \rangle}_{i=0}^{\infty}$ with $IN_{\infty} = X$.
Let $C = {\langle K_i, R_i, \langle IN_i, MBT_i, OUT_i\rangle \rangle}_{i=0}^{\infty}$ with $ K_i = MBT_i = \emptyset, ~\forall i \geq 0$.
$C$ is clearly a mbt \asperix. computation for $P$ where ``Rule exclusion'' is never used and thus ``Mbt-propagation'' is not used either.
\end{proof}

\begin{lemme}
\label{lem:mbtAsperixcomp2}
Let $P$ be a normal logic program and ${\langle K_i, R_i, \langle IN_i, MBT_i, OUT_i\rangle \rangle}_{i=0}^{\infty}$  be a mbt \asperix. computation for $P$.
Then $IN_{\infty}$ is an answer set of $P$.

\end{lemme}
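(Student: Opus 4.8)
The plan is to prove the correctness direction of Theorem~\ref{the:mbtAsperixcomp}: the limit $IN_{\infty}$ of any mbt \asperix. computation for $P$ is an answer set of $P$. I would follow the same strategy as in the proof of Lemma~\ref{lem:Asperixcomp2}: show that $R_{\infty}=\bigcup_{i}R_{i}$ coincides with $GR_P(IN_{\infty})$, that this set of rules is grounded, and then invoke Theorem~\ref{the:AnswersetGR}. The extra work compared with the propositional-style case comes only from the richer step repertoire of Definition~\ref{def:mbtComputation} (Mbt-propagation, Rule exclusion) and from the auxiliary sets $MBT_{i}$ and $K_{i}$, which must be shown not to interfere.

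First I would analyse the computation after convergence. Fix an index $i^{*}$ witnessing the Convergence condition, so that $\Delta_{cho\_mbt}(P\cup K_{i^{*}},I_{i^{*}},R_{i^{*}})=\emptyset$ and $MBT_{i^{*}}=\emptyset$. From $MBT_{i^{*}}=\emptyset$ one gets $\Delta_{pro\_mbt}(P,I_{i^{*}},R_{i^{*}})=\emptyset$ at once, since its defining requirements $\BODY.^+(r)\subseteq IN_{i^{*}}\cup MBT_{i^{*}}$ and $\BODY.^+(r)\not\subseteq IN_{i^{*}}$ become contradictory. Moreover any $r\in\Delta_{pro}(P,I_{i^{*}},R_{i^{*}})$ has $\BODY.^-(r)\subseteq OUT_{i^{*}}$, hence $\BODY.^-(r)\cap(IN_{i^{*}}\cup MBT_{i^{*}})=\emptyset$ by disjointness of the mbt partial interpretation, so $r\in\Delta_{cho\_mbt}(P,I_{i^{*}},R_{i^{*}})\subseteq\Delta_{cho\_mbt}(P\cup K_{i^{*}},I_{i^{*}},R_{i^{*}})=\emptyset$; thus $\Delta_{pro}(P,I_{i^{*}},R_{i^{*}})=\Delta_{cho\_mbt}(P,I_{i^{*}},R_{i^{*}})=\emptyset$. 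Consequently none of Propagation, Mbt-propagation, Rule choice, Rule exclusion is enabled from step $i^{*}$ onward, so only Stability steps occur; in particular $IN_{\infty}=IN_{i^{*}}$, $OUT_{\infty}=OUT_{i^{*}}$, $R_{\infty}=R_{i^{*}}$, $MBT_{\infty}=\emptyset$, all finite.

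Next I would establish the two inclusions. For $R_{\infty}\subseteq GR_P(IN_{\infty})$: a rule is added to $R$ only by a Propagation or a Rule choice step, and such an $r_{k}$ has $\BODY.^+(r_{k})\subseteq IN_{k-1}\subseteq IN_{\infty}$ and its negative body placed into $OUT$ (into $OUT_{k-1}$ for propagation, into $OUT_{k}$ for a choice), which stays disjoint from $IN$, so $\BODY.^-(r_{k})\cap IN_{\infty}=\emptyset$; hence $r_{k}\in GR_P(IN_{\infty})$ by Definition~\ref{def:Generatingrules}. Since Mbt-propagation and Rule exclusion touch neither $IN$ nor $R$, and every inserted rule contributes exactly its head to $IN$, one also gets $IN_{i}=\HEAD.(R_{i})$ for all $i$. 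For the reverse inclusion, take $r\in GR_P(IN_{\infty})$; then $\BODY.^+(r)\subseteq IN_{i^{*}}$ and, using $MBT_{i^{*}}=\emptyset$, $\BODY.^-(r)\cap(IN_{i^{*}}\cup MBT_{i^{*}})=\emptyset$, so either $r\in R_{i^{*}}$ or $r\in\Delta_{cho\_mbt}(P,I_{i^{*}},R_{i^{*}})\subseteq\Delta_{cho\_mbt}(P\cup K_{i^{*}},I_{i^{*}},R_{i^{*}})=\emptyset$; hence $r\in R_{i^{*}}=R_{\infty}$. Therefore $R_{\infty}=GR_P(IN_{\infty})$ and $IN_{\infty}=\HEAD.(R_{\infty})=\HEAD.(GR_P(IN_{\infty}))$.

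Finally I would check groundedness: enumerating $R_{\infty}$ in the order its rules were inserted, and recalling that $IN$ changes only at such insertion steps, the positive body of each inserted rule is already contained in $\HEAD.$ of the rules inserted before it, which is exactly Definition~\ref{def:Groundedrules}. Theorem~\ref{the:AnswersetGR} then yields that $IN_{\infty}$ is an answer set of $P$. The step I expect to be delicate is the post-convergence bookkeeping of the second paragraph — in particular forcing $\Delta_{pro}$, $\Delta_{pro\_mbt}$ and $\Delta_{cho\_mbt}$ to be simultaneously empty and correctly routing the equality through the containment $\Delta_{cho\_mbt}(P,\cdot,\cdot)\subseteq\Delta_{cho\_mbt}(P\cup K_{i^{*}},\cdot,\cdot)$ together with $MBT_{i^{*}}=\emptyset$; a cleaner alternative would be to extract from the mbt computation the subsequence of its Propagation and Rule choice steps, verify that, ignoring $MBT_{i}$ and $K_{i}$, it is a plain \asperix. computation in the sense of Definition~\ref{def:Asperixcomputation} with the same limit $IN_{\infty}$, and then simply invoke Lemma~\ref{lem:Asperixcomp2}.
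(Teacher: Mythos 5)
Your argument is correct, but it is not the route the paper takes. The paper's proof of Lemma~\ref{lem:mbtAsperixcomp2} is a two-line reduction: it observes that projecting the mbt computation onto its $\langle R_i,\langle IN_i,OUT_i\rangle\rangle$ components (i.e.\ discarding $K_i$ and $MBT_i$, so that Mbt-propagation and Rule exclusion become Stability steps) yields a plain \asperix. computation in the sense of Definition~\ref{def:Asperixcomputation}, and then invokes Theorem~\ref{the:Aspcomp}. That is exactly the ``cleaner alternative'' you sketch in your last sentence but do not carry out; your main argument instead redoes the generating-rules analysis of Lemma~\ref{lem:Asperixcomp2} from scratch ($R_\infty=GR_P(IN_\infty)$, $IN_\infty=\HEAD.(GR_P(IN_\infty))$, groundedness, then Theorem~\ref{the:AnswersetGR}). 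The trade-off: the paper's reduction is shorter and reuses Lemma~\ref{lem:Asperixcomp2} wholesale, but it hides the one point that actually needs checking --- that the mbt Convergence condition ($\Delta_{cho\_mbt}(P\cup K_{i^*},I_{i^*},R_{i^*})=\emptyset$ together with $MBT_{i^*}=\emptyset$) implies the plain Convergence condition $\Delta_{cho}(P,\langle IN_{i^*},OUT_{i^*}\rangle,R_{i^*})=\emptyset$; the paper dismisses this as ``trivial''. Your direct proof makes precisely this bookkeeping explicit (the chain $\Delta_{pro}\subseteq\Delta_{cho\_mbt}(P,\cdot,\cdot)\subseteq\Delta_{cho\_mbt}(P\cup K_{i^*},\cdot,\cdot)=\emptyset$ and the collapse of $\Delta_{cho}$ to $\Delta_{cho\_mbt}$ when $MBT_{i^*}=\emptyset$), at the cost of duplicating the monotonicity and groundedness arguments already present in the proof of Lemma~\ref{lem:Asperixcomp2}. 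Both are sound; had you executed the subsequence-extraction alternative, you would have landed on the paper's proof almost verbatim.
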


\begin{proof} (of Lemma~\ref{lem:mbtAsperixcomp2})
\label{proof:mbtAsperixcomp2}
Let ${\langle K_i,  R_i, \langle  IN_i, MBT_i, OUT_i\rangle \rangle}_{i=0}^{\infty}$  a mbt \asperix. computation for $P$.
Then $C = {\langle R_i, \langle IN_i, OUT_i\rangle \rangle}_{i=0}^{\infty}$  is an \asperix. computation for $P$:
it satisfies Revision principles of  an \asperix. computation and
it trivially satisfies Convergence too.
By Theorem \ref{the:Aspcomp}, $C$ converges to an answer set $IN_{\infty}$.
\end{proof}

\begin{proof} (of Theorem \ref{the:mbtAsperixcomp})
Lemmas \ref{lem:mbtAsperixcomp1} and \ref{lem:mbtAsperixcomp2} prove each one direction of the equivalence.
\end{proof}

%

\end{document}